\theoremstyle{definition}
\newtheorem{theorem}{Theorem}[section]
\newtheorem{lemma}[theorem]{Lemma}
\newtheorem{corollary}[theorem]{Corollary}
\newcommand{\true}[0]{\oset{*}{\boldsymbol \lambda}}
\newcommand{\ti}[0]{\oset{*}{\lambda_i}}
\newcommand{\ptrue}[0]{\oset{*}{\boldsymbol p}}
\newcommand{\pti}[0]{\oset{*}{p_i}}
\newcommand{\ptone}[0]{\oset{*}{p_1}}
\newcommand{\ptI}[0]{\oset{*}{p_I}}
\DeclareFontFamily{OT1}{pzc}{}
\DeclareFontShape{OT1}{pzc}{m}{it}{<-> s * [1.200] pzcmi7t}{}
\DeclareMathAlphabet{\mathpzc}{OT1}{pzc}{m}{it}
\newcommand{\oset}[2]{%
  {\mathop{#2}\limits^{\vbox to -.1\ex@{\kern-\tw@\ex@
   \hbox{\scriptsize #1}\vss}}}}
\newenvironment{keywords}{%
\begin{changemargin}{1cm}{1cm}
\noindent{\bf Keywords:}
}
{\end{changemargin} }
\newenvironment{changemargin}[2]{%
\begin{list}{}{%
\setlength{\topsep}{0pt}%
\setlength{\leftmargin}{#1}%
\setlength{\rightmargin}{#2}%
\setlength{\listparindent}{\parindent}%
\setlength{\itemindent}{\parindent}%
\setlength{\parsep}{\parskip}%
}%
\item[]}{\end{list}}
\begin{document}

\title{Testing the fit of relational models}

\author{Anna Klimova \\
{\small{National Center for Tumor Diseases (NCT), Partner Site Dresden, and}}\\
{\small{Institute for  Medical Informatics and Biometry,}}\\ 
{\small{Technical University, Dresden, Germany} }\\
{\small \texttt{anna.klimova@nct-dresden.de}}\\
{}\\
\and 
Tam\'{a}s Rudas \\
{\small{Department of Statistics, E\"{o}tv\"{o}s Lor\'{a}nd University, Budapest, Hungary}}\\
{\small \texttt{trudas@elte.hu}}\\
}

 \date{}
   
\maketitle

\begin{abstract}
\noindent Relational models generalize log-linear models to arbitrary discrete sample spaces by specifying effects associated with any subsets of  their cells. A relational model may include an overall effect, pertaining to every cell after a reparameterization, and in this case, the properties of the maximum likelihood estimates (MLEs) are analogous to those computed under traditional log-linear models, and the goodness-of-fit tests are also the same. If an overall effect is not present in any reparameterization, the properties of the MLEs are considerably different, and the Poisson and multinomial MLEs are not equivalent. In the Poisson case, if the overall effect is not present, the observed total is not always preserved by the MLE,  and thus, the likelihood ratio statistic is not identical with twice the Kullback-Leibler divergence. However, as demonstrated, its general form may be obtained from the Bregman divergence. The asymptotic equivalence of the Pearson chi-squared and likelihood ratio statistics holds, but the generality considered here requires extended proofs. 
\end{abstract}

\begin{keywords}
Bregman divergence, contingency table, goodness-of-fit test, likelihood ratio statistic, log-linear model, multinomial sampling, Pearson chi-squared statistic, Poisson sampling, relational model
\end{keywords}

\section{Introduction}\label{intro}

Relational models \citep*{KRD11} are a variant of the log-linear model, which may be applied to discrete sample spaces with or without a Cartesian product structure. These sample spaces still may be  multivariate, being  subsets of Cartesian products of the ranges of categorical variables. The general form of a relational model is 
$$
\log \boldsymbol{\delta} = \mathbf{A}' \boldsymbol{\theta},
$$
where $\boldsymbol{\delta}$ are parameters associated with the cells $i$ of the sample space $\mathcal{I}$. They can be probabilities $(\boldsymbol{p})$, when the observational procedure works with a pre-specified sample size, or may be intensities $(\boldsymbol{\lambda})$, when such a pre-specified sample size does not exist and the sample has a Poisson distribution. In the case of traditional log-linear models, such a distinction is not needed, but it is necessary in the generality considered here. 

The design matrix $\mathbf{A}$  is a $0-1$ full row rank matrix, with at least one $1$ in every column, and $\boldsymbol{\theta}$ are the model parameters. In traditional log-linear models, the rows of $\mathbf{A}$ are indicators of cylinder sets of marginals of $\mathcal{I}$, and the parameters of the model are associated with these cylinder sets. In the present case, $\mathbf{A}$ may have any structure and the parameters may be associated with arbitrary subsets of $\mathcal{I}$. Each model parameter $\boldsymbol{\theta}$ quantifies a linear effect on the logarithms of the parameters  $\boldsymbol{\delta}(i)$ of the cells $i \in \mathcal{I}$, which belong to its subset. These subsets may also be seen as defining the model, because the rows of $\mathbf{A}$ are their indicators.

Earlier work on relational models includes  \cite{Anna} and \cite{KRipf1}, where a generalized iterative scaling procedure for computing maximum likelihood estimates under relational models was proposed and its convergence was proved. The algorithm was implemented in \cite{gIPFpackage}. Existence of the MLEs, including the cases when there are some zeros in the observed data, was studied in \cite{KRextended}. A unique MLE to such data was shown to always exist in  the closure of the original model with respect to the Bregman information divergence. The same variant of iterative scaling may be used to compute the MLE whether it is in the original model or in its closure. The relational model class was applied by \cite{KRbm} to study trends in social mobility. {\cite{KRipf1} contains further examples when relational models in the generality considered here need to be applied, including Aitchison-Silvey independence.}

An important characteristic  of relational models is that the existence of a common effect present in all cells is not assumed. When the row space of the design matrix contains a vector of $1$'s, the model is said to contain an overall effect and a straightforward reparameterization may be applied to explicitly show it.  When an overall effect is not present, relational models assuming multinomial or Poisson sampling turn out to be very different from each other. Maximum likelihood estimates based on a data set, when the model is applied to probabilities or to intensities, are not equivalent. For example, the MLEs under multinomial sampling do not preserve the observed sums of the subsets defining the model, are, instead,  proportional to them, see Example 4.2 in \cite{KRD11}. MLEs for Poisson samples, on the other hand, do not preserve the observed total. This fact makes some of the standard testing techniques invalid. 

A non-technical overview of the reasons why relational models need to be considered to solve many real statistical problems is given in Section \ref{Relevance}. It is illustrated, why the incomplete Cartesian product structure of the sample space and the lack of an overall effect are important in modeling and how these characteristics are related.

Section \ref{Estimation} of this paper summarizes the main properties of MLEs for relational models and considers the problem of testing model fit. It is shown that the likelihood ratio statistic has a form different from the one under standard log-linear models. This general form is derived from the Bregman divergence \citep{Bregman}, which was used in \cite{KRipf1} to prove the convergence of the general variant of Iterative Proportional Fitting which yields maximum likelihood estimates under relational models. To refer to this, the likelihood ratio statistic will be called the Bregman statistic in this context.

Section \ref{AsympNormSection} gives a proof of the asymptotic normality of the maximum likelihood estimates in the generality considered here. This is used in the next section to prove results about the asymptotic behaviour of test statistics. Of course, as is usual, the asymptotic normality is obtained after appropriate scaling. The scaling factor in the case of probabilities, is the square root of the sample size, and in the case of intensities, inversely proportional to the square root of the true parameter. 

Section \ref{TestStatSection} proves that when the relational model holds, the asymptotic distributions of the Pearson and of  the Bregman statistics are chi-squared, and the number of degrees of freedom is derived, too. Finally, Section \ref{SimulationsSection} illustrates the asymptotics with Monte Carlo simulations and sheds some light on the issue of the speed of convergence to the asymptotic distribution. Proofs not included in the paper are provided in the Appendix. 

In the remainder of this section, the relationship between the existing literature and  the results presented in the paper is discussed.
When the overall effect is present, the relational model is defined by fixed values of homogeneous odds ratios \citep{KRD11}, although these odds ratios may be more complex, than the ones considered in the literature so far. This implies that, if a vector of intensities is in the model, then a multiple of this vector  is in the model too, leading to the equivalence of the likelihood analysis for Poisson and multinomial samples. A setup implying the presence of the overall effect  was considered by \cite{LangPoissMult}. Among other results, he proved asymptotic normality of the MLEs under various sampling schemes and derived their covariance matrices. The asymptotics in the multinomial case were computed with respect to the sample size approaching infinity, and in the Poisson case under the assumption that the cell parameters go to infinity at the same rate. Lang did not discuss specifically the asymptotic behaviour of the test statistics, but briefly mentioned that the known equivalence between goodness-of-fit statistics holds.

When the overall effect is not present, the relational model is defined by the fixed values of at least one non-homogeneous and some homogeneous odds ratios. Therefore, the scale invariance described above does not hold, and  when a probability vector is in the model, the corresponding frequency vector is not in the model. This implies that the asymptotic setup based on the equal rates of convergence, considered by \cite{LangPoissMult}, does not apply. Instead, the asymptotics considered here will  assume that even the smallest of the intensities converges to infinity, but the vectors of intensities remain in the model, for the Poisson case. For the multinomial case, a fixed probability vector, and the sample size converging to infinity is the asymptotic setup used.

The asymptotic distributions of the MLEs under relational models with the overall effect can be determined from the results of Lang. However, the present paper also derives the asymptotic distributions  without relying on the presence of the overall effect. As  shown in the Appendix, the asymptotic covariance matrix obtained in this paper simplifies for the case with the overall effect, and then coincides with the result of Lang, if the latter is applied to a multinomial or Poisson distribution.

Many of the proofs given are modeled upon ideas presented in  \cite{AitchSilvey60}. However, the discussion in that paper is quite heuristic, while we give rigorous proofs. Further, as opposed to this paper, they do not consider the Bregman statistic and do not derive the number of degrees of freedom associated with the asymptotic distributions of the test statistics.

\section{On the scope and relevance of this work}\label{Relevance}

In this section, the main motivation for and possible uses of the results of this paper are discussed in a non-technical way.

Traditionally in multivariate statistics, the sample space is the Cartesian product of the ranges of the variables of interest and the relevant probability space is identified with this sample space. When the variables are continuous, it is obvious that not all possible combinations of their values may be observed, but in the discrete case, cells with no observations call for caution. There is a somewhat informal theory of handling empty cells, which distinguishes between structural and sampling (or observed or random) zero frequencies \citep[cf.][]{Mohri}. The standard procedure in applied work  is to reduce the number if degrees of freedom in chi-squared based testing by the number of structural zeros (see \cite{NemethBSM} for an implementation of this procedure). If possible, however, one should avoid models which associate positive probabilities with cells which do not exist (see \cite{KRD11} for comments related to quasi models and relational models). Another related stream of work applies zero-inflated models to deal with the issue of structural or sampling zeros \citep*[cf.][]{HeZeros}.

Similar setups appear in such areas of real statistical applications, as text processing \citep*[cf.][]{Mccallum2000, Mohri},  data mining of transactional data \citep*[cf.][]{WuChenHanMeasures}, computer tomography \citep[cf.][]{OSullivan}, and market basket analysis \citep{Giudici}, where the research focuses on studying associations between some kind of binary features, or attributes. A number of features  can be present or absent in an object in various combinations, but the no-feature-present combination is not possible. A hypothetical example is given by \cite{AitchSilvey60}, where a model of independence between three features was proposed for this context. Another important area where such designs may occur is the study of dose limiting toxicity from a combination of drugs. For example, in clinical trials in oncology  \citep*[cf.][]{LeTourneauPhase1}, all patients get some treatment, so there is no observation for untreated patients, but also not all combinations of treatments are applied. In general, this is also the case in all observational studies using incomplete factorial designs. In the next Section, a variant of independence of Aitchison and Silvey  is applied to such a study.
 
 Another relevant example is the calf pneumonia problem described by \cite{Agresti2013}. At the first stage of the experiment, a group of calves was exposed to a pneumonia infection. At the second stage, In order to see whether the primary infection had an immunizing effect, the calves who got the primary infection were exposed to the infection again. So, a suitable model to test is the independence of the two infections. Because having the secondary infection without having the primary one is logically impossible, it is not the data collection design, rather the reality has an incomplete Cartesian product structure. These considerations suggest to extend the traditional structural or observed zero theory to a much larger number of possibilities, including empty cell logically, or empty cell not logically but in a particular population where the data come from, or empty cell because of sampling, or cell not observed during the data collection process which does not exist logically, or does not exist in the population, or does exist in the population, or its existence is unknown in the population. Many of these require modeling probability distributions on incomplete Cartesian products.

\section{Estimation and testing in relational models}\label{Estimation}

As a simple example where relational models may be relevant, consider the bait study for swimming crabs by \cite*{Kawamura1995}. In this study, three types of baits were used: fish alone, sugarcane alone, fish and sugarcane together. The numbers of crabs caught in each trap were $11$, $2$ and $36$, respectively. The study design is shown in Table 1.
 

\begin{table}
\begin{minipage}[t]{0.45\textwidth}
\centering
\begin{tabular}[t]{ccc}
 &
\multicolumn{2}{c}{ \mbox{Sugarcane }}\\
\cline{2-3} \\ [-6pt]
\mbox{ Fish}& \mbox{Yes}& \mbox{No }\\
\hline \\ [-6pt]
\mbox{ Yes }	& $\lambda_{11}$	& $\lambda_{10}$\\
\mbox{ No } 	& $\lambda_{01}$  &   -  \\[6pt]
\end{tabular}
\caption{Study design for swimming crabs}
\label{Tab1}
\end{minipage} \qquad
\begin{minipage}[t]{0.45\textwidth}
\centering
\begin{tabular}[t]{ccc}
 &
\multicolumn{2}{c}{ \mbox{ Sugarcane }}\\ 
\cline{2-3}\\ [-6pt]
\mbox{ Fish }& \mbox{Yes}& \mbox{No }\\
\hline \\ [-6pt]
\mbox{ Yes }	& $\lambda_{10} \lambda_{01}$	& $\lambda_{10}$\\
\mbox{ No } 	& $\lambda_{01}$  &   -  \\ [6pt]
\end{tabular}
\caption{A variant of independence for swimming crabs}
\label{Tab2}
\end{minipage}

\end{table}

As the number of crabs was not decided in advance, three Poisson random variables may be used to model the number of crabs caught in the traps. The sample space is a proper subset of the $2 \times 2$ contingency table formed by the ranges (yes or no) of the variables indicating the presence or lack of fish and sugarcane as bait. Although a routine question in the statistical analysis of bivariate data, the hypothesis of independent effects of fish and sugarcane cannot be tested here. Traditional independence is only meaningful if one  assumes that crabs would also enter a trap with no bait, and then independence would be proportionality in the resulting complete $2 \times 2$ table. If this assumption is not made, that is, one believes  that no crab would enter an empty trap, the no-fish and no-sugarcane cell becomes nonexistent, and independence, which requires a complete $2 \times 2$ table, makes no sense in this case.
If the assumption is made, the empty cell becomes an unobserved cell, and may always be filled with a number, such that independence holds, irrespective of the observed data. Thus, independence cannot be tested in this case. 

A possible relational model which is a variant of independence (in fact, a variant of the Aitchison-Silvey independence, see \cite{AitchSilvey60}) assumes that
$$
\log \left(
\begin{array}{c} 
\lambda_{10} \\
\lambda_{01} \\
\lambda_{11} \\
\end{array}
\right)
=
\left(
\begin{array}{c c}
1 & 0 \\
0 & 1 \\
1 & 1 \\
\end{array}
\right)
\left(
\begin{array} {c}
\theta_1 \\
\theta_2 \\
\end{array}
\right) .
$$

This model may be written in the form shown in Table \ref{Tab2}, and is a relational model which is a kind of independence and may be defined by setting the  
non-homogeneous odds ratio  $\,{\lambda_{10} \lambda_{01}}/{\lambda_{11}}\,\, $ to $1$ \citep{KRD11}.
If one included an overall effect present in all $3$ cells,  the model would become non-restrictive (sometimes called saturated).

Methods of obtaining MLEs for relational model were discussed in \cite{KRipf1} and an R-function \citep{gIPFpackage} is also available.  The MLEs for the crab data are $11.94$, $2.94$, and $35.06$. The observed subset sums $11+36=47$ and $2+36=38$ are preserved by the MLE: $11.94+35.06=47$ and $2.94+35.06=38$. However, somewhat surprisingly, the observed total is not preserved: $11+2+36 \neq 11.94+2.94+35.06$.

A summary of the properties of the MLEs under relational models is given in Table \ref{ModelTypes}.
When the overall effect is present, the subset sums and total are preserved by the MLEs for both probabilities and intensities. When the overall effect is not present, for probabilities, the total is preserved by the MLEs, but the subset sums are not equal but rather proportional to the observed subset sums, and for intensities, the subset sums are preserved, but the total is not preserved.

\begin{table}[ht]
\centering
\vspace{2mm}
\setlength{\extrarowheight}{5pt}
{
\begin{tabular}{m{32mm}ccc}
&
\multicolumn{1}{c}{{Models with the overall effect}} &
\multicolumn{2}{c}{ {Models without the overall effect}}\\
\cmidrule(lr){2-2} \cmidrule(lr){3-4}
 & \multicolumn{1}{c}{Probabilities $\&$ Intensities} & \multicolumn{1}{c}{ Probabilities} & \multicolumn{1}{c}{ Intensities}\\
\hline
{ Subset sums of the MLE vs observed subset sums }&  \multicolumn{1}{c}{Equal} 	& \multicolumn{1}{c}{{Proportional}} 	& \multicolumn{1}{c}{Equal}\\
\hline
{ Total of the MLE vs observed total }&  \multicolumn{1}{c}{Equal} 	& \multicolumn{1}{c}{{Equal}} 	& \multicolumn{1}{c}{Not equal}\\ 
\end{tabular}}
\vspace{3mm}
\caption{Relational models and properties of the MLE. Adapted from  \protect\cite{KRipf1}.}
\label{ModelTypes}
\end{table}

To subject the intuitively good fit seen in the crabs data to a formal test, one may compute the Pearson chi-squared statistic, and the value of $0.4$ is obtained. But it is entirely unclear, whether the asymptotic chi-squared reference distribution may be used, given that the total was not preserved. And even if the reference distribution may be used, the appropriate  number of degrees of freedom is not known. As far as the likelihood ratio statistic is concerned, one obtains:
\begin{align*}
&2 (\lambda_{10} \cdot \log ({\lambda_{10}}/{\hat{\lambda}_{10}})  + \lambda_{01} \cdot \log ({\lambda_{01}}/{\hat{\lambda}_{01}})+ \lambda_{11} \cdot \log ({\lambda_{11}}/{\hat{\lambda}_{11}}) \\ &- [(\lambda_{10} + \lambda_{01} + \lambda_{11}) 
- (\lambda_{10} + \hat{\lambda}_{01} + \hat{\lambda}_{11} )] ) \\
& = 2 (11\cdot\log(11/11.94) + 2\cdot \log(2/2.94) + 36 \cdot \log(36/35.06)- (49 - 49.94)) = 0.44. 
\end{align*}
Note that because the observed and estimated totals are not necessarily equal, the $[(\lambda_{10} + \lambda_{01} + \lambda_{11}) 
- (\lambda_{10} + \hat{\lambda}_{01} + \hat{\lambda}_{11} )]$ term does not cancel.

One sees that the likelihood ratio statistic is twice the Bregman \citep{Bregman} divergence between the observed and the estimated probabilities. The Bregman divergence between two positive vectors is defined as: 
$$
\mathcal{B}(\boldsymbol t, \boldsymbol u) = \sum_i t_i \log \frac{t_i}{u_i}  - \sum_i (t_i-u_i),
$$
and is a generalization of the Kullback - Leibler divergence. The Bregman statistic, when the observed data are $\boldsymbol y$ and the MLEs are $\hat{\boldsymbol y}$, is
$$
B(\boldsymbol y, \hat{\boldsymbol y}) = 2\mathcal{B}(\boldsymbol y, \hat{\boldsymbol y}) = 2 \left(\sum_i y_i \log \frac{y_i}{\hat{y}_i}  - \sum_i ( y_i - \hat{y}_i) \right).
$$

The rest of the paper proves that the Pearson and the Bregman statistics have asymptotic chi-squared distributions. As the initial step, the asymptotic normality of the MLE under relational models is obtained.

\section{Asymptotic normality of the MLE}\label{AsympNormSection}

Asymptotic properties of the MLE under a model for intensities  are studied first. Let $\boldsymbol \lambda = (\lambda_1, \dots, \lambda_I)'\in \mathbb{R}^I_{>0}$, and $\boldsymbol Y = (Y_1, \dots, Y_I)$ denote a random vector that has a multivariate Poisson distribution on the sample space $\mathcal{I}$: $\,\,\boldsymbol Y \sim Pois(\boldsymbol \lambda)$. Namely, for each  $i = 1, \dots, I$, the random variable $Y_i$ has a Poisson distribution, $Y_i \sim Pois(\lambda_{i})$, and $Y_1, \dots, Y_I$ are jointly independent. The asymptotic distribution of the MLE in a model for intensities is obtained under the assumption that the minimum norm, $\|\boldsymbol \lambda\| = \underset{i = 1, \dots, I}{\min} \lambda_i$, goes to infinity. 

Let $\{\boldsymbol \lambda^{r}\}_{r=1}^{\infty} \subset \mathbb{R}^I_{>0}$, such that $\|\boldsymbol \lambda^{r}\| \to \infty$ as $r \to \infty$. For each $r$, let $\boldsymbol Y^{r} \sim Pois(\boldsymbol \lambda^{r})$.  In the sequel, for simplicity of exposition, the index $r$ is omitted, and a shorthand notation is used: $\boldsymbol Y \equiv \boldsymbol Y^{r}$, $\true \equiv \boldsymbol \lambda^{r}$, and $\hat{\boldsymbol \lambda} \equiv \hat{\boldsymbol \lambda}{}^{r}$. The convergence of the norm $\|\boldsymbol \lambda^{r}\|$ to infinity when $r \to \infty$ is abbreviated as $\|\true\| \to \infty$.   Throughout the paper, the $I \times I$ identity matrix is denoted by $\mathbf I$, and $\Delta[\boldsymbol \lambda]$ stands for the diagonal matrix with $\boldsymbol \lambda$ as its main diagonal.

\begin{lemma}\label{DataNormala}
Let $\boldsymbol Y \sim Pois(\true)$. Then, as
 $\|\true\| \to \infty$, 
\begin{equation}\label{PoissonNormal}
\mathcal{L}\left\{\Delta[{\true}{}^{-1/2}]({\boldsymbol Y} - \true)\right\} \quad {\to} \quad \mathcal{N}(\boldsymbol 0, \mathbf{I}).
\end{equation}
\end{lemma}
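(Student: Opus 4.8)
The plan is to reduce the multivariate statement to a one-dimensional Poisson central limit theorem applied coordinatewise, and then to assemble the coordinates using the joint independence of $Y_1, \dots, Y_I$. Note first that because the norm is $\|\true\| = \min_i \ti$, the hypothesis $\|\true\| \to \infty$ forces \emph{every} coordinate mean $\ti$ to tend to infinity along the sequence. The $i$-th coordinate of the rescaled vector $\Delta[{\true}{}^{-1/2}](\boldsymbol Y - \true)$ is $(Y_i - \ti)/\sqrt{\ti}$, which has mean $0$ and variance $1$; the content of the lemma is that its law converges to $\mathcal{N}(0,1)$ and that the joint law converges to the product measure $\mathcal{N}(\boldsymbol 0, \mathbf I)$.

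For the marginal step I would argue through characteristic functions rather than quote a fixed-distribution CLT, because here the parameter $\ti$ itself varies with $r$; this is a triangular-array situation, and it is exactly where care is required. Using that $Y_i \sim Pois(\ti)$, the characteristic function of $(Y_i - \ti)/\sqrt{\ti}$ is $\exp\bigl(\ti(e^{it/\sqrt{\ti}} - 1) - it\sqrt{\ti}\bigr)$. Expanding $e^{it/\sqrt{\ti}} - 1$ to second order shows that the exponent equals $-t^2/2 + O(\ti^{-1/2})$, so the characteristic function converges pointwise in $t$ to $e^{-t^2/2}$ as $\ti \to \infty$. By L\'evy's continuity theorem this yields marginal convergence to $\mathcal{N}(0,1)$.

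For the joint step I would exploit independence: the characteristic function of $\Delta[{\true}{}^{-1/2}](\boldsymbol Y - \true)$ at $\boldsymbol t = (t_1, \dots, t_I)'$ factors as the product of the marginal characteristic functions evaluated at the $t_i$. Each factor converges to $e^{-t_i^2/2}$, so the product converges to $e^{-\|\boldsymbol t\|^2/2}$, the characteristic function of $\mathcal{N}(\boldsymbol 0, \mathbf I)$; a final application of the continuity theorem completes the argument. The only genuinely delicate point is the marginal limit, since the means move with $r$ and the classical i.i.d. central limit theorem cannot be invoked directly; the explicit second-order expansion of the characteristic function is what resolves it. Everything else is bookkeeping, made especially clean by the fact that $I$ is fixed and the limiting law is itself a product measure.
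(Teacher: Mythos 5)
Your proposal is correct and follows essentially the same route as the paper: reduce to the coordinatewise statement that $\ti^{-1/2}(Y_i-\ti)\to\mathcal{N}(0,1)$ as $\ti\to\infty$, then assemble the joint limit from the independence of $Y_1,\dots,Y_I$. The only difference is that you prove the marginal step explicitly via a second-order expansion of the Poisson characteristic function, whereas the paper simply cites it (Bishop, Fienberg and Holland, p.~489); your version is self-contained but not a different argument.
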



\begin{proof}
For each $i = 1, \dots, I$, 
\begin{equation}
\mathcal{L} \left\{\ti{}^{-1/2}(Y_i - \ti)\right\} \quad {\to} \quad \mathcal{N}(0, 1),
\end{equation}
as $\ti \to \infty$ \citep*[cf.][p. 489]{BFH}. 
The required result follows from the joint independence of $Y_1, \dots, Y_I$. 
\end{proof}

\vspace{1mm}

\begin{corollary} \label{CorollaryIntensNormal} Let $\boldsymbol Y \sim Pois(\true)$ and $\boldsymbol y$ be a realization of $\boldsymbol Y$. Then,
as $\|\true\| \to \infty$,
\begin{align}\label{Op}
&\Delta[{\true}{}^{-1/2}]({\boldsymbol Y} - \true) = O_p(\boldsymbol 1), \\
&\Delta[{\true}{}^{-1/2}]({\boldsymbol y} - \true) = O(\boldsymbol 1).\nonumber
\end{align}
\end{corollary}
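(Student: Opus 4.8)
The plan is to treat Corollary~\ref{CorollaryIntensNormal} as an immediate consequence of Lemma~\ref{DataNormala} together with the elementary principle that a weakly convergent sequence of random vectors is stochastically bounded. There is no new computation to perform: the whole content is the translation of the convergence in distribution established in \eqref{PoissonNormal} into the order symbols $O_p$ and $O$. Accordingly, I would first recall the definitions of these symbols and then verify each of the two displays in turn.

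For the first display, write $\boldsymbol X^{r} := \Delta[{\true}{}^{-1/2}](\boldsymbol Y - \true)$. By definition, $\boldsymbol X^{r} = O_p(\boldsymbol 1)$ means that the sequence is uniformly tight, i.e.\ for every $\varepsilon > 0$ there is an $M$ with $\sup_{r} \mathbb{P}(\|\boldsymbol X^{r}\| > M) \le \varepsilon$. Lemma~\ref{DataNormala} gives $\boldsymbol X^{r} \to \mathcal{N}(\boldsymbol 0, \mathbf I)$ in law, and the limiting Gaussian assigns mass at most $\varepsilon/2$ to the closed set $\{\boldsymbol x : \|\boldsymbol x\| \ge M\}$ once $M$ is large. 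Applying the Portmanteau theorem to this closed set yields $\limsup_{r} \mathbb{P}(\|\boldsymbol X^{r}\| \ge M) \le \varepsilon/2$, so only finitely many $r$ can violate the bound by $\varepsilon$; enlarging $M$ to absorb those finitely many terms produces the required uniform tightness, which is exactly $\Delta[{\true}{}^{-1/2}](\boldsymbol Y - \true) = O_p(\boldsymbol 1)$.

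For the second display I would pass from this stochastic-order statement to the deterministic one for an observed realization $\boldsymbol y$. Uniform tightness supplies, for each $\varepsilon > 0$, a radius $M_{\varepsilon}$ and an event $\Omega_{\varepsilon}$ with $\mathbb{P}(\Omega_{\varepsilon}) \ge 1 - \varepsilon$ on which $\|\boldsymbol X^{r}\| \le M_{\varepsilon}$ holds simultaneously in $r$; on realizations drawn from $\Omega_{\varepsilon}$ the scaled, centered data sequence is bounded, giving $\Delta[{\true}{}^{-1/2}](\boldsymbol y - \true) = O(\boldsymbol 1)$. The only delicate point — and the step I expect to require the most care in the write-up — is pinning down the sense in which a realization of an $O_p(\boldsymbol 1)$ sequence is $O(\boldsymbol 1)$: since the $\boldsymbol Y^{r}$ are not endowed with a joint law across $r$, one cannot assert boundedness almost surely, only on the high-probability events $\Omega_{\varepsilon}$. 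I would therefore state the deterministic bound as holding on these events, which is precisely the form in which it is consumed by the later deterministic expansions for the MLE, and note that this is the standard convention (in the spirit of \cite{AitchSilvey60}) for importing $O_p$ bounds into such arguments.
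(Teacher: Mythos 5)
Your argument is correct and matches the paper's approach: the paper states this corollary without proof as an immediate consequence of Lemma~\ref{DataNormala}, and your Portmanteau/tightness argument is the standard way to fill in that step. Your careful treatment of the second display --- noting that the deterministic $O(\boldsymbol 1)$ bound for a realization can only be asserted on high-probability events rather than pathwise --- is in fact more precise than anything the paper itself offers on this point.
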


\vspace{2mm}

The asymptotic normality of the MLE under a relational model for intensities will be obtained next. The relevant scaling factor in the case of intensities is 1 over the square root of the true parameter (and in the case of probabilities is the square root of the sample size), but this scaling will not always be made explicit when asymptotic normality is mentioned in the text.
The proof of this result will use two lemmas, whose proofs are given in the Appendix. The first lemma describes the asymptotic proximity of the MLE to the true parameter value:

\begin{lemma}\label{newlemma} 
Let $RM_{\boldsymbol \lambda}(\mathbf{A})$ be a relational model for intensities, $\true \in RM_{\boldsymbol \lambda}(\mathbf{A})$ be the true parameter, and $\boldsymbol Y \sim Pois(\true)$ be the observations.  Assume that the MLE $\hat{\boldsymbol \lambda}$ of $\true$ under the model  exists. Then 
\begin{equation}\label{majorY}
\Delta[\true{}^{-1/2}] (\hat{\boldsymbol \lambda} - \true) = O_p(\boldsymbol 1), \quad \mbox{ as } \|\true\| \to \infty.
\end{equation}
\end{lemma}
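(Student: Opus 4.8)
The plan is to exploit the strict concavity of the Poisson log-likelihood in the log-scale and to show, by a quadratic-approximation argument, that its maximiser cannot escape a ball whose suitably scaled radius is $O_p(\boldsymbol 1)$. Writing $\boldsymbol \phi = \log \boldsymbol \lambda$, the relational model $RM_{\boldsymbol \lambda}(\mathbf{A})$ constrains $\boldsymbol \phi$ to the column space $\mathcal{C}$ of $\mathbf{A}'$, and the log-likelihood is $\ell(\boldsymbol \phi) = \boldsymbol y' \boldsymbol \phi - \boldsymbol 1' \exp(\boldsymbol \phi)$, which is strictly concave on $\mathcal{C}$ because $\mathbf{A}$ has full row rank. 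Let $\boldsymbol \phi^{*} = \log \true \in \mathcal{C}$ be the point corresponding to the true value and $\hat{\boldsymbol \phi} = \log \hat{\boldsymbol \lambda} \in \mathcal{C}$ the one corresponding to the MLE, which is assumed to exist and is therefore unique.

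First I would reparametrise the deviation from the truth in the scale dictated by the Poisson fluctuations. For $\boldsymbol \phi = \boldsymbol \phi^{*} + \boldsymbol \delta$ with $\boldsymbol \delta \in \mathcal{C}$, set $\boldsymbol z = \Delta[\true^{1/2}] \boldsymbol \delta$, so that $\boldsymbol z$ ranges over the subspace $\mathcal{C}_r := \Delta[\true^{1/2}]\mathcal{C}$ and $\delta_i = \ti^{-1/2} z_i$. Using $y_i = \ti + \ti^{1/2} W_i$ with $\boldsymbol W = \Delta[\true^{-1/2}](\boldsymbol Y - \true)$, a direct computation in which the first-order terms cancel gives $\ell(\boldsymbol \phi^{*} + \boldsymbol \delta) - \ell(\boldsymbol \phi^{*}) = \boldsymbol W' \boldsymbol z - \sum_i \ti\big(e^{\ti^{-1/2} z_i} - 1 - \ti^{-1/2} z_i\big)$. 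By Corollary \ref{CorollaryIntensNormal}, $\boldsymbol W = O_p(\boldsymbol 1)$, hence $\|\boldsymbol W\| = O_p(1)$.

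The core step is a sphere argument. Fix a tolerance $\eta > 0$, choose $M_\eta$ with $\mathbb{P}(\|\boldsymbol W\| > M_\eta) < \eta$ for all large $r$, and take a radius $R$ to be fixed below. On the sphere $\{\boldsymbol z \in \mathcal{C}_r : \|\boldsymbol z\| = R\}$ each coordinate satisfies $|\ti^{-1/2} z_i| \le \|\true\|^{-1/2} R \to 0$, so the elementary inequality $e^x - 1 - x \ge \tfrac12 (1-\epsilon) x^2$, valid for $|x|$ small, yields, for all large $r$, $\sum_i \ti(e^{\ti^{-1/2} z_i} - 1 - \ti^{-1/2} z_i) \ge \tfrac12 (1-\epsilon) \|\boldsymbol z\|^2 = \tfrac12 (1-\epsilon) R^2$. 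Hence $\ell(\boldsymbol \phi^{*} + \boldsymbol \delta) - \ell(\boldsymbol \phi^{*}) \le \|\boldsymbol W\| R - \tfrac12 (1-\epsilon) R^2$, which is negative on the event $\{\|\boldsymbol W\| \le M_\eta\}$ once $R > 2 M_\eta / (1-\epsilon)$. Strict concavity of $\ell$ as a function of $\boldsymbol z$ then forces the maximiser $\hat{\boldsymbol z}$ to lie inside the ball of radius $R$; since $\eta$ was arbitrary, $\hat{\boldsymbol z} = O_p(\boldsymbol 1)$.

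It remains to translate this back to the original scale. Because $\hat z_i = \ti^{1/2}(\hat \phi_i - \phi^{*}_i)$, we have $\hat \lambda_i / \ti = \exp(\ti^{-1/2} \hat z_i)$, and the $i$-th component of $\Delta[\true^{-1/2}](\hat{\boldsymbol \lambda} - \true)$ equals $\ti^{1/2}(e^{\ti^{-1/2} \hat z_i} - 1) = \hat z_i + O_p(\ti^{-1/2}) = O_p(1)$, which is \eqref{majorY}. The step I expect to require the most care is the uniformity in the sphere argument: both the subspace $\mathcal{C}_r$ and the scaling move with $r$, so the quadratic lower bound on the penalty $\sum_i \ti(e^{\ti^{-1/2} z_i} - 1 - \ti^{-1/2} z_i)$ must be shown to hold uniformly over the sphere as $\|\true\| \to \infty$. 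This is exactly where the hypothesis $\|\true\| = \min_i \ti \to \infty$ enters, guaranteeing $\max_i |\ti^{-1/2} z_i| \to 0$ uniformly for $\|\boldsymbol z\| \le R$ and thereby making the Taylor remainders negligible. The argument mirrors the heuristic of \cite{AitchSilvey60}, but the concavity of the Poisson log-likelihood lets us dispense with their implicit-function manipulations and keep the bounds rigorous.
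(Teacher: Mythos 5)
Your argument is correct, but it takes a genuinely different route from the paper's. The paper proves the lemma through the Bregman divergence: it invokes the characterization of the MLE as the Bregman projection of $\boldsymbol y$ onto $\{\boldsymbol \lambda > \boldsymbol 0:\ \mathbf{A}\boldsymbol\lambda = \mathbf{A}\boldsymbol y\}$ and the associated Pythagorean-type inequality $\mathcal{B}(\hat{\boldsymbol\lambda},\boldsymbol y)\le \mathcal{B}(\true,\boldsymbol y)-\mathcal{B}(\true,\hat{\boldsymbol\lambda})$, then Taylor-expands both divergences around $\true$ to conclude that $\sum_i \ti{}^{-1}(\hat\lambda_i-\ti)^2$ is dominated, up to negligible remainders, by $\sum_i \ti{}^{-1}(y_i-\ti)^2=O_p(1)$. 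You instead pass to log-coordinates, use the exact cancellation of the first-order term so that the log-likelihood drop equals $\boldsymbol W'\boldsymbol z-\sum_i\ti(e^{\ti{}^{-1/2}z_i}-1-\ti{}^{-1/2}z_i)$, bound the penalty below by $\tfrac12(1-\epsilon)\|\boldsymbol z\|^2$ uniformly on the sphere $\|\boldsymbol z\|=R$ (which is where $\|\true\|=\min_i\ti\to\infty$ enters, exactly as you note), and let concavity confine the maximizer to the ball. Both arguments are sound. Yours buys self-containedness --- it needs only the definition of the MLE as a constrained maximizer plus Corollary \ref{CorollaryIntensNormal}, not the projection characterization from \cite{KRipf1} and the inequality from \cite{Bregman} --- and the one-sided quadratic lower bound lets you avoid Taylor remainders of the form $o((\hat\lambda_i/\ti-1)^2)$ that involve the very quantity being bounded, which the paper's expansion must handle. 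The paper's route buys brevity, since the Bregman machinery is already established in the cited prior work, and it never leaves the original parametrization. One small inaccuracy to fix in your write-up: strict concavity of $\ell(\boldsymbol\phi)=\boldsymbol y'\boldsymbol\phi-\boldsymbol 1'\exp(\boldsymbol\phi)$ holds on all of $\mathbb{R}^I$, its Hessian being $-\Delta[\exp(\boldsymbol\phi)]$, so the full row rank of $\mathbf{A}$ is not the reason for concavity on the constraint subspace; it matters only for identifiability of $\boldsymbol\theta$ and is not needed for the bound itself.
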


\noindent The second lemma describes the asymptotic behaviour of certain diagonal matrices, built from the vectors located close to the true value: 

\begin{lemma}\label{smallLemma}
Let $\boldsymbol \lambda > \boldsymbol 0$ be a random variable such that $\,\Delta[\true{}^{-1/2}](\boldsymbol \lambda - \true) =O_p(\boldsymbol 1),$ 
as $\|\true\| \to \infty$, and consider $\boldsymbol \varsigma > \boldsymbol 0$ that satisfy
$\,\|\boldsymbol \lambda - \boldsymbol  \varsigma\| < \|\boldsymbol \lambda - \true\|$. Then, for $m \in \mathbb{Z}_{>0}$,  
\begin{align*}
\Delta[\true{}^m]\cdot\Delta[\boldsymbol \varsigma^{-m}] &= \mathbf{I}+ O_p(\|\true\|^{-1/2}) \qquad \mbox{ as } \|\true\| \to \infty.
\end{align*}
\end{lemma}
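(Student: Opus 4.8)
The plan is to exploit that both factors are diagonal, so the product $\Delta[\true{}^m]\,\Delta[\boldsymbol\varsigma^{-m}]$ is itself diagonal, with $i$-th entry $(\ti/\varsigma_i)^m$. Since the sample space is finite ($I$ fixed), it suffices to show that each such entry equals $1 + O_p(\|\true\|^{-1/2})$; the coordinatewise statements then assemble into the asserted matrix identity. Everything therefore reduces to controlling the ratio $\ti/\varsigma_i$ coordinate by coordinate.

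First I would transfer the closeness hypothesis from $\boldsymbol\lambda$ to $\boldsymbol\varsigma$, establishing $\Delta[\true{}^{-1/2}](\boldsymbol\varsigma - \true) = O_p(\boldsymbol 1)$. Writing $\varsigma_i - \ti = (\varsigma_i - \lambda_i) + (\lambda_i - \ti)$, the second summand is $O_p(\ti{}^{1/2})$ directly from the assumption $\Delta[\true{}^{-1/2}](\boldsymbol\lambda - \true) = O_p(\boldsymbol 1)$. For the first summand I would use the constraint $\|\boldsymbol\lambda - \boldsymbol\varsigma\| < \|\boldsymbol\lambda - \true\|$: in the intended application $\boldsymbol\varsigma$ arises as the intermediate point of a coordinatewise mean-value expansion, so each $\varsigma_i$ lies between $\lambda_i$ and $\ti$, whence $|\varsigma_i - \lambda_i| \le |\lambda_i - \ti| = O_p(\ti{}^{1/2})$. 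Either way one obtains $\varsigma_i - \ti = O_p(\ti{}^{1/2})$, that is, $\Delta[\true{}^{-1/2}](\boldsymbol\varsigma - \true) = O_p(\boldsymbol 1)$.

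Next I would convert this into a relative error. Dividing by $\ti$ gives $(\varsigma_i - \ti)/\ti = O_p(1)\cdot\ti{}^{-1/2}$, and because $\ti \ge \|\true\|$, the minimum coordinate of $\true$, we have $\ti{}^{-1/2} \le \|\true\|^{-1/2}$, so $(\varsigma_i - \ti)/\ti = O_p(\|\true\|^{-1/2})$. Consequently $\ti/\varsigma_i = \left(1 + (\varsigma_i - \ti)/\ti\right)^{-1} = 1 + O_p(\|\true\|^{-1/2})$, where I use that the perturbation tends to $0$, so the denominator is bounded away from $0$ with probability tending to $1$. Raising to the fixed integer power $m$ and expanding by the binomial theorem preserves the form, $(\ti/\varsigma_i)^m = 1 + O_p(\|\true\|^{-1/2})$, and collecting the coordinates yields the claim.

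The main obstacle is the first step, namely passing from the single scalar inequality $\|\boldsymbol\lambda - \boldsymbol\varsigma\| < \|\boldsymbol\lambda - \true\|$ to per-coordinate control of $\varsigma_i - \lambda_i$: a bare norm inequality does not by itself bound each coordinate of $\boldsymbol\varsigma - \boldsymbol\lambda$ by the corresponding coordinate of $\boldsymbol\lambda - \true$, which is precisely what the argument needs, and this is resolved by the coordinatewise betweenness of the intermediate point $\boldsymbol\varsigma$. The other point to keep in mind is that the minimum-norm regime $\|\true\| = \min_i \ti \to \infty$ is exactly what makes the uniform bound $\ti{}^{-1/2}\le \|\true\|^{-1/2}$ available for every coordinate simultaneously; this is the step where the assumption that even the smallest intensity diverges is used.
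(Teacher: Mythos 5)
Your proof is correct and follows essentially the same route as the paper's: reduce to coordinates, transfer the $O_p$ bound from $\boldsymbol\lambda$ to $\boldsymbol\varsigma$ using that $\boldsymbol\varsigma$ lies on the segment between $\boldsymbol\lambda$ and $\true$ (the paper writes $\boldsymbol\varsigma = t\boldsymbol\lambda + (1-t)\true$ and gets $\Delta[\true{}^{-1}]\boldsymbol\varsigma-\boldsymbol 1 = t\,\Delta[\true{}^{-1}](\boldsymbol\lambda-\true) = O_p(\|\true\|^{-1/2})$, where you instead split $\varsigma_i-\ti$ via the triangle inequality), then expand $(\ti/\varsigma_i)^m$ to first order. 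Your explicit remark that the bare norm inequality does not by itself yield the betweenness, and that this must come from $\boldsymbol\varsigma$ being a Taylor intermediate point, correctly identifies the one step the paper also relies on but states only implicitly.
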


\vspace{4mm}

\noindent If the observed data are strictly positive, the MLE always exists, but, if the data contain some zeros, the MLE may or may not exist, depending on the pattern of zeros \citep{KRextended}. In order to deal with this situation, an augmented MLE, $\tilde{\boldsymbol \lambda}$, and augmented Lagrange multipliers, $\tilde{\boldsymbol \alpha}$, are introduced. A similar approach was used in \cite{WittingNoelle}, among others. Let $\mathscr{E}$ denote the set of values of $\boldsymbol Y$, for which the MLE exist, and $\bar{\mathscr{E}}$ denote its complement in $\mathbb{Z}_{\geq 0}$. The augmented MLE and Lagrange multipliers are defined to be equal, respectively, to $\hat{\boldsymbol \lambda}$ and $\hat{\boldsymbol \alpha}$, if the data are in $\mathscr{E}$, and to a positive vector and a zero vector otherwise, 
namely:
\begin{equation}\label{pseudoMLE_intens}
\tilde{\boldsymbol \lambda} = \tilde{\boldsymbol \lambda}  (\boldsymbol Y)  =\left\{\begin{array}{ll}
 \hat{\boldsymbol \lambda}(\boldsymbol Y),   & \mbox{ if } \boldsymbol Y \in \mathscr{E}, 
 \\
 (\boldsymbol 1'\boldsymbol Y/I)\boldsymbol 1,  & \mbox{ if } \boldsymbol Y \in \bar{\mathscr{E}}, 
 \end{array} \right.   \quad \tilde{\boldsymbol \alpha} = \tilde{\boldsymbol \alpha}  (\boldsymbol Y)  =\left\{\begin{array}{ll}
 \hat{\boldsymbol \alpha}(\boldsymbol Y), & \mbox{ if } \boldsymbol Y \in \mathscr{E}, 
 \\
 \boldsymbol 0,  & \mbox{ if } \boldsymbol Y \in \bar{\mathscr{E}}.
 \end{array} \right.
\end{equation}
The main result about the asymptotic normality is presented now:

\begin{theorem}\label{MLENormala} 
Let $RM_{\boldsymbol \lambda}(\mathbf{A})$ be a relational model for intensities, $\mathbf{D}$ be a kernel basis matrix, $\true \in RM_{\boldsymbol \lambda}(\mathbf{A}) $, and $\boldsymbol Y \sim Pois(\true)$. Let $\hat{\boldsymbol \lambda}$ be the MLE of $\true$ under $RM_{\boldsymbol \lambda}(\mathbf{A})$ and $\tilde{\boldsymbol \lambda}$ be the augmented MLE.  
Then, as $ \|\true\|\to \infty$,
\begin{enumerate}[(i)]
\item 
$\mathcal{L} \left(\Delta[\true{}^{-1/2}]({\hat{\boldsymbol \lambda}} - \true) \,\, {\boldsymbol{\mid}} \, \boldsymbol Y \in \mathscr{E} \right) \quad {\to} \quad \mathcal{N
}(\boldsymbol 0, \mathbf{I} -\mathbf{D}'(\mathbf{D}\mathbf{D}')^{-1}\mathbf{D}),
$ \\

\item $\mathcal{L} \left(\Delta[\true{}^{-1/2}]({\tilde{\boldsymbol \lambda}} - \true) \right) \quad {\to} \quad \mathcal{N
}(\boldsymbol 0, \mathbf{I} -\mathbf{D}'(\mathbf{D}\mathbf{D}')^{-1}\mathbf{D}).
$ 
\end{enumerate}
\end{theorem}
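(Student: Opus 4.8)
The plan is to characterise $\hat{\boldsymbol\lambda}$ through its stationarity and constraint equations, linearise them about $\true$, and recognise the standardised error $\boldsymbol W:=\Delta[\true{}^{-1/2}](\hat{\boldsymbol\lambda}-\true)$ as an orthogonal projection of the vector $\boldsymbol Z:=\Delta[\true{}^{-1/2}](\boldsymbol Y-\true)$, which is asymptotically $\mathcal N(\boldsymbol 0,\mathbf I)$ by Lemma~\ref{DataNormala}. Maximising the Poisson log-likelihood subject to $\log\boldsymbol\lambda=\mathbf A'\boldsymbol\theta$ gives the score equations $\mathbf A(\boldsymbol Y-\hat{\boldsymbol\lambda})=\boldsymbol 0$, so that $\boldsymbol Y-\hat{\boldsymbol\lambda}$ lies in the kernel of $\mathbf A$, while membership of $\hat{\boldsymbol\lambda}$ in the model gives the companion constraint, which also holds at $\true$ because $\true\in RM_{\boldsymbol\lambda}(\mathbf A)$. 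Subtracting the true value and standardising by $\Delta[\true{}^{-1/2}]$, the score equation forces $\boldsymbol Z-\boldsymbol W$ to lie in the column space of $\mathbf D'$, where $\mathbf D$ is the kernel basis matrix entering the statement; the factor $\Delta[\true{}^{1/2}]$ produced by the standardisation is exactly what makes $\mathbf D$ a kernel basis matrix of the scaled design $\mathbf A\,\Delta[\true{}^{1/2}]$ and is responsible for the clean form of the limiting covariance.

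Next I would linearise the constraint. Since it vanishes at both $\hat{\boldsymbol\lambda}$ and $\true$, a componentwise mean-value expansion of $\log\hat{\boldsymbol\lambda}-\log\true$ turns it into $\mathbf D\boldsymbol W=o_p(\boldsymbol 1)$ to leading order. The two standardised relations---$\boldsymbol Z-\boldsymbol W$ in the column space of $\mathbf D'$ and $\mathbf D\boldsymbol W\approx\boldsymbol 0$---pin down $\boldsymbol W$: they force $\boldsymbol W=(\mathbf I-\mathbf D'(\mathbf D\mathbf D')^{-1}\mathbf D)\boldsymbol Z+o_p(\boldsymbol 1)$, that is, $\boldsymbol W$ is the residual from projecting $\boldsymbol Z$ onto the row space of $\mathbf D$. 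As $\mathbf I-\mathbf D'(\mathbf D\mathbf D')^{-1}\mathbf D$ is symmetric and idempotent, Lemma~\ref{DataNormala} together with the continuous mapping theorem and Slutsky's lemma yields part (i): conditionally on $\boldsymbol Y\in\mathscr E$, $\mathcal L(\boldsymbol W\mid\boldsymbol Y\in\mathscr E)\to\mathcal N(\boldsymbol 0,\mathbf I-\mathbf D'(\mathbf D\mathbf D')^{-1}\mathbf D)$.

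The substance of the proof, and the place where Aitchison and Silvey argued only heuristically, is the control of the linearisation error. Lemma~\ref{newlemma} provides the consistency $\boldsymbol W=O_p(\boldsymbol 1)$ that licenses expanding about $\true$, and Corollary~\ref{CorollaryIntensNormal} keeps $\boldsymbol Z=O_p(\boldsymbol 1)$. The mean-value expansion introduces diagonal factors $\Delta[\boldsymbol\xi^{-1}]$ evaluated at an intermediate point $\boldsymbol\xi$ lying between $\hat{\boldsymbol\lambda}$ and $\true$; Lemma~\ref{smallLemma} is precisely the tool that allows $\boldsymbol\xi$ to be replaced by $\true$ at the cost of a relative error $O_p(\|\true\|^{-1/2})$, and the same device shows that the quadratic term in the logarithmic expansion is $o_p(\boldsymbol 1)$ after standardisation. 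One must also verify that the perturbed normal equations determining $\hat{\boldsymbol\alpha}$ reproduce the projection up to $o_p(\boldsymbol 1)$; since the map is an orthogonal projection this is stable under the relative perturbation, but making the bookkeeping uniform as $\|\true\|\to\infty$ is, I expect, the main obstacle.

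Finally, to obtain the unconditional statement (ii) for the augmented MLE $\tilde{\boldsymbol\lambda}$, I would show that the non-existence set is asymptotically negligible. Each coordinate satisfies $Y_i\sim Pois(\ti)$ with $\ti\to\infty$, so $P(Y_i=0)=e^{-\ti}\to 0$ and the data are strictly positive---hence the MLE exists---with probability tending to one; therefore $P(\boldsymbol Y\in\bar{\mathscr E})\to 0$. On $\mathscr E$ one has $\tilde{\boldsymbol\lambda}=\hat{\boldsymbol\lambda}$, so the value assigned to $\tilde{\boldsymbol\lambda}$ on the vanishing-probability set $\bar{\mathscr E}$ cannot influence the limit. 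Combining the conditional convergence of (i) with $P(\boldsymbol Y\in\mathscr E)\to 1$ then gives the unconditional limit $\mathcal L(\Delta[\true{}^{-1/2}](\tilde{\boldsymbol\lambda}-\true))\to\mathcal N(\boldsymbol 0,\mathbf I-\mathbf D'(\mathbf D\mathbf D')^{-1}\mathbf D)$, completing (ii).
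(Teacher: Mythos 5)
Your overall strategy is the paper's: characterise the MLE by its estimating equations, linearise about $\true$, read off the standardised error $\boldsymbol W=\Delta[\true{}^{-1/2}](\hat{\boldsymbol\lambda}-\true)$ as a projection of $\boldsymbol Z=\Delta[\true{}^{-1/2}](\boldsymbol Y-\true)$, control the remainder with Lemmas~\ref{newlemma} and~\ref{smallLemma}, and dispose of the non-existence set via $\mathbb{P}(\boldsymbol Y\in\bar{\mathscr{E}})\to 0$. Your treatment of part (ii) is essentially identical to the paper's.

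There is, however, a genuine gap at the central step: the standardisation does not remove the dependence on $\true$ from your two relations. The exact score identity $\boldsymbol Y-\hat{\boldsymbol\lambda}=-\mathbf{D}'\hat{\boldsymbol\alpha}$ places $\boldsymbol Z-\boldsymbol W=\Delta[\true{}^{-1/2}](\boldsymbol Y-\hat{\boldsymbol\lambda})$ in the column space of $\Delta[\true{}^{-1/2}]\mathbf{D}'$, not of $\mathbf{D}'$; and the mean-value expansion of $\mathbf{D}(\log\hat{\boldsymbol\lambda}-\log\true)=\boldsymbol 0$ yields $\mathbf{D}\Delta[\true{}^{-1/2}]\boldsymbol W=o_p(\boldsymbol 1)$, not $\mathbf{D}\boldsymbol W=o_p(\boldsymbol 1)$. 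The justification you offer --- that the factor $\Delta[\true{}^{1/2}]$ makes $\mathbf{D}$ a kernel basis matrix of the scaled design $\mathbf{A}\Delta[\true{}^{1/2}]$ --- is false: $\mathbf{A}\mathbf{D}'=\boldsymbol 0$ does not imply $\mathbf{A}\Delta[\true{}^{1/2}]\mathbf{D}'=\boldsymbol 0$, and the kernel of the scaled design is in fact spanned by the columns of $\Delta[\true{}^{-1/2}]\mathbf{D}'$. Carrying the corrected relations through gives $\boldsymbol W=\mathbf{R}^*\boldsymbol Z+o_p(\boldsymbol 1)$ with the $\true$-dependent projection $\mathbf{R}^*=\mathbf{I}-\Delta[\true{}^{-1/2}]\mathbf{D}'\left(\mathbf{D}\Delta[\true{}^{-1}]\mathbf{D}'\right)^{-1}\mathbf{D}\Delta[\true{}^{-1/2}]$, which is exactly the paper's (\ref{AsympNS}); one must then still show that $\mathbf{R}^*=\mathbf{I}-\mathbf{D}'(\mathbf{D}\mathbf{D}')^{-1}\mathbf{D}+o(\boldsymbol 1)$ as $\|\true\|\to\infty$. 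That passage from the weighted to the unweighted projection is precisely where the paper expends real effort (a separate limiting argument followed by the idempotency computation identifying the constant $k=1$), and it is absent from your proposal; without it the stated covariance matrix is not derived.
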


\noindent The conditional asymptotic normality of the MLE under relational models for intensities with the overall effect can be derived from the results of \cite{LangPoissMult}, Section 5, but his approach cannot be extended to the no-overall-effect situation. The proof presented below applies to models with and without the overall effect.


\begin{proof} 
Let $l(\boldsymbol \lambda, \boldsymbol Y)$ denote the log-likelihood function of a Poisson distribution: 
$$l(\boldsymbol \lambda, \boldsymbol Y) = \boldsymbol Y'  \log \boldsymbol \lambda  - \boldsymbol 1'\boldsymbol \lambda.$$
Assume that $\boldsymbol Y \in \mathscr{E}$, and thus, the MLE $\hat{\boldsymbol \lambda}$ under the relational model exists and is the unique solution to the following maximization problem  \citep{KRextended}: 
$$ \underset{\boldsymbol \lambda \in \mathcal{D}}{\mbox{max }} l(\boldsymbol \lambda, \boldsymbol Y),$$
where $\mathcal{D} = \{ \boldsymbol \lambda \in \mathbb{R}_{>0}^{I}: \,\,\mathbf{D} \log \boldsymbol \lambda = \boldsymbol 0\}$. In order to find the asymptotic distribution of $\hat{\boldsymbol \lambda}$, an approach analogous to the one described in \cite{AitchSilvey58} is applied. It utilizes the fact that the MLE is the unique value of $\boldsymbol \lambda$ that maximizes the Lagrange function,
\begin{equation}
L(\boldsymbol \lambda, \boldsymbol \alpha, \boldsymbol Y) =\boldsymbol Y'  \log \boldsymbol \lambda  - \boldsymbol 1'\boldsymbol \lambda+ \boldsymbol \alpha' \mathbf{D} \log\boldsymbol \lambda,
\end{equation}
over the unrestricted parameter space $\boldsymbol \lambda \in \mathbb{R}_{>0}^{I}$ \citep{KRextended}. 
Here $\boldsymbol \alpha = (\alpha_1, \dots, \alpha_K)'$ denotes the column-vector of Lagrange multipliers. Consequently,  $\hat{\boldsymbol \lambda}$ and the corresponding values of Lagrange multipliers $\hat{\boldsymbol \alpha}$ satisfy the set of equations obtained by differentiating the Lagrange function with respect to the components of $\boldsymbol \lambda$ and of $\boldsymbol \alpha$:
\begin{align}\label{likeEqf}
{\partial L}/{\partial \boldsymbol \lambda} &= \Delta[\boldsymbol \lambda^{-1}] \boldsymbol Y  - \boldsymbol 1 + \Delta[\boldsymbol \lambda^{-1}] \mathbf{D}'{\boldsymbol \alpha} = \boldsymbol 0,\\
{\partial L}/{\partial \boldsymbol \alpha} &= \mathbf{D} \log\boldsymbol \lambda = \boldsymbol 0. \nonumber
\end{align}
The argument below develops an approximation, as $\|\true\| \to\infty$, to the non-stochastic version of (\ref{likeEqf}), given in (\ref{systemPnewMatrix_estimate_inverse}). The solution given in (\ref{AsympNewNS}) approximates the solution of the non-stochastic version of (\ref{likeEqf}), for every value $\boldsymbol y$. But (\ref{AsympNewNS}) is also true in the stochastic sense. Therefore, the same approximation, given in (\ref{Asymp}), applies to the solution of (\ref{likeEqf}). Finally, the delta method is used to derive the asymptotic distribution of the solution, given in (\ref{CondIntens}).

Let $\boldsymbol Y = \boldsymbol y$ and consider the non-stochastic version of (\ref{likeEqf}). By the Taylor theorem, for every $\boldsymbol \lambda$, there exist such $\boldsymbol \phi = \boldsymbol \phi(\true, \boldsymbol \lambda, \boldsymbol y)$, $\,\boldsymbol \psi = \boldsymbol \psi(\true, \boldsymbol \lambda, \boldsymbol y)$, $\,\boldsymbol \xi = \boldsymbol \xi(\true, \boldsymbol \lambda, \boldsymbol y)$, belonging to the segment between  $\boldsymbol \lambda$ and $\true$, that the functions $\partial L/\partial \boldsymbol \lambda$ and $\partial L/\partial \boldsymbol \alpha$ can be written as: 
\begin{align}\label{TaylorS}
{\partial L}/{\partial \boldsymbol \lambda} &= \Delta[\true{}^{-1}] \boldsymbol y  - \boldsymbol 1 - \Delta[\true{}^{-2}] \cdot \Delta[\boldsymbol y]\cdot(\boldsymbol \lambda - \true)  + \Delta[\boldsymbol \lambda - \true] \cdot \Delta[\boldsymbol \phi^{-3}] \cdot \Delta[\boldsymbol y]\cdot(\boldsymbol \lambda - \true)\nonumber \\
&+ \Delta[\true{}^{-1}] \cdot\mathbf{D}'\boldsymbol \alpha \, - \,  \Delta[\,\, \Delta[\boldsymbol \psi^{-2}] \mathbf{D}'\boldsymbol \alpha \,\,]\cdot (\boldsymbol \lambda - \true), \\[10pt]
{\partial L}/{\partial \boldsymbol \alpha} &= 
\mathbf{D}\Delta[\true{}^{-1}]\cdot(\boldsymbol \lambda - \true) -\frac{1}{2} \mathbf{D}\Delta[\boldsymbol \xi^{-2}] \cdot \Delta[\boldsymbol \lambda - \true]\cdot(\boldsymbol \lambda - \true). \nonumber 
\end{align}
Rewrite the  non-stochastic version of (\ref{likeEqf}) using (\ref{TaylorS}), and then multiply both sides of the first 
equation from the left by $\Delta[\true{}^{1/2}]$. As the latter is a non-singular matrix, the solutions for the new system, 
\begin{align}\label{system1}
& \Delta[\true{}^{-1/2}] (\boldsymbol y - \true) -
\Delta[\true{}^{-3/2}] \cdot \Delta[\boldsymbol y]\cdot({\boldsymbol \lambda} - \true) \nonumber\\
&+\Delta[\true{}^{1/2}] \cdot \Delta[\boldsymbol \lambda - \true] \cdot \Delta[\boldsymbol \phi^{-3}] \cdot \Delta[\boldsymbol y] \cdot (\boldsymbol \lambda - \true) \nonumber\\
&- \Delta[\true{}^{1/2}] \cdot\Delta[\,\,\Delta[\boldsymbol \psi^{-2}] \mathbf{D}'\boldsymbol \alpha\,\,] \cdot (\boldsymbol \lambda - \true) \nonumber\\
&+\Delta[\true{}^{-{1/2}}] \mathbf{D}'{\boldsymbol \alpha} = \boldsymbol 0, \\[10pt]
&\left\{\mathbf{D}\Delta[\true{}^{-1/2}] - {1}/{2} \mathbf{D}\Delta[\true{}^{1/2}] \cdot\Delta[\boldsymbol \xi^{-2}]\cdot\Delta[\boldsymbol \lambda - \true]\right\}\cdot \Delta[\true{}^{-1/2}]\cdot(\boldsymbol \lambda - \true) = \boldsymbol 0, \nonumber
\end{align}
are the same as those of (\ref{likeEqf}). After rearranging the terms in (\ref{system1}) and noticing that 
$$
\Delta[\true{}^{-1}] \cdot \Delta[\boldsymbol y] = \mathbf{I}+ \Delta[\true{}^{-1}] \cdot\Delta[\boldsymbol y - \true],$$ 
the non-stochastic version of the system (\ref{likeEqf}) can be written as:
\vspace{-4mm}
\begin{align}\label{system111}
& \nonumber \\ 
& \Delta[\true{}^{-1/2}] (\boldsymbol y - \true) -
\left \{ \mathbf{I}+ \Delta[\true{}^{-1}] \cdot\Delta[\boldsymbol y - \true]
-\Delta[\true]\,\,\Delta[\boldsymbol \lambda - \true]\,\,\Delta[\boldsymbol \phi^{-3}]\,\, \Delta[\boldsymbol y] \right. \nonumber\\
&\left.+ \Delta[\true] \,\,\, \Delta[\,\Delta[\boldsymbol \psi^{-2}] \mathbf{D}'\boldsymbol \alpha\,\,] \,\right\} \cdot \Delta[\true{}^{-1/2}](\boldsymbol \lambda - \true) 
+\Delta[\true{}^{-{1/2}}]\, \mathbf{D}'{\boldsymbol \alpha}= \boldsymbol 0, \\[10pt]
&\left\{\mathbf{D}\Delta[\true{}^{-1/2}] - {1}/{2} \mathbf{D}\Delta[\true{}^{1/2}]\cdot\Delta[\boldsymbol \xi^{-2}]\cdot\Delta[\boldsymbol \lambda - \true]\right\}\cdot\Delta[\true{}^{-1/2}](\boldsymbol \lambda - \true) = \boldsymbol 0.\nonumber  
\end{align}

\noindent Equivalently, (\ref{system111}) can be written in a matrix form:

\begin{equation}\label{systemPnewMatrix_estimate}
\left[\begin{array}{c} 
\Delta[\true{}^{-1/2}](\boldsymbol y - \true)\\
\boldsymbol 0
\end{array}\right] = \left[\begin{array}{rcr} \mathbf{I}+ c_1(\true, \boldsymbol \lambda, \boldsymbol y) &{ } & -\mathbf{H} \\ -\mathbf{H}' + c_2(\true, \boldsymbol \lambda)& { } & \boldsymbol 0\end{array}\right]\left[\begin{array}{c}\Delta[\true{}^{-1/2}]({\boldsymbol \lambda} - \true)\\
{\boldsymbol \alpha} \end{array}\right],
\end{equation}
where 
\begin{equation}\label{Hintens}
\mathbf{H} = \Delta[\true{}^{-{1/2}}]\mathbf{D}',
\end{equation} and
\begin{align}\label{c1c2}
&c_1(\true, \boldsymbol \lambda, \boldsymbol y) = \Delta[\true{}^{-1}]\cdot\Delta[\boldsymbol y - \true] 
-\Delta[\true] \cdot\Delta[\boldsymbol \lambda - \true] \cdot \Delta[\boldsymbol \phi^{-3}]\cdot \Delta[\boldsymbol y] \nonumber\\
&+ \Delta[\true]\cdot\Delta[\Delta[\boldsymbol \psi^{-2}] \mathbf{D}'\boldsymbol \alpha] \\[10pt]
&c_2(\true, \boldsymbol \lambda) = -\frac{1}{2} \mathbf{D}\Delta[\true{}^{1/2}]\cdot\Delta[\boldsymbol \xi^{-2}]\cdot\Delta[\boldsymbol \lambda - \true]. \nonumber
\end{align}

\vspace{1mm}

\noindent For every realization $\boldsymbol y  \in \mathscr{E}$, the MLE $\hat{\boldsymbol \lambda}= \hat{\boldsymbol \lambda}(\boldsymbol y)$ and the Lagrange multipliers $\hat{\boldsymbol \alpha}=\hat{\boldsymbol \alpha}(\boldsymbol y)$ are the unique solution to (\ref{systemPnewMatrix_estimate}). It will be shown next that $\,\,c_1(\true, \hat{\boldsymbol \lambda}, \boldsymbol y) = o(\boldsymbol 1)$ and $c_2(\true, \hat{\boldsymbol \lambda}) = o(\boldsymbol 1)$ when $\|\true\| \to \infty$. Notice first, that Lemma \ref{newlemma} ensures that Lemma \ref{smallLemma} can be used. The latter implies that ${\boldsymbol \phi}$, ${\boldsymbol \psi}$, and ${\boldsymbol \xi}$ in the Taylor expansion (\ref{TaylorS}) satisfy:
\begin{align}\label{PhisAsymp}
\Delta[\true{}^3]\cdot\Delta[{\boldsymbol \phi}{}^{-3}] &= \mathbf{I}+ O(\|\true\|^{-1/2}), \nonumber \\
\Delta[\true{}^{2}]\cdot\Delta[{\boldsymbol \psi}{}^{-2}] &=\mathbf{I}+ O(\|\true\|^{-1/2}), \\
\quad \Delta[\true{}^{2}]\cdot \Delta[{\boldsymbol \xi}{}^{-2}] &=\mathbf{I}+ O(\|\true\|^{-1/2}). \nonumber
\end{align}
The substitution of (\ref{PhisAsymp}) into (\ref{c1c2}) leads to:
\begin{align*}
&c_1(\true, \hat{\boldsymbol \lambda}, \boldsymbol y)  = \Delta[\true{}^{-1/2}]\cdot \Delta[\true{}^{-1/2}]\cdot\Delta[\boldsymbol y - \true] 
-\Delta[\true{}^{-2}] \cdot \Delta[\hat{\boldsymbol \lambda} - \true] \cdot \Delta[\true{}^{3}] \cdot \Delta[\hat{\boldsymbol \phi}{}^{-3}] \cdot \Delta[\boldsymbol y] \nonumber \\
&+ \Delta[\true{}^{-1}] \cdot\Delta[\,\Delta[\true{}^{2}] \cdot \Delta[\hat{\boldsymbol \psi}{}^{-2}] \mathbf{D}'\hat{\boldsymbol \alpha} \,], \\
&\nonumber \\
&c_2(\true, \hat{\boldsymbol \lambda}) = -\frac{1}{2} \mathbf{D}\Delta[\true{}^{-3/2}] \cdot \Delta[\true{}^{2}] \cdot \Delta[\hat{\boldsymbol \xi}{}^{-2}]\cdot\Delta[\hat{\boldsymbol \lambda} - \true]. \nonumber 
\end{align*}
After accounting for (\ref{Op}) and (\ref{majorY}), one obtains that 
\begin{align*}
&c_1(\true, \hat{\boldsymbol \lambda}, \boldsymbol y) = \Delta[\true{}^{-1/2}] \cdot O(\boldsymbol 1) + \Delta[\true{}^{-1}] \cdot\left(\mathbf{I}+ O(\|\true\|^{-1/2}) \mathbf{D}'\hat{\boldsymbol \alpha}\right) \\
& - \Delta[\true{}^{-1}] \cdot\Delta[\hat{\boldsymbol \lambda} - \true] \cdot \left(\mathbf{I}+ \Delta[\true{}^{-1}]\Delta[\boldsymbol y - \true]\right) \cdot \left(\mathbf{I}+ O(\|\true\|^{-1/2})\right), \\
&\nonumber \\
&c_2(\true, \hat{\boldsymbol \lambda}) = -\frac{1}{2} \mathbf{D}\Delta[\true{}^{-3/2}]\cdot \Delta\left[(\mathbf{I}+ O(\|\true\|^{-1/2}))(\hat{\boldsymbol \lambda} - \true)\right] \nonumber \\
&= -\frac{1}{2} \mathbf{D}\Delta[\true{}^{-1}] \cdot \Delta\left[(\mathbf{I}+ O(\|\true\|^{-1/2}))\cdot \Delta[\true{}^{-1/2}](\hat{\boldsymbol \lambda} - \true)\right],
\end{align*}  
which shows that,  when $\|\true\| \to \infty$, both $\,c_1(\true, \hat{\boldsymbol \lambda}, \boldsymbol y)$ and $c_2(\true, \hat{\boldsymbol \lambda})$ are of the order $o(\boldsymbol 1)$, and thus, can be assumed arbitrary small. The matrix 
$$
\left[\begin{array}{ccr} \mathbf{I}&{ } & -\mathbf{H} \\ -\mathbf{H}' & { } & \boldsymbol 0\end{array}\right]
$$
is non-singular, and $c_1(\true, \hat{\boldsymbol \lambda}, \boldsymbol y)$ and $c_2(\true, \hat{\boldsymbol \lambda})$ may be assumed small enough, so that the matrix of the system (\ref{systemPnewMatrix_estimate}) is also non-singular. In this case, 
\begin{eqnarray}\label{systemPnewMatrix_estimate_inverse}
\left[\begin{array}{c}\Delta[\true{}^{-1/2}](\hat{\boldsymbol \lambda} - \true)\\
\hat{\boldsymbol \alpha} \end{array}\right] &=&\left[\begin{array}{rcr} \mathbf{I}+ c_1(\true, \hat{\boldsymbol \lambda}, \boldsymbol y) &{ } & -\mathbf{H} \\ -\mathbf{H}' + c_2(\true, \hat{\boldsymbol \lambda})& { } & \boldsymbol 0\end{array}\right]^{-1}\left[\begin{array}{c} 
\Delta[\true{}^{-1/2}](\boldsymbol y - \true)\\
\boldsymbol 0
\end{array}\right] \nonumber \\
&&{} \nonumber \\
&=& \left[\begin{array}{ccr} \mathbf{I} &{ } & -\mathbf{H} \\ -\mathbf{H}'& { } & \boldsymbol 0\end{array}\right]^{-1}\left[\begin{array}{c} 
\Delta[\true{}^{-1/2}](\boldsymbol y - \true)\\
\boldsymbol 0
\end{array}\right] + o(\boldsymbol 1).
\end{eqnarray}
\noindent It can be verified directly that:
\begin{equation}\label{InverseMx}
\left[\begin{array}{ccr} \mathbf{I} &{ } & -\mathbf{H} \\ -\mathbf{H}'& { } & \boldsymbol 0\end{array}\right]^{-1} =  \left[\begin{array}{cc}\mathbf{I} - \mathbf{H}(\mathbf H' \mathbf{H})^{-1}\mathbf{H}' & - \mathbf{H}(\mathbf H' \mathbf{H})^{-1} \\ - (\mathbf H' \mathbf{H})^{-1}\mathbf{H}'&  - (\mathbf H' \mathbf{H})^{-1}\end{array}\right]. 
\end{equation}
The formula in the second line of (\ref{systemPnewMatrix_estimate_inverse}) provides an approximation to the right hand side of the non-stochastic version of (\ref{likeEqf}), and now the solution to (\ref{systemPnewMatrix_estimate_inverse}) will be studied.  First, one derives from (\ref{systemPnewMatrix_estimate_inverse}), using (\ref{Hintens}), that 
\begin{align}\label{AsympNS}
\Delta[\true{}^{-1/2}](\hat{\boldsymbol \lambda} - \true) \quad = &\quad {\mathbf{R}}^*\Delta[\true{}^{-1/2}](\boldsymbol y - \true) + o(\boldsymbol 1),
\end{align} 
where
\vspace{-2mm}
\begin{align*}
{\mathbf{R}}^* &= \mathbf{I} - \Delta[\true{}^{-{1/2}}]\mathbf{D}'\left(\mathbf{D}\Delta[\true{}^{-{1}}]\mathbf{D}'\right)^{-1} \mathbf{D} \Delta[\true{}^{-{1/2}}].
\end{align*} 
It is verified next that ${\mathbf{R}}^* = \mathbf{I} - \mathbf{R} + o(\boldsymbol 1)$, where $\mathbf{R} = \mathbf{D}'\left(\mathbf{D} \mathbf{D}'\right)^{-1} \mathbf{D}$, and thus (\ref{AsympNS}) can be simplified:
\vspace{-2mm}
\begin{align*}
{\mathbf{R}^*} &= \mathbf{I} - \Delta[\true{}^{-{1/2}}]\mathbf{D}'\left(\mathbf{D}\Delta[\true{}^{-{1}}]\mathbf{D}'\right)^{-1} \mathbf{D} \Delta[\true{}^{-{1/2}}] 
\\
&= \mathbf{I} -  O(\|\true\|^{-{1/2}})
\cdot \mathbf{D}'\left(\mathbf{D} \cdot O(\|\true\|^{-{1}}) \cdot \mathbf{D}'\right)^{-1} \mathbf{D} \cdot O(\|\true\|^{-{1/2}}) \\
&=\mathbf{I} -  \mathbf{D}'\left(\mathbf{D} \mathbf{D}'\right)^{-1} \mathbf{D} \cdot O(\boldsymbol 1) = \mathbf{I} - \it{k} \mathbf{R} + o(\boldsymbol 1), \quad \mbox{ as } \|\true\| \to \infty,
\end{align*}
where $k$ is a constant. To show  that $\it{k} = 1$, observe first that ${\mathbf{R}^*}{\mathbf{R}}^*= {\mathbf{R}}^*$, and ${\mathbf{R}}{\mathbf{R}} = {\mathbf{R}}$. Hence,
\begin{align*}
{\mathbf{R}^*}{\mathbf{R}}^*\equiv (\mathbf{I} - \it{k} \mathbf{R})^2 = \mathbf{I} - 2 \it{k} \mathbf{R} +  \it{k}^2 \mathbf{R}^2 =  \mathbf{I} - 2 \it{k} \mathbf{R} +  \it{k}^2 \mathbf{R} =  \mathbf{I} + (\it{k}^2 - 2 \it{k} )\mathbf{R} \equiv \mathbf{I} - \it{k} \mathbf{R},
\end{align*}
which entails that $\it{k}^2 - 2 \it{k}  = -\it{k}$, or $\it{k} (\it{k} - 1) = 0$, and thus, as $\it{k} \neq 0$, $\it{k} = 1$. Therefore, (\ref{AsympNS}) can be rewritten as: 
\begin{align}\label{AsympNewNS}
\Delta[\true{}^{-1/2}](\hat{\boldsymbol \lambda} - \true) \quad = &\quad (\mathbf{I}-{\mathbf{R}})\Delta[\true{}^{-1/2}](\boldsymbol y - \true) + o(\boldsymbol 1).
\end{align} 
The result in (\ref{AsympNewNS}) applies to the solution $\hat{\boldsymbol \lambda}$ of (\ref{systemPnewMatrix_estimate_inverse}), which approximated the non-stochastic version of (\ref{likeEqf}).  But, because $\Delta[\true{}^{-1/2}](\boldsymbol Y - \true) = O_p(\boldsymbol 1)$, (\ref{AsympNewNS}) is also true in the stochastic sense: 
\begin{align}\label{Asymp}
\Delta[\true{}^{-1/2}](\hat{\boldsymbol \lambda} - \true) \quad = &\quad (\mathbf{I} - \mathbf{R})\Delta[\true{}^{-1/2}](\boldsymbol Y - \true) + o_p(\boldsymbol 1).
\end{align} 
By Lemma \ref{DataNormala}, $\Delta[\true{}^{-1/2}]({\boldsymbol Y} - \true)$ is asymptotically, as $\|\true\| \to \infty$,  normal with zero mean and covariance matrix $\mathbf{I}$. Therefore, using the delta-method, conditionally on $\boldsymbol Y \in \mathscr{E}$, $\Delta[\true{}^{-1/2}]({\hat{\boldsymbol \lambda}} - \true)$ is asymptotically normal with zero mean and covariance matrix equal to $(\mathbf{I}-\mathbf{R}) \mathbf{I}(\mathbf{I}- \mathbf{R}) = (\mathbf{I}-\mathbf{R})^2 = \mathbf{I}- \mathbf{R} $:
\begin{equation}\label{CondIntens}
\mathcal{L} \left(\Delta[\true{}^{-1/2}]({\hat{\boldsymbol \lambda}} - \true) \,\, {\boldsymbol{\mid}} \, \boldsymbol Y \in \mathscr{E} \right) \quad {\to} \quad \mathcal{N
}(\boldsymbol 0, \mathbf{I} -\mathbf{D}'(\mathbf{D}\mathbf{D}')^{-1}\mathbf{D}).
\end{equation}

If the assumption $\boldsymbol Y \in \mathscr{E}$ is not made and $\boldsymbol Y$ is allowed to have some zeros, the MLE may or may not exist. In this case, the augmented MLE  $\tilde{\boldsymbol{\lambda}}$ and Lagrange multipliers $\tilde{\boldsymbol \alpha}$, defined in (\ref{pseudoMLE_intens}), need to be used. Set $$\boldsymbol\xi^{(r)}\equiv \left[\begin{array}{c}\Delta[\true{}^{-1/2}](\tilde{\boldsymbol \lambda} - \true)\\
\tilde{\boldsymbol \alpha} \end{array}\right],$$
$$\boldsymbol\xi^{(r)}_1 \equiv \left[\begin{array}{c}\Delta[\true{}^{-1/2}](\hat{\boldsymbol \lambda} - \true)\\
\hat{\boldsymbol \alpha} \end{array}\right], \quad \boldsymbol\xi^{(r)}_2 \equiv  \left[\begin{array}{c} 
\Delta[\true{}^{-1/2}]((\boldsymbol 1'\boldsymbol Y/I)\boldsymbol 1- \true)\\
\boldsymbol 0
\end{array}\right].$$ 
Then, for an arbitrary  $\boldsymbol t \in \mathbb{R}^{I+K}$,
\begin{align}
\mathbb{P}(\boldsymbol\xi^{(r)} \leq \boldsymbol t) 
&= \mathbb{P}(\boldsymbol\xi^{(r)}_1 \leq \boldsymbol t \mid \boldsymbol Y \in \mathscr{E})\cdot\mathbb{P}(\boldsymbol Y \in\mathscr{E}) +  \mathbb{P}(\boldsymbol\xi^{(r)}_2\leq \boldsymbol t \mid \boldsymbol Y \in \bar{\mathscr{E}})\cdot\mathbb{P}(\boldsymbol Y \in \bar{\mathscr{E}}) \nonumber.
\end{align}
As $\|\true\| \to \infty$, the asymptotic distribution of $\boldsymbol\xi^{(r)}_1$, given that $\boldsymbol Y \in \mathscr{E}$, exists and can be derived from (\ref{CondIntens}).
Also notice that, because $Y_1, \dots, Y_I$ are independent, $$1 \geq \mathbb{P}(\boldsymbol Y \in \mathscr{E}) \geq \prod_i\mathbb{P}(Y_i > 0) = \prod_i (1-\exp(-\ti)) \geq (1- \exp (-\|\true\|))^I \to 1,\quad \mbox{as } \|\true\| \to \infty,$$
and therefore, $\,\mathbb{P}(\boldsymbol Y \in \mathscr{E}) \to 1$ and, respectively, $\mathbb{P}(\boldsymbol Y \in \bar{\mathscr{E}}) \to 0$, as  $\|\true\| \to \infty$. Finally, Corollary \ref{CorollaryIntensNormal}  implies that $\boldsymbol\xi^{(r)}_2 = O_p(\boldsymbol 1)$, and thus, $\mathbb{P}(\boldsymbol\xi^{(r)}_2\leq \boldsymbol t \mid \boldsymbol Y \in \bar{\mathscr{E}}) = O(1)$. Therefore, with probability tending to $1$, the cumulative distribution functions $\mathbb{P}(\boldsymbol\xi^{(r)} \leq \boldsymbol t)$ and $\mathbb{P}(\boldsymbol\xi^{(r)}_1 \leq \boldsymbol t \mid \boldsymbol Y \in \mathscr{E})$ have the same limit, which implies that, as $\|\true\| \to \infty$, the asymptotic distribution of $\boldsymbol\xi^{(r)}$ exists and coincides with the asymptotic distribution of $\boldsymbol\xi^{(r)}_1$, given $\boldsymbol Y \in \mathscr{E}$. Therefore, the asymptotic behaviour, established in (\ref{CondIntens}) for data for which the MLEs exist, will be the same for all sequences of realizations.
\end{proof}

\vspace{3mm}

The case of probabilities is considered next. Let $\ptrue = (\ptone, \dots, \ptI)' > \boldsymbol 0$,  such that $\sum_{i=1}^I \pti = 1$, and consider a sequence of random vectors $\{\boldsymbol Y_N\}_{N=1}^{\infty}$, such that for every $N \in \mathbb{N}$, $\boldsymbol Y_N=(Y_{N1}, \dots, Y_{NI})$ has a multinomial distribution: $\boldsymbol Y_N \sim Mult(N, \ptrue)$. The asymptotic results will be obtained under the standard assumption that $N \to \infty$.

\begin{lemma}\label{MultiN}
Let $\boldsymbol Y_N \sim Mult(N, \ptrue)$ and $\boldsymbol p_N = \frac{1}{N}\boldsymbol Y_N$. Then, as $N \to \infty$,
\begin{equation}\label{MultiNormal}
\mathcal{L}\left\{N^{1/2}({\boldsymbol p}_N - \ptrue)\right\} \quad {\to} \quad \mathcal{N}(\boldsymbol 0, \Delta[\ptrue] - \ptrue 
\cdot \ptrue').
\end{equation}
\end{lemma}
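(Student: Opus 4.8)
The plan is to exploit the standard representation of a multinomial count vector as a sum of $N$ independent and identically distributed single-trial vectors, and then invoke the multivariate central limit theorem. This parallels the structure of the proof of Lemma \ref{DataNormala}, except that the dependence among the coordinates of $\boldsymbol Y_N$ prevents a coordinatewise reduction and forces a genuinely multivariate argument.

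Concretely, I would write $\boldsymbol Y_N = \sum_{j=1}^N \boldsymbol Z_j$, where $\boldsymbol Z_1, \dots, \boldsymbol Z_N$ are independent and each $\boldsymbol Z_j$ takes the value $\boldsymbol e_i$, the $i$th standard basis vector of $\mathbb{R}^I$, with probability $\pti$. A direct computation gives $\mathbb{E}[\boldsymbol Z_j] = \sum_{i=1}^I \pti \boldsymbol e_i = \ptrue$ and, since $\boldsymbol e_i \boldsymbol e_i'$ is the matrix with a single unit entry in position $(i,i)$, $\mathbb{E}[\boldsymbol Z_j \boldsymbol Z_j'] = \sum_{i=1}^I \pti \boldsymbol e_i \boldsymbol e_i' = \Delta[\ptrue]$. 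Hence the common covariance matrix of the $\boldsymbol Z_j$ is $\Delta[\ptrue] - \ptrue \cdot \ptrue'$, exactly the matrix appearing in (\ref{MultiNormal}), and boundedness of the $\boldsymbol Z_j$ ensures all moments are finite.

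Now $\boldsymbol p_N = \frac{1}{N}\sum_{j=1}^N \boldsymbol Z_j$ is the sample mean of i.i.d.\ square-integrable vectors, so $N^{1/2}(\boldsymbol p_N - \ptrue)$ is its centred and scaled version, and the multivariate Lindeberg--L\'evy central limit theorem yields the claimed limit $\mathcal{N}(\boldsymbol 0, \Delta[\ptrue] - \ptrue \cdot \ptrue')$. If one prefers to rely only on the univariate theorem, as in Lemma \ref{DataNormala}, the Cram\'er--Wold device applies: for every fixed $\boldsymbol a \in \mathbb{R}^I$ the scalars $\boldsymbol a' \boldsymbol Z_j$ are i.i.d.\ with mean $\boldsymbol a' \ptrue$ and variance $\boldsymbol a'(\Delta[\ptrue] - \ptrue \cdot \ptrue')\boldsymbol a$, whence $N^{1/2}\boldsymbol a'(\boldsymbol p_N - \ptrue) \to \mathcal{N}(0, \boldsymbol a'(\Delta[\ptrue] - \ptrue \cdot \ptrue')\boldsymbol a)$, and this characterises the joint limit.

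There is no substantive obstacle; the result is classical. The only point deserving comment is that the limiting covariance $\Delta[\ptrue] - \ptrue \cdot \ptrue'$ is singular, annihilating $\boldsymbol 1$, which reflects the constraint $\boldsymbol 1'(\boldsymbol p_N - \ptrue) = 0$ holding for every $N$; the multivariate central limit theorem accommodates degenerate limits, so the degenerate normal is simply supported on the hyperplane $\{\boldsymbol x : \boldsymbol 1' \boldsymbol x = 0\}$. This coordinate dependence is precisely why the independence argument invoked at the end of the proof of Lemma \ref{DataNormala} cannot be reused verbatim in the multinomial case.
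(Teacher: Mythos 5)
Your proof is correct. The paper does not prove this lemma itself but simply cites Theorem 14.3-4 of Bishop, Fienberg and Holland, and your argument --- writing $\boldsymbol Y_N$ as a sum of i.i.d.\ single-trial indicator vectors, computing the covariance $\Delta[\ptrue] - \ptrue\cdot\ptrue'$, and applying the multivariate central limit theorem (or Cram\'er--Wold) --- is precisely the standard proof underlying that citation.
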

For a proof, see Theorem 14.3-4 in \cite*{BFH}.

\begin{corollary}\label{MultiO}
Let $\boldsymbol Y_N \sim Mult(N, \ptrue)$, and $\boldsymbol y_N$ be a realization of $\boldsymbol Y_N$. Then, as $N \to \infty$, 
\begin{align*}
N^{-1/2}({\boldsymbol Y}_N - N\ptrue) \quad &{=} \quad O_p(\boldsymbol 1),\\
N^{-1/2}({\boldsymbol y}_N - N\ptrue) \quad &{=} \quad O(\boldsymbol 1). \nonumber
\end{align*}
\end{corollary}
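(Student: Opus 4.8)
The plan is to read the corollary as the $O_p$/$O$ bookkeeping form of Lemma \ref{MultiN}, exactly mirroring the way Corollary \ref{CorollaryIntensNormal} is obtained from Lemma \ref{DataNormala} in the Poisson case. First I would record the algebraic identity
\[
N^{-1/2}(\boldsymbol Y_N - N\ptrue) \;=\; N^{1/2}\!\left(\tfrac{1}{N}\boldsymbol Y_N - \ptrue\right) \;=\; N^{1/2}(\boldsymbol p_N - \ptrue),
\]
so that the quantity to be controlled is literally the one whose limiting law is identified in Lemma \ref{MultiN}. The first assertion then follows from the standard fact that any sequence of random vectors converging in distribution to a proper (tight) limit is bounded in probability: since $\mathcal{L}\{N^{1/2}(\boldsymbol p_N - \ptrue)\} \to \mathcal{N}(\boldsymbol 0, \Delta[\ptrue] - \ptrue\ptrue')$, the sequence is $O_p(\boldsymbol 1)$, which is the first line.

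For the second line I would spell out what $O(\boldsymbol 1)$ means for a realization. Unwinding the definition of $O_p(\boldsymbol 1)$, for every $\varepsilon > 0$ there exist $M_\varepsilon$ and $N_\varepsilon$ with $\mathbb{P}(\|N^{-1/2}(\boldsymbol Y_N - N\ptrue)\| > M_\varepsilon) < \varepsilon$ for all $N \geq N_\varepsilon$. Hence the realizations $\boldsymbol y_N$ satisfying $\|N^{-1/2}(\boldsymbol y_N - N\ptrue)\| \leq M_\varepsilon$ form an event of probability at least $1 - \varepsilon$, and along any such sequence of realizations the deterministic bound $N^{-1/2}(\boldsymbol y_N - N\ptrue) = O(\boldsymbol 1)$ holds. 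This is precisely the form in which the bound is used later: the non-stochastic Taylor expansions in the proof of Theorem \ref{MLENormala} are carried out for a fixed data vector obeying this deterministic bound, and the passage back to the stochastic conclusion rests on the event carrying probability tending to one.

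The argument is almost entirely routine, and the only point requiring care --- the main obstacle --- is the interpretation of the second statement. A fixed realization path need not be bounded in the stated scaling: by the law of the iterated logarithm the coordinates of $N^{-1/2}(\boldsymbol Y_N - N\ptrue)$ fluctuate on the order of $\sqrt{\log\log N}$ almost surely, so the claim cannot be read as an almost-sure statement about a single path. Accordingly I would state explicitly that $O(\boldsymbol 1)$ for a realization is to be understood on a set of realizations whose probability can be made arbitrarily close to one, which is exactly what the subsequent non-stochastic analysis requires, and is the same convention already adopted for the Poisson Corollary \ref{CorollaryIntensNormal}.
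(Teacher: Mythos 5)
Your proof is correct and follows the same (implicit) route as the paper, which states this corollary without proof as an immediate consequence of Lemma \ref{MultiN}: convergence in distribution implies tightness, hence the $O_p(\boldsymbol 1)$ bound, and the realization-level $O(\boldsymbol 1)$ statement is the paper's standing convention of working on an event of probability arbitrarily close to one. Your explicit caveat about the law of the iterated logarithm correctly identifies why the second line cannot be read pathwise, and matches how the non-stochastic bounds are actually used in the proof of Theorem \ref{MLENormalProba1}.
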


As in the case of intensities, the MLE under a relational model for probabilities may or may not exist \citep{KRextended}, and in order to deal with this situation, an augmented MLE and augmented Lagrange multipliers are defined. Let $\mathscr{E}_N$ denote the set of values of $\boldsymbol Y_N$, for which the MLE exist, and $\bar{\mathscr{E}}_N$ denote its complement in $\mathbb{R}_{\geq 0}$. Set

\begin{equation}\label{pseudoMLE_prob}
\tilde{\boldsymbol p}_N =\left\{\begin{array}{ll}
 \hat{\boldsymbol p}_N,   & \mbox{ if } \boldsymbol Y_N \in \mathscr{E}_N, 
 \\
 (1/I)\boldsymbol 1,  & \mbox{ if } \boldsymbol Y_N \in \bar{\mathscr{E}}_N, 
 \end{array} \right.   \qquad (\tilde{\alpha_0}, \tilde{\boldsymbol \alpha}') =\left\{\begin{array}{ll}
 (\hat{\alpha}_0,  \hat{\boldsymbol \alpha}') & \mbox{ if } \boldsymbol Y_N \in \mathscr{E}_N, 
 \\
 (0, \boldsymbol 0),  & \mbox{ if } \boldsymbol Y_N \in \bar{\mathscr{E}}_N. 
 \end{array} \right.   
\end{equation} 
Notice that the normalization, $\boldsymbol 1'\tilde{\boldsymbol p}_N = 1$, holds.

\begin{theorem}\label{MLENormalProba1} 
Let $RM_{\boldsymbol p}(\mathbf{A})$ be a relational model for probabilities, $\mathbf{D}$ be a kernel basis matrix, $\ptrue \in RM_{\boldsymbol p}(\mathbf{A})$, and let $\boldsymbol Y_N \sim Mult(N, \ptrue)$ be the observations. Let $\hat{\boldsymbol p}_N$ be the MLE of $\ptrue$ under $RM_{\boldsymbol p}(\mathbf{A})$  and $\tilde{\boldsymbol p}_N$ be the augmented MLE. Then, as $N \to \infty$,
\begin{enumerate}[(i)]
\item $\mathcal{L} \left(N^{1/2}(\hat{\boldsymbol p}_N - \ptrue) \,\, {\boldsymbol{\mid}} \, \boldsymbol Y \in \mathscr{E} \right) \quad {\to}  \quad \mathcal{N
}(\boldsymbol 0, \mathbf{M}\boldsymbol \Sigma \mathbf{M}'),$\\
\item $\mathcal{L} \left(N^{1/2}({\tilde{\boldsymbol p}}_N - \ptrue) \right) \quad {\to} \quad\mathcal{N
}(\boldsymbol 0, \mathbf{M}\boldsymbol \Sigma \mathbf{M}').
$ 
\end{enumerate}
where 
\vspace{-4mm}
\begin{eqnarray}\label{Sigma}
\boldsymbol \Sigma &=& \Delta[\ptrue] - \ptrue \cdot \ptrue' \nonumber \\
\mathbf{M} &=& \mathbf{I} - \Delta[\ptrue] \cdot \mathbf{H}(\mathbf{H}'
\Delta[\ptrue]\mathbf{H})^{-1}\mathbf{H}',\\
\mathbf{H} &=&  \left(\boldsymbol 1, \Delta[\ptrue{}^{-1}]\mathbf{D}'\right).\nonumber
\end{eqnarray}

\end{theorem}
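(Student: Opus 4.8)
The plan is to follow the same route as the proof of Theorem \ref{MLENormala}, adapting it to multinomial sampling and to the presence of the normalization constraint. Under $RM_{\boldsymbol p}(\mathbf{A})$ the MLE $\hat{\boldsymbol p}_N$ maximizes the multinomial log-likelihood $\boldsymbol y_N'\log\boldsymbol p$ subject both to the model constraints $\mathbf{D}\log\boldsymbol p = \boldsymbol 0$ and to the normalization $\boldsymbol 1'\boldsymbol p = 1$. First I would form the Lagrange function $L(\boldsymbol p,\alpha_0,\boldsymbol\alpha,\boldsymbol y_N) = \boldsymbol y_N'\log\boldsymbol p - \alpha_0(\boldsymbol 1'\boldsymbol p - 1) + \boldsymbol\alpha'\mathbf{D}\log\boldsymbol p$, in which the scalar $\alpha_0$ is the multiplier attached to the normalization and $\boldsymbol\alpha$ plays the role of the multipliers in (\ref{likeEqf}). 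Differentiating with respect to $\boldsymbol p$, $\alpha_0$, and $\boldsymbol\alpha$ gives the analogue of the stationarity system (\ref{likeEqf}); the crucial structural difference is that the stacked Jacobian of the two families of constraints, evaluated at $\ptrue$, is exactly the matrix $\mathbf{H} = (\boldsymbol 1,\,\Delta[\ptrue{}^{-1}]\mathbf{D}')$ appearing in (\ref{Sigma}), the column $\boldsymbol 1$ being contributed by the normalization and the block $\Delta[\ptrue{}^{-1}]\mathbf{D}'$ by $\mathbf{D}\log\boldsymbol p = \boldsymbol 0$.

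Next I would Taylor-expand the non-stochastic version of this system around $\ptrue$, exactly as in (\ref{TaylorS})--(\ref{system111}), scaling the first block by $N^{1/2}$ rather than by $\Delta[\true{}^{1/2}]$, because in the multinomial case the relevant rate is the square root of the sample size. This produces a bordered matrix equation in the unknowns $N^{1/2}(\hat{\boldsymbol p}_N - \ptrue)$ and $(\hat\alpha_0,\hat{\boldsymbol\alpha}')'$ of the same shape as (\ref{systemPnewMatrix_estimate}), with $\mathbf{H}$ in the off-diagonal blocks and perturbation terms $c_1,c_2$ collecting the second-order remainders. To discard these remainders I need the probability-case analogues of Lemma \ref{newlemma} and Lemma \ref{smallLemma}, namely that $N^{1/2}(\hat{\boldsymbol p}_N - \ptrue) = O_p(\boldsymbol 1)$ and that the diagonal ratio matrices built from the intermediate Taylor points tend to $\mathbf{I}$. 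Here these are easier to obtain than in the intensity case: since $\ptrue$ is fixed and interior to the simplex, $\boldsymbol p_N \to \ptrue$, the intermediate points lie between $\boldsymbol p_N$ and $\ptrue$ and hence also converge to $\ptrue$, so $c_1,c_2$ are $o_p(\boldsymbol 1)$ and the system matrix is invertible with probability tending to $1$.

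I would then invert the limiting bordered matrix, as in (\ref{InverseMx})--(\ref{AsympNewNS}). Its top-left block equals $\Delta[\ptrue] - \Delta[\ptrue]\mathbf{H}(\mathbf{H}'\Delta[\ptrue]\mathbf{H})^{-1}\mathbf{H}'\Delta[\ptrue]$, reflecting that the multinomial information per observation is $\Delta[\ptrue{}^{-1}]$ rather than $\mathbf{I}$; composing this block with the $\Delta[\ptrue{}^{-1}]$ weighting carried by the score term yields the linear representation $N^{1/2}(\hat{\boldsymbol p}_N - \ptrue) = \mathbf{M}\,N^{1/2}(\boldsymbol p_N - \ptrue) + o_p(\boldsymbol 1)$ with $\mathbf{M} = \mathbf{I} - \Delta[\ptrue]\mathbf{H}(\mathbf{H}'\Delta[\ptrue]\mathbf{H})^{-1}\mathbf{H}'$. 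An idempotency argument of the type used to pin down $k=1$ after (\ref{AsympNS}) confirms this form. By Lemma \ref{MultiN}, $N^{1/2}(\boldsymbol p_N - \ptrue) \to \mathcal{N}(\boldsymbol 0,\boldsymbol\Sigma)$ with $\boldsymbol\Sigma = \Delta[\ptrue] - \ptrue\cdot\ptrue'$, and the delta method then delivers part (i), the conditional covariance being $\mathbf{M}\boldsymbol\Sigma\mathbf{M}'$.

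Finally, part (ii) follows by the same mixture argument as at the end of the proof of Theorem \ref{MLENormala}. Because $\ptrue$ is interior, every cell count is positive with probability tending to $1$, so $\mathbb{P}(\boldsymbol Y_N \in \mathscr{E}_N) \to 1$ and $\mathbb{P}(\boldsymbol Y_N \in \bar{\mathscr{E}}_N) \to 0$; since on $\bar{\mathscr{E}}_N$ the augmented estimator in (\ref{pseudoMLE_prob}) is bounded and $N^{1/2}(\tilde{\boldsymbol p}_N - \ptrue)$ is $O_p(\boldsymbol 1)$ there, the unconditional law of $N^{1/2}(\tilde{\boldsymbol p}_N - \ptrue)$ has the same limit as the conditional law in part (i). I expect the main obstacle to be the bookkeeping of the inversion with the extra normalization border, and verifying that the resulting linear map simplifies exactly to $\mathbf{M}$ with covariance $\mathbf{M}\boldsymbol\Sigma\mathbf{M}'$; establishing the probability-case proximity results that license dropping the second-order terms is the other point requiring care, though it is lighter than in the Poisson setting.
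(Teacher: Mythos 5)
Your proposal is correct and follows essentially the same route as the paper's proof: a Lagrangian incorporating the normalization constraint, Taylor expansion of the stationarity system around $\ptrue$, inversion of the bordered matrix built from $\mathbf{H} = (\boldsymbol 1, \Delta[\ptrue{}^{-1}]\mathbf{D}')$ to obtain the linear representation $N^{1/2}(\hat{\boldsymbol p}_N - \ptrue) = \mathbf{M}\,N^{1/2}(\boldsymbol Y_N/N - \ptrue) + o_p(\boldsymbol 1)$, the delta method via Lemma \ref{MultiN}, and the mixture argument for the augmented MLE. The minor differences (omitting the $-N\log(\boldsymbol 1'\boldsymbol p)$ term from the Lagrangian, which vanishes on the constraint set and only relabels $\alpha_0$, and sketching rather than proving the probability-case proximity bounds, which the paper itself also omits) do not affect the argument.
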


\noindent Thereafter, to simplify the notation, $\boldsymbol Y \equiv \boldsymbol Y_N$, $\hat{\boldsymbol p} \equiv \hat{\boldsymbol p}_N$, $\tilde{\boldsymbol p} \equiv \tilde{\boldsymbol p}_N$, $\mathscr{E} \equiv \mathscr{E}_N$.


\begin{proof}

The relational model, with or without the overall effect, can be expressed as: 
\begin{eqnarray} \label{zetaconst}
\boldsymbol 1' \boldsymbol p -1 = 0,  \qquad \mathbf{D}\mbox{log } \boldsymbol p = \boldsymbol 0. 
\end{eqnarray}
If $\boldsymbol Y \in \mathscr{E}$, that is the MLE $\hat{\boldsymbol p}$ under the model (\ref{zetaconst}), exists. In this case,  $\hat{\boldsymbol p}$ is the unique point that maximizes the Langrangian:
\begin{equation}\label{ASlagr}
{L}(\boldsymbol p, \alpha_0, \boldsymbol{\alpha}, \boldsymbol Y) = \boldsymbol y' \mbox{log } \boldsymbol p - N\mbox{log }(\boldsymbol 1' \boldsymbol p) + \alpha_0(\boldsymbol 1' \boldsymbol p - 1) + \boldsymbol{\alpha}'\mathbf{D}\mbox{log } \boldsymbol{p},
\end{equation}
over the interior of the $I-1$ dimensional simplex $\{\boldsymbol p \geq \boldsymbol 0: \,\, \sum_{i =1}^I = 1\}$.  
Here $\alpha_0$ and $\boldsymbol \alpha = (\alpha_1, \dots, \alpha_{K})'$ are the Lagrange multipliers. The MLE is therefore the unique solution to the system:
\begin{align}\label{likeEqf_M}
&{\partial L}/{\partial \boldsymbol p} = \Delta[{\boldsymbol p}^{-1}]\boldsymbol Y - {N}/(\boldsymbol 1' \boldsymbol p)\cdot\boldsymbol 1 +
\alpha_0 \cdot \boldsymbol 1 + \Delta[\boldsymbol {p}^{-1}] \mathbf{D}' \boldsymbol \alpha= \boldsymbol 0, \nonumber\\
&{\partial L}/{\partial \alpha_0} = \boldsymbol 1'\boldsymbol p - 1 = 0,\\
&{\partial L}/{\partial \boldsymbol \alpha} = \mathbf{D} \log\boldsymbol p = \boldsymbol 0. \nonumber
\end{align}
The remaining proof is similar to the one of Theorem \ref{MLENormala}.  Given $\boldsymbol Y =\boldsymbol y$,  expand the functions ${\partial L}/{\partial \boldsymbol p}$, ${\partial L}/{\partial \alpha_0}$, and ${\partial L}/{\partial \boldsymbol \alpha}$ into Taylor series around $\ptrue$, and substitute these expansions into the non-stochastic version of (\ref{likeEqf_M}) :
\begin{align}\label{system1_M}
& \Delta[\ptrue{}^{-1}] \boldsymbol y -
\Delta[\ptrue{}^{-2}] \cdot \Delta[\boldsymbol y]\cdot({\boldsymbol p} - \ptrue) 
+ \Delta[\boldsymbol p - \ptrue] \cdot \Delta[\boldsymbol \phi^{-3}] \cdot \Delta[\boldsymbol y] \cdot (\boldsymbol p - \ptrue)\nonumber \\
& - N \cdot \boldsymbol 1 +\alpha_0 \cdot \boldsymbol 1 +\Delta[\ptrue{}^{-{1}}] \mathbf{D}'{\boldsymbol \alpha} - \Delta[\,\,\Delta[\boldsymbol \psi^{-2}] \mathbf{D}'\boldsymbol \alpha\,\,] \cdot (\boldsymbol p - \ptrue) = \boldsymbol 0, \nonumber \\[10pt]
&\boldsymbol 1'(\boldsymbol p - \ptrue) = 0,\\[10pt]
&\left\{\mathbf{D}\Delta[\ptrue{}^{-1}] - \frac{1}{2} \mathbf{D}\cdot\Delta[\boldsymbol \xi^{-2}]\cdot\Delta[\boldsymbol p - \ptrue]\right\}\cdot(\boldsymbol p - \ptrue) = \boldsymbol 0. \nonumber 
\end{align}
Here $\boldsymbol \phi = \boldsymbol \phi(\ptrue, \boldsymbol p, \boldsymbol y)$, $\boldsymbol \psi = \boldsymbol \psi(\ptrue, \boldsymbol p, \boldsymbol y)$, $\boldsymbol \xi = \boldsymbol \xi(\ptrue, \boldsymbol p, \boldsymbol y)$ are on the segment between $\boldsymbol p$ and $\ptrue$. 
Multiply the first equation by $N^{-1/2}$, and the second and the third equations by $N^{1/2}$, and rearrange the terms in the system (\ref{system1_M}), then rewrite it in the matrix form:
\begin{eqnarray}\label{systemPnewMatrix_estimate_M}
& &\left[\begin{array}{c} 
N^{1/2}\Delta[\ptrue{}^{-1}](\boldsymbol y/N - \ptrue)\\
0\\
\boldsymbol 0
\end{array}\right] \\
&{}&\nonumber \\
& =& \left[\begin{array}{crc} \Delta[\ptrue{}^{-1}] + b_1(\ptrue, \boldsymbol p, \boldsymbol y) & { -\boldsymbol 1 } & -\Delta[\ptrue{}^{-1}]\mathbf{D}' \\
-\boldsymbol 1' & 0 & \boldsymbol 0\\
-\mathbf{D}\Delta[\ptrue{}^{-1}] + b_2(\ptrue, \boldsymbol p)& 0 & \boldsymbol 0\end{array}\right]\left[\begin{array}{c}N^{1/2}({\boldsymbol p} - \ptrue)\\
N^{-1/2}\alpha_0\\
N^{-1/2}{\boldsymbol \alpha} \end{array}\right], \nonumber
\end{eqnarray}
where 
\begin{align*}
b_1(\ptrue, \boldsymbol p, \boldsymbol y) &= \Delta[\ptrue{}^{-2}]\cdot\Delta[\boldsymbol y/N - \ptrue] 
-\Delta[\boldsymbol p - \ptrue] \cdot \Delta[\boldsymbol \phi^{-3}]\cdot \Delta[\boldsymbol y/N] \nonumber\\
&+ N^{-1}\cdot\Delta[\Delta[\boldsymbol \psi^{-2}] \mathbf{D}'\boldsymbol \alpha] \\[10pt]
b_2(\ptrue, \boldsymbol p) &= \frac{1}{2} \mathbf{D}\Delta[\ptrue]\cdot\Delta[\boldsymbol \xi^{-2}]\cdot\Delta[\boldsymbol p - \ptrue].
\end{align*} 
Notice that (\ref{systemPnewMatrix_estimate_M}), being equivalent to (\ref{system1_M}), is also a non-stochastic version of (\ref{likeEqf_M}).  It can be shown that $b_1(\ptrue, \hat{\boldsymbol p}, \boldsymbol Y) = o(\boldsymbol 1)$ and $b_1(\ptrue, \hat{\boldsymbol p}) = o(\boldsymbol 1)$, as $N \to \infty$. The proof is omitted, as it is similar to the one  for $c_1$ and $c_2$ in Lemma \ref{MLENormala}. Thus, from (\ref{systemPnewMatrix_estimate_M}) one obtains that  
\begin{equation}\label{systemPnewMatrix_estimate_M1}  
{\small \left[\begin{array}{c}N^{1/2}(\hat{\boldsymbol p} - \ptrue)\\
N^{-1/2}\hat{\alpha}_0\\
N^{-1/2}\hat{\boldsymbol \alpha} \end{array}\right] = \left[\begin{array}{crc} \Delta[\ptrue{}^{-1}] & { -\boldsymbol 1 } & -\Delta[\ptrue{}^{-1}]\mathbf{D}' \\
-\boldsymbol 1' & 0 & \boldsymbol 0\\
 -\mathbf{D}\Delta[\ptrue{}^{-1}] & {0 } &  \boldsymbol 0\end{array}\right]^{-1}
\left[\begin{array}{c} 
N^{1/2}\Delta[\ptrue{}^{-1}](\boldsymbol Y/N - \ptrue)\\
0\\
\boldsymbol 0
\end{array}\right] + o(\boldsymbol 1). }
\end{equation}
It can be verified directly that
{\small
 \begin{eqnarray}\label{InverseMxProb}
&=& \left[\begin{array}{crc} \Delta[\ptrue{}^{-1}] &  { -\boldsymbol 1 } & -\Delta[\ptrue{}^{-1}]\mathbf{D}' \\
-\boldsymbol 1' &  0 &  \boldsymbol 0\\
 -\mathbf{D}\Delta[\ptrue{}^{-1}] &0& \boldsymbol 0\end{array}\right]^{-1}  \nonumber  \\
 &{}& \nonumber \\[-1pt]
  &=& \left[\begin{array}{lr} \Delta[\ptrue](\mathbf{I} - \mathbf{H}(\mathbf H' \Delta[\ptrue]\mathbf{H})^{-1}\mathbf{H}'\Delta[\ptrue]) &  -\Delta[\ptrue]\mathbf{H}(\mathbf H' \Delta[\ptrue]\mathbf{H})^{-1} \\ 
 &\\
 -(\mathbf H' \Delta[\ptrue]\mathbf{H})^{-1}\mathbf{H}'\Delta[\ptrue] & - (\mathbf H' \Delta[\ptrue]\mathbf{H})^{-1}\end{array}\right],
\end{eqnarray} }
where
\begin{align}\label{Hprobs}
&\mathbf{H} = \left(\boldsymbol 1, \Delta[\ptrue{}^{-1}]\mathbf{D}'\right).
\end{align}
\noindent The further argument proceeds as in the case of intensities.  Using (\ref{InverseMxProb}), one obtains from (\ref{systemPnewMatrix_estimate_M1}) that
\begin{equation}\label{AsympM}
N^{1/2}(\hat{\boldsymbol p} - \ptrue) = N^{1/2}\mathbf{S}\Delta[\ptrue{}^{-1}](\boldsymbol Y/N - \ptrue) + o(\boldsymbol 1), \qquad \mbox{as } N \to \infty,
\end{equation}
where
$$\mathbf{S} = \Delta[\ptrue]\cdot(\mathbf{I}- \mathbf{H}(\mathbf{H}'
\Delta[\ptrue]\mathbf{H})^{-1}\mathbf{H}'\Delta[\ptrue]).$$
Equivalently,
\begin{equation}\label{AsympM1}
N^{1/2}(\hat{\boldsymbol p} - \ptrue) = N^{1/2}\mathbf{M}(\boldsymbol Y/N - \ptrue) + o(\boldsymbol 1), \qquad \mbox{as } N \to \infty,
\end{equation}
where
$$\mathbf{M} = \mathbf{S}\Delta[\ptrue{}^{-1}] = \Delta[\ptrue]\cdot(\mathbf{I}- \mathbf{H}(\mathbf{H}'
\Delta[\ptrue]\mathbf{H})^{-1}\mathbf{H}'\Delta[\ptrue]) \cdot\Delta[\ptrue{}^{-1}]$$
$$=\mathbf{I} - \Delta[\ptrue] \cdot \mathbf{H}(\mathbf{H}'
\Delta[\ptrue]\mathbf{H})^{-1}\mathbf{H}'.$$
The result in (\ref{AsympM1}) applies to the solution $\hat{\boldsymbol p}$ of (\ref{systemPnewMatrix_estimate_M1}), which approximates the non-stochastic version of (\ref{likeEqf_M}). By Corollary \ref{MultiO},   $N^{1/2}(\boldsymbol Y/N - \ptrue) = O_p(\boldsymbol 1)$,  and thus, (\ref{AsympM1}) also holds in the stochastic sense:
\begin{equation}\label{AsympM1st}
N^{1/2}(\hat{\boldsymbol p} - \ptrue) = N^{1/2}\mathbf{M}(\boldsymbol Y/N - \ptrue) + o_p(\boldsymbol 1), \qquad \mbox{as } N \to \infty.
\end{equation}
As, by Lemma \ref{MultiN}, $N^{1/2}(\boldsymbol Y/N - \ptrue)$ is asymptotically normal with zero mean and covariance matrix $\boldsymbol \Sigma = \Delta[\ptrue] - \ptrue \cdot \ptrue'$, the asymptotic distribution of  $N^{1/2}(\hat{\boldsymbol p} - \ptrue)$, conditionally on $\boldsymbol Y \in \mathscr{E}$, is also normal, namely:
\begin{equation}\label{NormalityCondProb}
\mathcal{L} \left(N^{1/2}(\hat{\boldsymbol p}_N - \ptrue) \,\, {\boldsymbol{\mid}} \, \boldsymbol Y \in \mathscr{E} \right) \quad {\to}  \quad \mathcal{N
}(\boldsymbol 0, \mathbf{M}\boldsymbol \Sigma \mathbf{M}').
\end{equation}
If the assumption $\boldsymbol Y \in \mathscr{E}$ is not made, the MLE may or may not exist. In this case, the augmented  MLE and Langrange multipliers, defined in (\ref{pseudoMLE_prob}), need to be used. Similarly to the case of intensities, set $$\boldsymbol\xi^{(r)}\equiv \left[\begin{array}{c}N^{1/2}(\tilde{\boldsymbol p} - \ptrue)\\
N^{-1/2}\tilde{\alpha}_0\\
N^{-1/2}\tilde{\boldsymbol \alpha} \end{array}\right],$$
$$\boldsymbol\xi^{(r)}_1 \equiv \left[\begin{array}{c}N^{1/2}(\hat{\boldsymbol p} - \ptrue)\\
N^{-1/2}\hat{\alpha}_0\\
N^{-1/2}\hat{\boldsymbol \alpha} \end{array}\right], \quad \boldsymbol\xi^{(r)}_2 \equiv  \left[\begin{array}{c} 
N^{1/2}  (\boldsymbol 1/I - \ptrue)\\
0\\
\boldsymbol 0
\end{array}\right].$$ 
For an arbitrary  $\boldsymbol t \in \mathbb{R}^{I+K+1}$,
\begin{align}\label{CDFprob}
\mathbb{P}(\boldsymbol\xi^{(r)} \leq \boldsymbol t) 
&= \mathbb{P}(\boldsymbol\xi^{(r)}_1 \leq \boldsymbol t \mid \boldsymbol Y \in \mathscr{E})\cdot\mathbb{P}(\boldsymbol Y \in \mathscr{E}) +  \mathbb{P}(\boldsymbol\xi^{(r)}_2\leq \boldsymbol t \mid \boldsymbol Y \in \bar{\mathscr{E}})\cdot\mathbb{P}(\boldsymbol Y \in \bar{\mathscr{E}}).
\end{align}
Equation (\ref{NormalityCondProb}) implies that, as $N \to \infty$, the asymptotic distribution of $\boldsymbol\xi^{(r)}_1$, given that $\boldsymbol Y \in \mathscr{E}$, exists.
Also notice that $$\mathbb{P}(\boldsymbol Y \in \bar{\mathscr{E}}) \leq \mathbb{P}(\exists i \in \mathcal{I}: \,\, Y_i = 0) = O(( 1-\|\ptrue\|)^N) \to 0,\quad \mbox{as } N\to \infty,$$
and therefore, $N^{1/2}\mathbb{P}(\boldsymbol Y \in \bar{\mathscr{E}})  \to 0$ and $\,\mathbb{P}(\boldsymbol Y \in \mathscr{E}) \to 1$. Finally, because $(\boldsymbol 1/I - \ptrue)= O_p(\boldsymbol 1)$, $\,\,\mathbb{P}(\boldsymbol\xi^{(r)}_2\leq \boldsymbol t \mid \boldsymbol Y \in \bar{\mathscr{E}}) \mathbb{P}(\boldsymbol Y \in \bar{\mathscr{E}})= o(1)$. Therefore, (\ref{CDFprob}) entails that with probability tending to $1$, the cumulative distribution functions $\mathbb{P}(\boldsymbol\xi^{(r)} \leq \boldsymbol t)$ and $\mathbb{P}(\boldsymbol\xi^{(r)}_1 \leq \boldsymbol t \mid \boldsymbol Y \in \mathscr{E})$ have the same limit, which implies that, as $N \to \infty$, the asymptotic distribution of $\boldsymbol\xi^{(r)}$ exists and coincides with the asymptotic distribution of $\boldsymbol\xi^{(r)}_1$, given $\boldsymbol Y \in \mathscr{E}$. Therefore, the asymptotic behaviour, established in  (\ref{NormalityCondProb}) for data for which the MLEs exist, will be the same for all sequences of realizations. 
\end{proof}

\begin{corollary}\label{CovarianceOE}
If $RM_{\boldsymbol p}(\mathbf{A})$ is a model with the overall effect, the covariance matrix of the asymptotic distribution is equal to
$$ \boldsymbol \Sigma - \mathbf{D}'(\mathbf{D} \Delta[\ptrue{}^{-1}] \mathbf{D}')^{-1}  \mathbf{D}.$$
\end{corollary}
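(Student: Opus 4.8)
The plan is to take the asymptotic covariance matrix $\mathbf{M}\boldsymbol\Sigma\mathbf{M}'$ of Theorem \ref{MLENormalProba1}, substitute the explicit expressions for $\mathbf{M}$, $\boldsymbol\Sigma$, and $\mathbf{H}$, and simplify. The single structural fact I would use is the algebraic meaning of the overall effect: the model contains an overall effect exactly when $\boldsymbol 1$ lies in the row space of $\mathbf{A}$, and since the rows of the kernel basis matrix $\mathbf{D}$ span the orthogonal complement of that row space, this is equivalent to $\mathbf{D}\boldsymbol 1 = \boldsymbol 0$ (equivalently $\boldsymbol 1'\mathbf{D}' = \boldsymbol 0'$). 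This identity, together with the normalization $\boldsymbol 1'\ptrue = 1$, is what collapses all the cross-terms.

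First I would evaluate the inner matrix $\mathbf{H}'\Delta[\ptrue]\mathbf{H}$ blockwise, where $\mathbf{H} = (\boldsymbol 1, \Delta[\ptrue{}^{-1}]\mathbf{D}')$. The $(1,1)$ block is $\boldsymbol 1'\Delta[\ptrue]\boldsymbol 1 = \boldsymbol 1'\ptrue = 1$; the off-diagonal block is $\boldsymbol 1'\Delta[\ptrue]\Delta[\ptrue{}^{-1}]\mathbf{D}' = \boldsymbol 1'\mathbf{D}' = \boldsymbol 0'$ by the overall-effect identity; and the lower block is $\mathbf{D}\Delta[\ptrue{}^{-1}]\mathbf{D}'$. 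Hence $\mathbf{H}'\Delta[\ptrue]\mathbf{H}$ is block diagonal, so is its inverse, and substituting into the definition of $\mathbf{M}$ (using $\Delta[\ptrue]\boldsymbol 1 = \ptrue$) yields
$$\mathbf{M} = \mathbf{I} - \ptrue\boldsymbol 1' - \mathbf{D}'(\mathbf{D}\Delta[\ptrue{}^{-1}]\mathbf{D}')^{-1}\mathbf{D}\Delta[\ptrue{}^{-1}].$$

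Next I would compute $\mathbf{M}\boldsymbol\Sigma$. Because $\boldsymbol\Sigma\boldsymbol 1 = \ptrue - \ptrue(\ptrue'\boldsymbol 1) = \boldsymbol 0$, the rank-one term $\ptrue\boldsymbol 1'$ annihilates $\boldsymbol\Sigma$; and since $\Delta[\ptrue{}^{-1}]\boldsymbol\Sigma = \mathbf{I} - \boldsymbol 1\ptrue'$, the overall-effect identity gives $\mathbf{D}\Delta[\ptrue{}^{-1}]\boldsymbol\Sigma = \mathbf{D} - (\mathbf{D}\boldsymbol 1)\ptrue' = \mathbf{D}$. Writing $\mathbf{Q} = \mathbf{D}'(\mathbf{D}\Delta[\ptrue{}^{-1}]\mathbf{D}')^{-1}\mathbf{D}$, this gives the intermediate identity $\mathbf{M}\boldsymbol\Sigma = \boldsymbol\Sigma - \mathbf{Q}$.

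Finally I would post-multiply by $\mathbf{M}'$ and check that the two remaining terms vanish. The $\boldsymbol 1\ptrue'$ part of $\mathbf{M}'$ contributes nothing, since $(\boldsymbol\Sigma - \mathbf{Q})\boldsymbol 1 = \boldsymbol 0$ (using $\mathbf{Q}\boldsymbol 1 = \boldsymbol 0$), and the part involving $\Delta[\ptrue{}^{-1}]\mathbf{D}'(\mathbf{D}\Delta[\ptrue{}^{-1}]\mathbf{D}')^{-1}\mathbf{D}$ contributes nothing either, because a short computation using $\boldsymbol\Sigma\Delta[\ptrue{}^{-1}]\mathbf{D}' = \mathbf{D}'$ and the cancellation of $\mathbf{D}\Delta[\ptrue{}^{-1}]\mathbf{D}'$ against its inverse shows that left-multiplication of that factor by $\boldsymbol\Sigma$ and by $\mathbf{Q}$ both yield the same matrix $\mathbf{Q}$, so their difference is zero. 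Therefore $\mathbf{M}\boldsymbol\Sigma\mathbf{M}' = \boldsymbol\Sigma - \mathbf{Q}$, which is the asserted expression. The computations are routine linear algebra; the only genuine step is recognizing that ``overall effect present'' is precisely $\mathbf{D}\boldsymbol 1 = \boldsymbol 0$, after which every cross-term collapses, so I anticipate no real obstacle beyond careful bookkeeping.
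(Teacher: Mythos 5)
Your proposal is correct and follows essentially the same route as the paper: both exploit $\mathbf{D}\boldsymbol 1 = \boldsymbol 0$ to make $\mathbf{H}'\Delta[\ptrue]\mathbf{H}$ block diagonal, reduce $\mathbf{M}$ to $\mathbf{I} - \ptrue\boldsymbol 1' - \mathbf{D}'(\mathbf{D}\Delta[\ptrue{}^{-1}]\mathbf{D}')^{-1}\mathbf{D}\Delta[\ptrue{}^{-1}]$, and then collapse $\mathbf{M}\boldsymbol\Sigma\mathbf{M}'$ using the same identities $\boldsymbol\Sigma\boldsymbol 1 = \boldsymbol 0$, $\mathbf{Q}\boldsymbol 1 = \boldsymbol 0$, and $\boldsymbol\Sigma\Delta[\ptrue{}^{-1}]\boldsymbol\Sigma = \boldsymbol\Sigma$. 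The only cosmetic difference is that the paper factors $\mathbf{M} = (\boldsymbol\Sigma - \mathbf{Q})\Delta[\ptrue{}^{-1}]$ before multiplying out, whereas you compute $\mathbf{M}\boldsymbol\Sigma$ and then apply $\mathbf{M}'$; the substance is identical.
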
 

\noindent The proof is deferred to the Appendix. The formula may also be obtained using Eq.(4.6) in \cite{LangPoissMult}. 

\vspace{1mm}

In the next section, the asymptotic distributions of the test statistics are derived.

\section{Asymptotic distributions of the Pearson  and the Bregman statistics}\label{TestStatSection}

Let $X^2 (\boldsymbol u, \boldsymbol v) $ and $B(\boldsymbol u, \boldsymbol v) $ denote the Pearson and the Bregman statistics, respectively: 
\begin{align}
&X^2 (\boldsymbol u, \boldsymbol v) = \sum_{i \in \mathcal{I}} (u_i - v_i)^2/v_i  \label{PS}\\
&B(\boldsymbol u, \boldsymbol v) = 2\cdot\sum_{i \in \mathcal{I}}\{ u_i \log(u_i/v_i) - (u_i - v_i)\}, \label{BrS}
\end{align} 
Here $\boldsymbol u, \boldsymbol v > \boldsymbol 0$, but accepting the convention $0\log 0=0$ allows to extend the definitions to $\boldsymbol u \geq \boldsymbol 0$. It will be established that, whether or not a relational model includes the overall effect, the asymptotic distributions of these statistics, when $\boldsymbol u = \boldsymbol Y$  and $\boldsymbol v$ is either the conditional or augmented MLE,  are equivalent and equal to a chi-squared distribution.  Separate proofs will be given for the case of intensities and for the case of probabilities. Both proofs rely on the following lemmas (proved in the Appendix), showing the equivalence of these statistics for certain non-stochastic sequences.

\vspace{1mm}

\begin{lemma}\label{NonStochBregmanPearsona} 
Let $\boldsymbol t_r  \geq \boldsymbol 0$ and $\boldsymbol \lambda_r > \boldsymbol 0$ be such that 
$$
\Delta[\boldsymbol \lambda_r^{-1}](\boldsymbol t_r - \boldsymbol \lambda_r) = O(\|\boldsymbol \lambda_r\|^{-1/2}), \qquad \mbox{as } \|\boldsymbol \lambda_r\| \to \infty.
$$ 
Then
$$
B(\boldsymbol t_r, \boldsymbol \lambda_r) - X^2(\boldsymbol t_r, \boldsymbol \lambda_r) = o(1), \qquad \mbox{as } \|\boldsymbol \lambda_r\| \to \infty.
$$
\end{lemma}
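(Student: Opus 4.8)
The plan is to reduce the statement to a cell-wise Taylor estimate. Writing $\epsilon_{r,i} = (t_{r,i} - \lambda_{r,i})/\lambda_{r,i}$, the hypothesis $\Delta[\boldsymbol \lambda_r^{-1}](\boldsymbol t_r - \boldsymbol \lambda_r) = O(\|\boldsymbol \lambda_r\|^{-1/2})$ says exactly that $\max_i |\epsilon_{r,i}| = O(\|\boldsymbol \lambda_r\|^{-1/2})$, so in particular $\epsilon_{r,i} \to 0$ uniformly in $i$ as $\|\boldsymbol \lambda_r\| \to \infty$. Substituting $t_{r,i} = \lambda_{r,i}(1 + \epsilon_{r,i})$ into the definitions (\ref{PS}) and (\ref{BrS}), the $i$-th summand of the Pearson statistic becomes $\lambda_{r,i}\epsilon_{r,i}^2$, and the $i$-th summand of the Bregman statistic becomes $2\lambda_{r,i}[(1 + \epsilon_{r,i})\log(1 + \epsilon_{r,i}) - \epsilon_{r,i}]$. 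Both statistics thus take the common form $\sum_i \lambda_{r,i}\, g(\epsilon_{r,i})$ for a fixed smooth $g$, and the whole difference reduces to $B(\boldsymbol t_r, \boldsymbol \lambda_r) - X^2(\boldsymbol t_r, \boldsymbol \lambda_r) = \sum_i \lambda_{r,i}\, h(\epsilon_{r,i})$, where $h(x) = 2[(1+x)\log(1+x) - x] - x^2$.

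The key analytic fact is that $h$ vanishes to third order at the origin. Indeed $h'(x) = 2\log(1+x) - 2x$ and $h''(x) = -2x/(1+x)$, so that $h(0) = h'(0) = h''(0) = 0$, while $h'''(x) = -2/(1+x)^2$. By Taylor's theorem with Lagrange remainder there is, for each $i$, a point $\zeta_{r,i}$ between $0$ and $\epsilon_{r,i}$ with $h(\epsilon_{r,i}) = -\tfrac{1}{3}\epsilon_{r,i}^3/(1+\zeta_{r,i})^2$; since the $\epsilon_{r,i}$ are eventually bounded in absolute value by, say, $1/2$, this yields a uniform bound $|h(\epsilon_{r,i})| \le C\,|\epsilon_{r,i}|^3$ with $C$ an absolute constant. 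The quadratic parts of the two statistics therefore cancel exactly, and only the cubic remainder survives, giving $|B(\boldsymbol t_r, \boldsymbol \lambda_r) - X^2(\boldsymbol t_r, \boldsymbol \lambda_r)| \le C \sum_i \lambda_{r,i}|\epsilon_{r,i}|^3$.

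The main obstacle is to show that this residual sum is $o(1)$, since the crude bound $\lambda_{r,i}|\epsilon_{r,i}|^3 \le C'\lambda_{r,i}\|\boldsymbol \lambda_r\|^{-3/2}$ leaves a lingering factor $\lambda_{r,i}$ that need not be controlled by $\|\boldsymbol \lambda_r\|$. The remedy is to spend only one power of the convergence rate and keep the remaining factor in Pearson form: writing $\lambda_{r,i}|\epsilon_{r,i}|^3 = |\epsilon_{r,i}|\cdot(\lambda_{r,i}\epsilon_{r,i}^2)$ and using $\max_i |\epsilon_{r,i}| = O(\|\boldsymbol \lambda_r\|^{-1/2})$ gives $\sum_i \lambda_{r,i}|\epsilon_{r,i}|^3 \le O(\|\boldsymbol \lambda_r\|^{-1/2})\cdot X^2(\boldsymbol t_r, \boldsymbol \lambda_r)$, where each Pearson term is just $\lambda_{r,i}\epsilon_{r,i}^2 = (t_{r,i} - \lambda_{r,i})^2/\lambda_{r,i}$. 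The conclusion then follows once one checks that the vanishing factor $\|\boldsymbol \lambda_r\|^{-1/2}$ genuinely dominates the surviving $X^2$ factor; I expect that step — the interplay between the half-power gained by passing from second to third order and the possible growth of the Pearson sum along the given sequence — to be the only delicate point, everything else being the routine Taylor bookkeeping above.
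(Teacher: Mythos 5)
Your reduction to the single function $h(x) = 2\left[(1+x)\log(1+x)-x\right]-x^2$, the computation $h(0)=h'(0)=h''(0)=0$, $h'''(x)=-2/(1+x)^2$, and the resulting bound $|h(\epsilon_{r,i})|\le C|\epsilon_{r,i}|^3$ are exactly the Taylor bookkeeping the paper performs (it expands $B$ cell by cell and finds that the quadratic term reproduces the Pearson summand, leaving a cubic remainder). The problem is the step you flag yourself, and it is a genuine gap: the detour through $X^2$ cannot be closed. Using only the uniform rate $\max_i|\epsilon_{r,i}|=O(\|\boldsymbol\lambda_r\|^{-1/2})$ with $\|\boldsymbol\lambda_r\|=\min_i\lambda_{r,i}$, the Pearson sum $X^2=\sum_i\lambda_{r,i}\epsilon_{r,i}^2$ is only $O\bigl(\sum_i\lambda_{r,i}/\min_j\lambda_{r,j}\bigr)$, which is unbounded when the components of $\boldsymbol\lambda_r$ grow at incommensurate rates; for instance with $I=2$, $\lambda_{r,1}=r$, $\lambda_{r,2}=r^{2}$ and $\epsilon_{r,2}=r^{-1/2}$, the single remainder term $\lambda_{r,2}|\epsilon_{r,2}|^3=r^{1/2}$ diverges, and since $h$ is one-signed near the origin there is no cancellation. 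So the factor $\|\boldsymbol\lambda_r\|^{-1/2}$ does not, in general, dominate $X^2$, and your last inequality cannot yield $o(1)$.

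What closes the argument --- and what the paper actually does --- is to use the rate cell by cell in terms of the cell's \emph{own} intensity: the paper's proof treats each $i$ separately and takes $|t_{r,i}/\lambda_{r,i}-1|=O(\lambda_{r,i}^{-1/2})$, which is the bound that actually holds everywhere the lemma is invoked (there $\Delta[\true{}^{-1/2}](\boldsymbol Y-\true)=O_p(\boldsymbol 1)$, i.e.\ $\epsilon_i=O(\lambda_i^{-1/2})$ componentwise, a strictly stronger statement than the min-norm bound). With that reading, each remainder satisfies $\lambda_{r,i}|h(\epsilon_{r,i})|\le C\lambda_{r,i}|\epsilon_{r,i}|^3=|\epsilon_{r,i}|\cdot\lambda_{r,i}\epsilon_{r,i}^2=O(\lambda_{r,i}^{-1/2})\cdot O(1)=O(\|\boldsymbol\lambda_r\|^{-1/2})$, and summing over the fixed finite index set gives $o(1)$ with no reference to $X^2$ at all. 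You should therefore replace the global bound $\sum_i\lambda_{r,i}|\epsilon_{r,i}|^3\le O(\|\boldsymbol\lambda_r\|^{-1/2})\,X^2$ by this per-cell estimate; as your own analysis shows, the conclusion is simply not true under the literal min-norm hypothesis alone when the coordinates of $\boldsymbol\lambda_r$ are allowed to grow at arbitrarily different rates.
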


\vspace{1mm}

\begin{lemma}\label{NonStochBregmanPearson2a}
Let $\boldsymbol t_r  \geq \boldsymbol 0$ and $\boldsymbol \kappa_r > \boldsymbol 0$ satisfy: 
\begin{align*}
&\Delta[\boldsymbol \lambda_r^{-1}](\boldsymbol t_r - \boldsymbol \lambda_r) = O(\|\boldsymbol \lambda_r\|^{-1/2}), \quad \mbox{ and } \quad
\Delta[\boldsymbol \lambda_r^{-1}](\boldsymbol \kappa_r - \boldsymbol \lambda_r) = O(\|\boldsymbol \lambda_r\|^{-1/2}), 
\end{align*}
as $\|\boldsymbol \lambda_r\| \to \infty$.
Then,
$$
B(\boldsymbol t_r, \boldsymbol \kappa_r) - X^2(\boldsymbol t_r, \boldsymbol \kappa_r) = o(1), \qquad \mbox{as } \|\boldsymbol \lambda_r\| \to \infty.
$$
\end{lemma}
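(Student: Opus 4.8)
The plan is to deduce Lemma \ref{NonStochBregmanPearson2a} from Lemma \ref{NonStochBregmanPearsona} by treating $\boldsymbol{\kappa}_r$, rather than $\boldsymbol{\lambda}_r$, as the reference sequence about which $\boldsymbol{t}_r$ concentrates. Neither $B(\boldsymbol{t}_r,\boldsymbol{\kappa}_r)$ nor $X^2(\boldsymbol{t}_r,\boldsymbol{\kappa}_r)$ involves $\boldsymbol{\lambda}_r$, so $\boldsymbol{\lambda}_r$ serves only as an auxiliary centre around which both $\boldsymbol{t}_r$ and $\boldsymbol{\kappa}_r$ cluster. Consequently it suffices to verify that the pair $(\boldsymbol{t}_r,\boldsymbol{\kappa}_r)$ satisfies exactly the hypothesis that $(\boldsymbol{t}_r,\boldsymbol{\lambda}_r)$ satisfies in Lemma \ref{NonStochBregmanPearsona}, namely that $\|\boldsymbol{\kappa}_r\|\to\infty$ and $\Delta[\boldsymbol{\kappa}_r^{-1}](\boldsymbol{t}_r-\boldsymbol{\kappa}_r)=O(\|\boldsymbol{\kappa}_r\|^{-1/2})$; the conclusion then follows at once by invoking Lemma \ref{NonStochBregmanPearsona} with $\boldsymbol{\kappa}_r$ in the role of $\boldsymbol{\lambda}_r$.

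Both verifications rest on the componentwise comparability of $\boldsymbol{\kappa}_r$ and $\boldsymbol{\lambda}_r$. The second hypothesis gives $(\kappa_{ri}-\lambda_{ri})/\lambda_{ri}=O(\|\boldsymbol{\lambda}_r\|^{-1/2})$, and, $I$ being fixed, this bound is uniform in $i$; hence $\kappa_{ri}/\lambda_{ri}\to1$ uniformly, so $\kappa_{ri}\asymp\lambda_{ri}$ for each $i$ and, taking minima, $\|\boldsymbol{\kappa}_r\|\asymp\|\boldsymbol{\lambda}_r\|\to\infty$. For the proximity of $\boldsymbol{t}_r$ to $\boldsymbol{\kappa}_r$ I would write, coordinatewise,
\[
\frac{t_{ri}-\kappa_{ri}}{\kappa_{ri}}=\frac{\lambda_{ri}}{\kappa_{ri}}\left(\frac{t_{ri}-\lambda_{ri}}{\lambda_{ri}}-\frac{\kappa_{ri}-\lambda_{ri}}{\lambda_{ri}}\right),
\]
where the prefactor $\lambda_{ri}/\kappa_{ri}\to1$ and the bracket is $O(\|\boldsymbol{\lambda}_r\|^{-1/2})$ by the two hypotheses. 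Since $\|\boldsymbol{\kappa}_r\|\asymp\|\boldsymbol{\lambda}_r\|$, this yields $\Delta[\boldsymbol{\kappa}_r^{-1}](\boldsymbol{t}_r-\boldsymbol{\kappa}_r)=O(\|\boldsymbol{\kappa}_r\|^{-1/2})$, as needed. Alternatively, one may argue directly: the Taylor expansion used to prove Lemma \ref{NonStochBregmanPearsona}, organized around the concentration of $\boldsymbol{t}_r$ about its centre, applies verbatim with $\kappa_{ri}$ in place of $\lambda_{ri}$, the leading $\tfrac12\varepsilon^2$ term of $(1+\varepsilon)\log(1+\varepsilon)-\varepsilon$ reproducing $X^2(\boldsymbol{t}_r,\boldsymbol{\kappa}_r)$ and the cubic remainder, summed over the finitely many coordinates, being $o(1)$.

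The only genuine subtlety, and the step I would be most careful about, is the control of the denominators: the argument needs $\kappa_{ri}$ to stay comparable to $\lambda_{ri}$, so that dividing by $\kappa_{ri}$ neither inflates the error nor is disturbed by $\boldsymbol{\kappa}_r$ drifting toward the boundary of the positive orthant. This comparability — and in particular that the ratios $\kappa_{ri}/\lambda_{ri}$ remain bounded away from $0$ — is exactly what the uniformity of the $O(\cdot)$ over the fixed number $I$ of coordinates, together with $(\kappa_{ri}-\lambda_{ri})/\lambda_{ri}\to0$, supplies. Once comparability is established the reduction to Lemma \ref{NonStochBregmanPearsona} is immediate and requires no new estimate.
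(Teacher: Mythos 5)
Your proposal is correct, and it takes a genuinely different route from the paper. The paper proves Lemma \ref{NonStochBregmanPearson2a} by a direct, roughly two-page Taylor expansion of $B(t,\kappa)-X^2(t,\kappa)$ organized around the common centre $\lambda$: it writes everything in terms of $t/\lambda-1$ and $\kappa/\lambda-1$, regroups the resulting quadratic and cross terms, and only at the very end substitutes the order bounds (including the same observation you make, that $1-\lambda/\kappa=O(\lambda^{-1/2})$). You instead reduce the lemma to Lemma \ref{NonStochBregmanPearsona} by changing the reference sequence from $\boldsymbol\lambda_r$ to $\boldsymbol\kappa_r$, which requires only the two transfer facts you verify: $\|\boldsymbol\kappa_r\|\asymp\|\boldsymbol\lambda_r\|\to\infty$ (uniformity over the fixed, finite index set is exactly what is needed here, and you say so) and $\Delta[\boldsymbol\kappa_r^{-1}](\boldsymbol t_r-\boldsymbol\kappa_r)=O(\|\boldsymbol\kappa_r\|^{-1/2})$ via the identity $(t-\kappa)/\kappa=(\lambda/\kappa)\bigl((t-\lambda)/\lambda-(\kappa-\lambda)/\lambda\bigr)$. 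Since Lemma \ref{NonStochBregmanPearsona} is a purely deterministic statement about arbitrary sequences satisfying its hypothesis, instantiating it with $\boldsymbol\kappa_r$ in the role of $\boldsymbol\lambda_r$ is legitimate, and the conclusions ``as $\|\boldsymbol\kappa_r\|\to\infty$'' and ``as $\|\boldsymbol\lambda_r\|\to\infty$'' coincide by the established comparability. Your argument is shorter, makes the logical dependence on Lemma \ref{NonStochBregmanPearsona} explicit, and avoids the error-prone algebra; the paper's computation buys nothing beyond being self-contained, and since the paper proves Lemma \ref{NonStochBregmanPearsona} anyway, even that is not a saving. No gap.
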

 Let $\mathscr{E}$ denote the set of values of $\boldsymbol Y$, for which the MLE exist, and $\bar{\mathscr{E}}$ denote its complement in $\mathbb{R}_{\geq 0}$. Define the augmented Pearson and Bregman statistics as:
\begin{align}  \label{AugmPearsonBregmanStat} 
&X^2 (\boldsymbol Y, \tilde{\boldsymbol \lambda}) =  \left\{\begin{array}{ll}
X^2 (\boldsymbol Y, \hat{\boldsymbol \lambda}),   & \mbox{ if } \boldsymbol Y \in \mathscr{E}, 
 \\
\sum_{i \in \mathcal{I}}\frac{I}{{\boldsymbol 1'\boldsymbol Y}}(Y_i - ({\boldsymbol 1'\boldsymbol Y}/{I}))^2,  &\mbox{ if } \boldsymbol Y \in \bar{\mathscr{E}},
\end{array} \right.  \nonumber\\
&  \\
&B(\boldsymbol Y, \tilde{\boldsymbol \lambda}) =  \left\{\begin{array}{ll}
B (\boldsymbol Y, \hat{\boldsymbol \lambda}),   & \mbox{ if } \boldsymbol Y \in \mathscr{E}, 
 \\
2\sum_{i \in \mathcal{I}}\{Y_i \log ({Y_i}/{({\boldsymbol 1'\boldsymbol Y}/{I})}) - (Y_i - ({\boldsymbol 1'\boldsymbol Y}/{I}))\},  &\mbox{ if } \boldsymbol Y \in \bar{\mathscr{E}}.
\end{array} \right. \nonumber
\end{align}

\vspace{1mm}

\noindent The common asymptotic distribution of the Pearson and the Bregman statistics under a model for intensities is determined next.  For simplicity of notation, $\boldsymbol Y \equiv \boldsymbol Y^{r}$, $\true \equiv \boldsymbol \lambda^{r}$, and $\hat{\boldsymbol \lambda} \equiv \hat{\boldsymbol \lambda}{}^{r}$. 

\begin{theorem}\label{MainThPoissonaR}
Let $RM_{\boldsymbol \lambda}(\mathbf{A})$ be a relational model, $\true \in  RM_{\boldsymbol \lambda}(\mathbf{A})$,  and let $\boldsymbol Y \sim Pois(\true)$ be the observations. Let $\hat{\boldsymbol \lambda}$ be the MLE of $\true$ under $RM_{\boldsymbol \lambda}(\mathbf{A}) $, and $\tilde{\boldsymbol \lambda}$ be the augmented MLE. Let $K = dim(Ker(\mathbf{A}))$. Then, as  $\|\true\| \to \infty$, 
$$\mathcal{L}(X^2 (\boldsymbol Y, \hat{\boldsymbol \lambda}) \mid \boldsymbol Y \in \mathscr{E}),  \quad \mathcal{L} (B(\boldsymbol Y, \hat{\boldsymbol \lambda}) \mid \boldsymbol Y \in \mathscr{E}),   \quad \mathcal{L}(X^2 (\boldsymbol Y, \tilde{\boldsymbol \lambda})),  \quad 
 \mathcal{L} (B(\boldsymbol Y, \tilde{\boldsymbol \lambda}))   \quad \to \quad \chi^2_{K}.$$
\end{theorem}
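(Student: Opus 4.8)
The plan is to reduce all four statistics to a single quadratic form $\boldsymbol Z'\mathbf R\boldsymbol Z$ in the asymptotically standard normal vector $\boldsymbol Z = \Delta[\true{}^{-1/2}](\boldsymbol Y - \true)$, where $\mathbf R = \mathbf D'(\mathbf D\mathbf D')^{-1}\mathbf D$, and then to recognize this form as $\chi^2_K$. First I would treat the conditional Pearson statistic $X^2(\boldsymbol Y,\hat{\boldsymbol \lambda})$ given $\boldsymbol Y\in\mathscr E$, since the other three reduce to it: the Bregman statistic by asymptotic equivalence, and the augmented statistics by the fact that $\mathbb P(\boldsymbol Y\in\bar{\mathscr E})\to 0$.

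For the Pearson statistic I would write $X^2(\boldsymbol Y,\hat{\boldsymbol \lambda}) = (\boldsymbol Y - \hat{\boldsymbol \lambda})'\Delta[\hat{\boldsymbol \lambda}^{-1}](\boldsymbol Y - \hat{\boldsymbol \lambda})$ and first remove the estimated weights. By Lemma \ref{newlemma}, $\Delta[\true{}^{-1/2}](\hat{\boldsymbol \lambda}-\true)=O_p(\boldsymbol 1)$, so Lemma \ref{smallLemma} (with $m=1$) yields $\Delta[\true]\Delta[\hat{\boldsymbol \lambda}^{-1}] = \mathbf I + O_p(\|\true\|^{-1/2})$. Setting $\boldsymbol V = \Delta[\true{}^{-1/2}](\boldsymbol Y - \hat{\boldsymbol \lambda})$, which is $O_p(\boldsymbol 1)$, the weight replacement contributes only $\boldsymbol V'\, O_p(\|\true\|^{-1/2})\,\boldsymbol V = o_p(1)$, so $X^2(\boldsymbol Y,\hat{\boldsymbol \lambda}) = \|\boldsymbol V\|^2 + o_p(1)$. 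The asymptotic representation (\ref{Asymp}) from Theorem \ref{MLENormala}, namely $\Delta[\true{}^{-1/2}](\hat{\boldsymbol \lambda}-\true) = (\mathbf I - \mathbf R)\boldsymbol Z + o_p(\boldsymbol 1)$, then gives $\boldsymbol V = \boldsymbol Z - (\mathbf I - \mathbf R)\boldsymbol Z + o_p(\boldsymbol 1) = \mathbf R\boldsymbol Z + o_p(\boldsymbol 1)$, and since $\mathbf R$ is a symmetric idempotent, $X^2(\boldsymbol Y,\hat{\boldsymbol \lambda}) = \boldsymbol Z'\mathbf R\boldsymbol Z + o_p(1)$. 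By Lemma \ref{DataNormala}, $\boldsymbol Z\to\mathcal N(\boldsymbol 0,\mathbf I)$ (conditioning on $\mathscr E$ being harmless, as $\mathbb P(\boldsymbol Y\in\mathscr E)\to 1$); because $\mathbf D$ is a kernel basis matrix its rows are independent, so $\mathbf R$ is the orthogonal projector onto the $K$-dimensional row space of $\mathbf D$, with $\mathrm{rank}(\mathbf R)=\mathrm{trace}(\mathbf R)=K=\dim(\mathrm{Ker}(\mathbf A))$. Hence $\boldsymbol Z'\mathbf R\boldsymbol Z\to\chi^2_K$.

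Next I would obtain the Bregman statistic by equivalence. From Corollary \ref{CorollaryIntensNormal} and Lemma \ref{newlemma}, and using that $\true_i\ge\|\true\|$ so that the $\tfrac12$-scaled $O_p(\boldsymbol 1)$ bounds become $O_p(\|\true\|^{-1/2})$ at the full scale, both $\Delta[\true{}^{-1}](\boldsymbol Y-\true)$ and $\Delta[\true{}^{-1}](\hat{\boldsymbol \lambda}-\true)$ are $O_p(\|\true\|^{-1/2})$. Thus the hypotheses of Lemma \ref{NonStochBregmanPearson2a} hold in probability with $\boldsymbol t_r=\boldsymbol Y$, $\boldsymbol\kappa_r=\hat{\boldsymbol \lambda}$, $\boldsymbol\lambda_r=\true$, giving $B(\boldsymbol Y,\hat{\boldsymbol \lambda}) - X^2(\boldsymbol Y,\hat{\boldsymbol \lambda}) = o_p(1)$, and Slutsky's theorem delivers $\mathcal L(B(\boldsymbol Y,\hat{\boldsymbol \lambda})\mid\boldsymbol Y\in\mathscr E)\to\chi^2_K$. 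The two augmented statistics coincide with the conditional ones on $\mathscr E$; I would dispatch the $\bar{\mathscr E}$-contribution exactly as in the proof of Theorem \ref{MLENormala}, decomposing $\mathbb P(X^2(\boldsymbol Y,\tilde{\boldsymbol \lambda})\le t)$ over $\{\boldsymbol Y\in\mathscr E\}$ and $\{\boldsymbol Y\in\bar{\mathscr E}\}$ and using $\mathbb P(\boldsymbol Y\in\bar{\mathscr E})\to 0$, so that the $\bar{\mathscr E}$-term vanishes irrespective of the value taken there and the limit equals that of the conditional statistic.

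The step requiring the most care is the second one: simultaneously pushing $\hat{\boldsymbol \lambda}$ out of the Pearson denominator and reducing the numerator to $\mathbf R\boldsymbol Z$ while keeping all error terms genuinely $o_p(1)$, and then correctly identifying $\mathrm{rank}(\mathbf R)$ with $K=\dim(\mathrm{Ker}(\mathbf A))$, which is what pins down the degrees of freedom and makes the result hold with or without an overall effect. The passage from the deterministic conclusions of Lemmas \ref{NonStochBregmanPearsona}--\ref{NonStochBregmanPearson2a} to the corresponding $o_p(1)$ statements also needs a short justification (for instance a subsequence argument, since the hypotheses hold only in probability), but this is routine.
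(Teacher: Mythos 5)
Your proposal is correct and follows essentially the same route as the paper: reduce the Pearson statistic to the quadratic form $\boldsymbol Z'\mathbf{R}\boldsymbol Z$ via the representation $\Delta[\true{}^{-1/2}](\hat{\boldsymbol \lambda}-\true)=(\mathbf{I}-\mathbf{R})\boldsymbol Z+o_p(\boldsymbol 1)$ from Theorem \ref{MLENormala}, identify $rank(\mathbf{R})=K$, transfer the result to the Bregman statistic via Lemma \ref{NonStochBregmanPearson2a}, and handle the augmented statistics by the decomposition over $\mathscr{E}$ and $\bar{\mathscr{E}}$ with $\mathbb{P}(\boldsymbol Y\in\bar{\mathscr{E}})\to 0$. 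The only (cosmetic) difference is that you remove the estimated weights $\Delta[\hat{\boldsymbol \lambda}{}^{-1}]$ directly with Lemma \ref{smallLemma}, whereas the paper Taylor-expands the residual function $e_i(y_i,\hat{\lambda}_i)$ around $(\ti,\ti)$; both yield the same $o_p(1)$ control.
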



\begin{proof} 
The asymptotic conditional and unconditional distributions of the Pearson statistic are obtained first. Because of the equivalence of the two statistics, the corresponding distributions of the Bregman statistics will be the same.

 Suppose $\boldsymbol Y \in \mathscr{E}$, and let $\boldsymbol E = (E_1, \dots, E_I)'$ denote the vector of Pearson residuals:
$$
E_i = \hat{\lambda}_i^{-1/2}(Y_i - \hat{\lambda}_i), \quad i = 1, \dots, I.
$$
It will be shown now that, conditionally on $\boldsymbol Y \in \mathscr{E}$, $\boldsymbol E$ has a multivariate normal distribution. 
Let $\boldsymbol Y = (y_1, \dots, y_I)$, and define:
$$
e_i = e_i(y_i, \hat{\lambda}_i) = \hat{\lambda}_i^{-1/2}(y_i - \hat{\lambda}_i), \quad i = 1, \dots, I,
$$
Expand $e_i$ in a Taylor series around $\ti$, using the partial derivatives:
\begin{align*}
&\partial e_i / \partial y_i = \hat{\lambda}{}_i^{-1/2}, \quad \partial e_i / \partial \hat{\lambda}_i = -{1}/{2}(y_i + \hat{\lambda}_i)\hat{\lambda}_i^{-3/2},\\
&\partial e_i / \partial y_j = \partial e_i / \partial \hat{\lambda}_j = 0, \qquad i \neq j.
\end{align*}
As $e_i(\ti, \ti) = 0$, the Taylor expansion is
$$e_i(y_i, \hat{\lambda}_i) = \ti{}^{-1/2}(y_i - \ti) - \ti{}^{-1/2}(\hat{\lambda}_i - \ti) + o\left(\ti{}^{-1/2}|y_i - \ti|\right)+  o\left(\ti{}^{-1/2}|\hat{\lambda}_i - \ti|\right),$$
or, written in the vector form and using (\ref{AsympNewNS}),
\begin{align}\label{e2}
\boldsymbol e &=\Delta[\true{}^{-1/2}](\boldsymbol y - \true) - \Delta[\true{}^{-1/2}](\hat{\boldsymbol \lambda} - \true) \nonumber \\
                   &+ o(\|\Delta[\true{}^{-1/2}](\boldsymbol y - \true)\|) + o(\|\Delta[\true{}^{-1/2}](\hat{\boldsymbol \lambda} - \true)\|) \nonumber\\[10pt]
                   &=\Delta[\true{}^{-1/2}](\boldsymbol y - \true) - (\mathbf{I}-\mathbf{R})\Delta[\true{}^{-1/2}](\boldsymbol y - \true) + o(\boldsymbol 1)\nonumber \\
                   &+ o(\|\Delta[\true{}^{-1/2}](\boldsymbol y - \true)\|) + o\left(\|\Delta[\true{}^{-1/2}](\hat{\boldsymbol \lambda} - \true)\|\right) \nonumber\\[10pt]
                  &= \mathbf{R}\Delta[\true{}^{-1/2}](\boldsymbol y - \true) +  o(\boldsymbol 1). 
\end{align}
Because, by Corollary \ref{CorollaryIntensNormal}, $\Delta[\true{}^{-1/2}](\boldsymbol Y - \true)= O_p(\boldsymbol 1)$, (\ref{e2}) also holds in the stochastic sense:
$$\, \boldsymbol E = \mathbf{R}
\Delta[\true{}^{-1/2}](\boldsymbol Y - \true) +  o_p(\boldsymbol 1), \quad \mbox{ given that } \boldsymbol Y \in \mathscr{E}.$$ 
The latter implies that, conditionally on $\boldsymbol Y \in \mathscr{E}$,  $\boldsymbol E$ is asymptotically normal with zero mean and the covariance matrix $\mathbf{R}$. Notice further that, because $\mathbf{R}^2 =  (\mathbf{D}(\mathbf{D}'\mathbf{D})^{-1}\mathbf{D}')^2 = (\mathbf{D}(\mathbf{D}'\mathbf{D})^{-1}\mathbf{D}'= \mathbf{R}$,  the matrix $\mathbf{R}$ is idempotent, and therefore,
$$rank(\mathbf{R}) = trace(\mathbf{R}) = trace(\mathbf{D}(\mathbf{D}'\mathbf{D})^{-1}\mathbf{D}').$$
Because $\mathbf{D}(\mathbf{D}'\mathbf{D})^{-1}\mathbf{D}'$ is the projection matrix on the subspace spanned by the columns of $\mathbf{D}$, the rank of $\mathbf{D}$ is equal to the dimension of this subspace. Thus,  $$rank(\mathbf{R}) = trace(\mathbf{D}(\mathbf{D}'\mathbf{D})^{-1}\mathbf{D}') = rank(\mathbf{D}) = K.$$
Finally, because the Pearson statistic $X^2(\boldsymbol Y, \hat{\boldsymbol \lambda}) = \boldsymbol E'\boldsymbol E$, its asymptotic distribution, given that $\boldsymbol Y \in \mathscr{E}$, is chi-squared with $K$ degrees of freedom.

Suppose that $\boldsymbol Y \in \bar{\mathscr{E}}$. In this case,
\begin{align}
X^2(\boldsymbol Y, \tilde{\boldsymbol \lambda}) &= X^2(\boldsymbol Y, ({\boldsymbol 1'\boldsymbol Y}/{I}) = \sum_{i \in \mathcal{I}}\frac{I}{{\boldsymbol 1'\boldsymbol Y}}(Y_i - ({\boldsymbol 1'\boldsymbol Y}/{I}))^2. 
\end{align} 
By Corollary \ref{CorollaryIntensNormal}, $\Delta[\true{}^{-1/2}](\boldsymbol Y - \true)= O_p(\boldsymbol 1)$, and thus, ${\boldsymbol 1'\boldsymbol Y} = \boldsymbol 1'\true + O_p(\|\true{}\|^{1/2})$. Therefore, 
\begin{align}\label{BoundIntens}
X^2(\boldsymbol Y, ({\boldsymbol 1'\boldsymbol Y}/{I}) &= \frac{I}{\boldsymbol 1'\true + O_p(\|\true{}\|^{1/2})}\sum_{i \in \mathcal{I}}(Y_i - \ti - (({\boldsymbol 1'\boldsymbol Y}/{I}) - \ti))^2 \\
&= \frac{I}{\boldsymbol 1'\true + O_p(\|\true{}\|^{1/2})}O_p(\|\true{}\|) = \frac{O_p(\|\true{}\|)}{O_p(\|\true{}\|)(1 + O_p(\|\true{}\|^{-1/2})} = O_p(1). \nonumber
\end{align}
For an arbitrary  $t \geq 0$:
\begin{align}\label{CDFchisqIntens}
\mathbb{P}(X^2(\boldsymbol Y, \tilde{\boldsymbol \lambda}) \leq t) 
&= \mathbb{P}(X^2(\boldsymbol Y, \hat{\boldsymbol \lambda}) \leq t \mid \boldsymbol Y \in \mathscr{E})\cdot\mathbb{P}(\boldsymbol Y \in \mathscr{E})  \\
&+  \mathbb{P}(X^2(\boldsymbol Y, ({\boldsymbol 1'\boldsymbol Y}/{I})) \leq t \mid \boldsymbol Y \in \bar{\mathscr{E}})\cdot\mathbb{P}(\boldsymbol Y \in \bar{\mathscr{E}}). \nonumber
\end{align}
It was shown earlier in the proof that the asymptotic distribution of $X^2(\boldsymbol Y, \hat{\boldsymbol \lambda})$ given $\boldsymbol Y \in \mathscr{E}$ exists. Because $X^2(\boldsymbol Y, ({\boldsymbol 1'\boldsymbol Y}/{I})$ is bounded in probability, see (\ref{BoundIntens}), and, as it was shown in the proof of Theorem \ref{MLENormala}, $\mathbb{P}(\boldsymbol Y \in {\mathscr{E}}) \to 1$ and $\mathbb{P}(\boldsymbol Y \in \bar{\mathscr{E}}) \to 0$, as $\|\true\| \to \infty$, the asymptotic distribution of $X^2(\boldsymbol Y, \tilde{\boldsymbol \lambda}) $ also exists and is the same as the asymptotic conditional distribution of $X^2(\boldsymbol Y, \hat{\boldsymbol \lambda})$ given $\boldsymbol Y \in \mathscr{E}$, namely, $\chi^2_K$.

Finally, the asymptotic distribution of the Bregman statistic for the conditional and augmented MLEs is obtained. 

Recall that by Corollary \ref{CorollaryIntensNormal}, $\Delta[\true{}^{-1/2}](\boldsymbol Y - \true)= O_p(\boldsymbol 1)$, as $\|\true\| \to \infty$ an let now  $\ddot{\boldsymbol \lambda}$ denote either $\tilde{\boldsymbol \lambda}$ or $\hat{\boldsymbol \lambda}$, and in the second case, all statements will be conditional on  $\boldsymbol Y \in \mathscr{E}$.
Because, by Theorem \ref{MLENormala}, $\Delta[\true{}^{-1/2}](\ddot{\boldsymbol \lambda} - \true)$ is asymptotically normal,  it also holds that $\Delta[\true{}^{-1/2}](\ddot{\boldsymbol \lambda} - \true) = O_p(\boldsymbol 1)$.

Therefore, for $\boldsymbol t_r = \boldsymbol Y$, $\kappa_r = \ddot{\boldsymbol \lambda}$, and $\boldsymbol \lambda_r = \true$, one obtains that
\begin{equation} \label{IntensStochLemma}
\Delta[\boldsymbol \lambda_r^{-1}](\boldsymbol t_r - \boldsymbol \lambda_r) =  O_p(\|\boldsymbol \lambda_r\| ^{-1/2}) \, \mbox{ and } \, \Delta[\boldsymbol \lambda_r^{-1}](\boldsymbol \kappa_r - \boldsymbol \lambda_r) =  O_p(\|\boldsymbol \lambda_r\| ^{-1/2}).
\end{equation}
By Lemma   \ref{NonStochBregmanPearson2a}, applied to the non-stochastic version of (\ref {IntensStochLemma}): 
\begin{equation}\label{eqInt}
B(\boldsymbol y, \ddot{\boldsymbol \lambda}) -  X^2(\boldsymbol y, \ddot{\boldsymbol \lambda}) = o(1), \qquad \mbox{ as }  \|\true\| \to \infty,
\end{equation}
where $\boldsymbol y$ is a realizaton of $\boldsymbol Y$. By the standard argument, used in the previous section, (\ref{eqInt}) holds in the stochastic sense too:
$$B(\boldsymbol Y, \ddot{\boldsymbol \lambda}) -  X^2(\boldsymbol Y, \ddot{\boldsymbol \lambda}) = o_p(1), \qquad \mbox{ as }  \|\true\| \to \infty.$$
which completes the proof.
\end{proof}

The case of probabilities is considered next.   Let $\mathscr{E}_N$ denote the set of values of $\boldsymbol Y$, for which the MLE exist, and $\bar{\mathscr{E}}_N$ denote its complement in $\mathbb{R}_{\geq 0}$. Define the augmented Pearson and Bregman statistics as:
\begin{align}  \label{AugmPearsonBregmanStatProbs} 
&X^2 (\boldsymbol Y_N, N\tilde{\boldsymbol p}) =  \left\{\begin{array}{ll}
X^2 (\boldsymbol Y_N, N\hat{\boldsymbol p}),   & \mbox{ if } \boldsymbol Y_N \in \mathscr{E}_N, 
 \\
\sum_{i \in \mathcal{I}}\frac{I}{N} (Y_{Ni} - N/I)^2,  &\mbox{ if } \boldsymbol Y_N \in \bar{\mathscr{E}}_N,
\end{array} \right.  \nonumber\\
&  \\
&B(\boldsymbol Y_N, N\tilde{\boldsymbol p}) =  \left\{\begin{array}{ll}
B (\boldsymbol Y_N, N\hat{\boldsymbol p}),   & \mbox{ if } \boldsymbol Y_N \in \mathscr{E}_N, 
 \\
2\sum_{i \in \mathcal{I}}\{Y_{Ni} \log (I\cdot{Y_i}/N) - (Y_i - N/I)\},  &\mbox{ if } \boldsymbol Y_N \in \bar{\mathscr{E}}_N.
\end{array} \right. \nonumber
\end{align}
\noindent Thereafter, to simplify the notation, $\boldsymbol Y \equiv \boldsymbol Y_N$, $\hat{\boldsymbol p} \equiv \hat{\boldsymbol p}_N$, $\tilde{\boldsymbol p} \equiv \tilde{\boldsymbol p}_N$, $\mathscr{E} \equiv \mathscr{E}_N$.

\begin{theorem}\label{MainThMulta}
Let $RM_{\boldsymbol p}(\mathbf{A})$ be a relational model for probabilities, $\ptrue \in RM_{\boldsymbol p}(\mathbf{A})$, and let $\boldsymbol{Y} \sim Mult(N, \ptrue)$ be the observations.  Let $\hat{\boldsymbol p}$ be the MLE of $\ptrue$ under $RM_{\boldsymbol p}(\mathbf{A})$, and  $\tilde{\boldsymbol p}$ be the augmented MLE. Let $K = dim(Ker(\mathbf{A}))$. Then, as $N \to \infty$, 
$$\mathcal{L}(X^2 (\boldsymbol Y, N\hat{\boldsymbol p}) \mid \boldsymbol Y\in \mathscr{E}),   \,  \,  \mathcal{L} (B(\boldsymbol Y, N\hat{\boldsymbol p}) \mid \boldsymbol Y \in \mathscr{E}),    \, \,   \mathcal{L}(X^2 (\boldsymbol Y, N\tilde{\boldsymbol p})),   \, \, 
 \mathcal{L} (B(\boldsymbol Y, N\tilde{\boldsymbol p}))   \, \to \, \chi^2_{K}.$$
\end{theorem}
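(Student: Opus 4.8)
The plan is to follow the architecture of the proof of Theorem~\ref{MainThPoissonaR}, reducing all four assertions to a single computation — the conditional law of the Pearson statistic $X^2(\boldsymbol Y, N\hat{\boldsymbol p})$ given $\boldsymbol Y \in \mathscr{E}$ — and then extending to the Bregman statistic and to the augmented versions by two independent devices. The Pearson--Bregman equivalence on $\mathscr{E}$ will follow from Lemma~\ref{NonStochBregmanPearson2a} with $\boldsymbol \lambda_r = N\ptrue$, $\boldsymbol t_r = \boldsymbol Y$, $\boldsymbol \kappa_r = N\hat{\boldsymbol p}$: here $\|N\ptrue\| = N\min_i \pti \to \infty$, and by Corollary~\ref{MultiO} and Theorem~\ref{MLENormalProba1} both $\Delta[(N\ptrue)^{-1}](\boldsymbol Y - N\ptrue) = \Delta[\ptrue{}^{-1}](\boldsymbol Y/N - \ptrue)$ and $\Delta[(N\ptrue)^{-1}](N\hat{\boldsymbol p} - N\ptrue) = \Delta[\ptrue{}^{-1}](\hat{\boldsymbol p} - \ptrue)$ are $O_p(N^{-1/2}) = O_p(\|N\ptrue\|^{-1/2})$, so, reading the lemma in the stochastic sense as in the Poisson proof, $B(\boldsymbol Y, N\hat{\boldsymbol p}) - X^2(\boldsymbol Y, N\hat{\boldsymbol p}) = o_p(1)$. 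The passage to the augmented statistics uses only $\mathbb{P}(\boldsymbol Y \in \bar{\mathscr{E}}) \to 0$, established in the proof of Theorem~\ref{MLENormalProba1}: decomposing each cumulative distribution function over $\mathscr{E}$ and $\bar{\mathscr{E}}$ as in~(\ref{CDFchisqIntens}), the $\bar{\mathscr{E}}$-term is bounded by $\mathbb{P}(\bar{\mathscr{E}}) \to 0$ while the $\mathscr{E}$-term carries the conditional limit, so each augmented statistic inherits it.

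The core step is therefore the conditional distribution of $X^2(\boldsymbol Y, N\hat{\boldsymbol p}) = \boldsymbol E'\boldsymbol E$, with $E_i = (N\hat{p}_i)^{-1/2}(Y_i - N\hat{p}_i) = \sqrt{N}\,\hat{p}_i^{-1/2}(Y_i/N - \hat{p}_i)$. Writing $\boldsymbol p_N = \boldsymbol Y/N$ and expanding each $E_i$, viewed as a function of $(p_{Ni}, \hat{p}_i)$, in a Taylor series about $(\pti, \pti)$ exactly as the residuals $e_i$ were expanded in the proof of Theorem~\ref{MainThPoissonaR}, the first-order terms give
\begin{equation*}
\boldsymbol E = \sqrt{N}\,\Delta[\ptrue{}^{-1/2}](\boldsymbol p_N - \ptrue) - \sqrt{N}\,\Delta[\ptrue{}^{-1/2}](\hat{\boldsymbol p} - \ptrue) + o_p(\boldsymbol 1),
\end{equation*}
the remainder being $o_p(\boldsymbol 1)$ since each increment is $O_p(N^{-1/2})$. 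Substituting the stochastic expansion~(\ref{AsympM1st}), namely $\sqrt{N}(\hat{\boldsymbol p} - \ptrue) = \mathbf{M}\sqrt{N}(\boldsymbol p_N - \ptrue) + o_p(\boldsymbol 1)$, and putting $\boldsymbol W = \Delta[\ptrue{}^{-1/2}]\sqrt{N}(\boldsymbol p_N - \ptrue)$, one obtains
\begin{equation*}
\boldsymbol E = \mathbf{P}\boldsymbol W + o_p(\boldsymbol 1), \qquad \mathbf{P} = \Delta[\ptrue{}^{-1/2}](\mathbf{I} - \mathbf{M})\Delta[\ptrue{}^{1/2}].
\end{equation*}
With $\mathbf{I} - \mathbf{M} = \Delta[\ptrue]\mathbf{H}(\mathbf{H}'\Delta[\ptrue]\mathbf{H})^{-1}\mathbf{H}'$ and $\mathbf{G} = \Delta[\ptrue{}^{1/2}]\mathbf{H}$, a direct simplification gives $\mathbf{P} = \mathbf{G}(\mathbf{G}'\mathbf{G})^{-1}\mathbf{G}'$, the orthogonal projector onto the column space of $\mathbf{G}$; thus $\mathbf{P}$ is symmetric and idempotent, and $\mathbf{G}$ has full column rank $K+1$ because $\mathbf{G}'\mathbf{G} = \mathbf{H}'\Delta[\ptrue]\mathbf{H}$ is the same invertible matrix used in Theorem~\ref{MLENormalProba1}. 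By Lemma~\ref{MultiN}, $\boldsymbol W \to \mathcal{N}(\boldsymbol 0, \tilde{\boldsymbol \Sigma})$ with $\tilde{\boldsymbol \Sigma} = \Delta[\ptrue{}^{-1/2}]\boldsymbol \Sigma\Delta[\ptrue{}^{-1/2}] = \mathbf{I} - \boldsymbol q\boldsymbol q'$, where $\boldsymbol q = \Delta[\ptrue{}^{1/2}]\boldsymbol 1$ obeys $\boldsymbol q'\boldsymbol q = 1$; hence $X^2(\boldsymbol Y, N\hat{\boldsymbol p}) = \boldsymbol W'\mathbf{P}\boldsymbol W + o_p(1)$.

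The hard part — and the genuine difference from the intensities case — is identifying the law of $\boldsymbol W_\infty'\mathbf{P}\boldsymbol W_\infty$ for $\boldsymbol W_\infty \sim \mathcal{N}(\boldsymbol 0, \tilde{\boldsymbol \Sigma})$, because the limiting covariance $\tilde{\boldsymbol \Sigma}$ is now singular and $\mathbf{P}$, of rank $K+1$, does not kill the extra direction $\boldsymbol q$. Two facts resolve this: $\boldsymbol q$ is the first column of $\mathbf{G}$, so $\mathbf{P}\boldsymbol q = \boldsymbol q$, and $\tilde{\boldsymbol \Sigma}$ is itself the idempotent projector onto $\boldsymbol q^{\perp}$. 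Writing $\boldsymbol W_\infty = \tilde{\boldsymbol \Sigma}\boldsymbol U$ with $\boldsymbol U \sim \mathcal{N}(\boldsymbol 0, \mathbf{I})$ gives $\boldsymbol W_\infty'\mathbf{P}\boldsymbol W_\infty = \boldsymbol U'\mathbf{Q}\boldsymbol U$ with $\mathbf{Q} = \tilde{\boldsymbol \Sigma}\mathbf{P}\tilde{\boldsymbol \Sigma}$; using $\mathbf{P}\boldsymbol q = \boldsymbol q$ and $\boldsymbol q'\boldsymbol q = 1$ one computes $\mathbf{Q} = \mathbf{P} - \boldsymbol q\boldsymbol q'$, which is readily checked to be symmetric and idempotent. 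Its rank equals its trace, $\operatorname{trace}(\mathbf{P}) - \operatorname{trace}(\boldsymbol q\boldsymbol q') = (K+1) - 1 = K$, so $\boldsymbol U'\mathbf{Q}\boldsymbol U \sim \chi^2_K$. This recovers the degrees of freedom $K = \dim(\mathrm{Ker}(\mathbf{A}))$, and the removal of exactly one degree of freedom — the $\boldsymbol q$ direction eliminated by the singular multinomial covariance — is precisely what makes the count agree with the Poisson answer, whether or not the overall effect is present. Combining the three reductions yields the four stated convergences to $\chi^2_K$.
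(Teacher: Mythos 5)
Your proposal is correct and follows essentially the same route as the paper: the same Taylor expansion of the Pearson residuals combined with the stochastic expansion (\ref{AsympM1st}) of the MLE, the same reduction of the Bregman statistic via Lemma \ref{NonStochBregmanPearson2a} with $\boldsymbol \lambda_N = N\ptrue$, and the same decomposition of the distribution function over $\mathscr{E}$ and $\bar{\mathscr{E}}$ for the augmented statistics (your matrix $\mathbf{P}$ is the paper's $\mathbf{C}$ and your $\tilde{\boldsymbol\Sigma}$ is its $\boldsymbol\Sigma_1$). The only deviation is in counting the degrees of freedom: you exhibit the limiting covariance $\mathbf{P} - \boldsymbol q \boldsymbol q'$ as an explicit idempotent of trace $K$, whereas the paper computes $\mathrm{rank}(\mathbf{C}\boldsymbol\Sigma_1\mathbf{C})$ by an image argument; your version is a clean (and slightly more careful, since it makes the idempotency needed for the $\chi^2$ conclusion explicit) variant of the same computation.
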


\begin{proof}

It is proved first that, in the case when the MLE exists, the Pearson residuals are normally distributed, and their covariance matrix and its rank are computed. This implies the asymptotic chi-squared distribution of the Pearson statistic and its number degrees of freedom. Second, the same result is being shown to hold in the augmented case, that is, when the MLE may or may not exist. Finally, the proof is completed by showing that the Pearson and Bregman statistics have the same asymptotic distribution.

Suppose $\boldsymbol Y \in \mathscr{E}$, and let $\boldsymbol E  = (E_1, \dots, E_I)'$ be the vector of Pearson residuals $E_i = (N\hat{p}_i)^{-1/2}(Y_i - N\hat{p}_i), \quad i = 1, \dots, I.$ 
It will now be shown that, conditionally on $\boldsymbol Y \in \mathscr{E}$, the distribution of $\boldsymbol E$ is multivariate normal.  

Let $\boldsymbol y$ be a realization of $\boldsymbol Y$, $\boldsymbol q = \boldsymbol y/N$, and 
$\boldsymbol e = (e_1, \dots, e_I)$, where $$
e_i = e_i(q_i, \hat{p}_i) = N^{1/2}\hat{p}_i^{-1/2}(q_i - \hat{p}_i), \quad i = 1, \dots, I.
$$
Compute the partial derivatives of $e_i(q_i, \hat{p}_i) $ with respect to $q_i$ and $\hat{p}_i$: 
\begin{align*}
&\partial e_i / \partial q_i = N^{1/2}\hat{p}_i^{-1/2}, \quad \partial e_i / \partial \hat{p}_i = - {N^{1/2}}(q_i + \hat{p}_i)\hat{p}_i^{-3/2}/2,\\
&\partial e_i / \partial q_j = \partial e_i / \partial \hat{p}_j = 0, \qquad i \neq j. 
\end{align*}
Notice that  $$e_i(\pti, \pti) = 0, \quad \partial e_i / \partial q_i (\pti, \pti) = N^{1/2}\pti{}^{-1/2}, \quad \partial e_i / \partial \hat{p}_i (\pti, \pti) = N^{1/2}\pti{}^{-1/2}.$$
The Taylor expansion of $e_i(q_i, \hat{p}_i) $ around $\pti$ is:
$$e_i(q_i, \hat{p}_i) = N^{1/2}\pti{}^{-1/2}(q_i - \pti) - N^{1/2}\pti{}^{-1/2}(\hat{p}_i - \pti) + o\left(|q_i- \pti|\right)+  o\left(|\hat{p}_i - \pti|\right).$$
Therefore, using (\ref{AsympM1}),
\begin{align}
\boldsymbol e &= N^{1/2}\Delta[\ptrue{}^{-1/2}](\boldsymbol q - \ptrue) - N^{1/2}\Delta[\ptrue{}^{-1/2}](\hat{\boldsymbol p} - \ptrue) + o(\|\boldsymbol q - \ptrue)\|) + o(\|\hat{\boldsymbol p} - \ptrue\|) \nonumber\\
                   {} & \nonumber \\
                   &=N^{1/2}\Delta[\ptrue{}^{-1/2}](\boldsymbol y/N - \ptrue) - \Delta[\ptrue{}^{-1/2}]\mathbf{S}\Delta[\ptrue{}^{-1}]N^{1/2}(\boldsymbol Y/N - \ptrue)+ o(\boldsymbol 1)\nonumber \\
                   &+ o(\|\Delta[\ptrue{}^{-1/2}](\boldsymbol y - \ptrue)\|) + o\left(\|\Delta[\ptrue{}^{-1/2}](\hat{\boldsymbol \lambda} - \ptrue)\|\right) \nonumber\\
                   {} & \nonumber \\
                    &=N^{1/2}\Delta[\ptrue{}^{-1/2}](\boldsymbol y/N - \ptrue) - \Delta[\ptrue{}^{-1/2}]\mathbf{S}\Delta[\ptrue{}^{-1/2}]N^{1/2}\Delta[\ptrue{}^{-1/2}](\boldsymbol Y/N - \ptrue)+ o(\boldsymbol 1)\nonumber \\
                   &+ o(\|\Delta[\ptrue{}^{-1/2}](\boldsymbol y - \ptrue)\|) + o\left(\|\Delta[\ptrue{}^{-1/2}](\hat{\boldsymbol \lambda} - \true)\|\right) \nonumber\\
                   {} & \nonumber \\
&= \left(\mathbf{I} -\Delta[\ptrue{}^{-1/2}]\mathbf{S}\Delta[\ptrue{}^{-1/2}]\right) \cdot N^{1/2}
\Delta[\ptrue{}^{-1/2}](\boldsymbol y/N - \ptrue) +  o(\boldsymbol 1), \qquad \mbox{as } N \to \infty.  \label{e2p}
\end{align}
As, by Corollary \ref{MultiO}, $N^{1/2}(\boldsymbol Y/N - \ptrue) =O_p(\boldsymbol 1)$,  (\ref{e2p}) also holds in the stochastic sense:
$$ \boldsymbol E = \left(\mathbf{I} -\Delta[\ptrue{}^{-1/2}]\mathbf{S}\Delta[\ptrue{}^{-1/2}]\right) \cdot N^{1/2}
\Delta[\ptrue{}^{-1/2}](\boldsymbol Y/N - \ptrue) +  o_p(\boldsymbol 1), \qquad \mbox{as } N \to \infty.$$
By Lemma \ref{MultiN}, $N^{1/2}(\boldsymbol Y/N - \ptrue)$ is asymptotically normal, with zero mean and covariance matrix $\boldsymbol \Sigma = \Delta[\ptrue] - \ptrue \cdot \ptrue'$. Therefore,  $N^{1/2}
\Delta[\ptrue{}^{-1/2}](\boldsymbol Y/N - \ptrue)$ is also asymptotically normal, with zero mean and covariance matrix equal to:
$$\boldsymbol \Sigma_1=\Delta[\ptrue{}^{-1/2}] \cdot\boldsymbol \Sigma \cdot \Delta[\ptrue{}^{-1/2}] = \mathbf{I} - \ptrue{}^{1/2} \cdot (\ptrue{}^{1/2})'.$$

Let $\mathbf{T} = \Delta[\ptrue{}^{-1/2}]\mathbf{S}\Delta[\ptrue{}^{-1/2}]$. The matrix $\mathbf{T}$ can be rewritten as 
\begin{align*}
\mathbf{T} &= \Delta[\ptrue{}^{-1/2}] \Delta[\ptrue] \left(\mathbf{I}- \mathbf{H}(\mathbf{H}'
\Delta[\ptrue] \mathbf{H})^{-1}\mathbf{H}'\Delta[\ptrue] \right)\Delta[\ptrue{}^{-1/2}] \\
&= \Delta[\ptrue{}^{1/2}]  \left(\mathbf{I}- \mathbf{H}(\mathbf{H}'
\Delta[\ptrue] \mathbf{H})^{-1}\mathbf{H}'\Delta[\ptrue] \right)\Delta[\ptrue{}^{-1/2}]\\
&= \mathbf{I} - \mathbf{C},
\end{align*}
where 
\begin{equation}\label{MatrixC}
\mathbf{C}= \Delta[\ptrue{}^{1/2}] \mathbf{H}(\mathbf{H}'
\Delta[\ptrue] \mathbf{H})^{-1}\mathbf{H}'\Delta[\ptrue{}^{1/2}].
\end{equation}
Therefore, given that $\boldsymbol Y \in \mathscr{E}$, the vector of Pearson residuals,
$$ \boldsymbol E = (\mathbf{I} - \mathbf{T})
\Delta[\ptrue{}^{-1/2}]N^{1/2}(\boldsymbol Y/N - \ptrue) + o_p(\boldsymbol 1) = \mathbf{C} \Delta[\ptrue{}^{-1/2}]N^{1/2}(\boldsymbol Y/N - \ptrue) + o_p(\boldsymbol 1),
$$
is asymptotically normal with zero mean and covariance matrix $\,\mathbf{C} \boldsymbol\Sigma_1 \mathbf{C}$:
\begin{equation}\label{PRnormProb}
\mathcal{L}(\boldsymbol E \mid \boldsymbol Y \in \mathscr{E} ) \quad \to \quad \mathcal{N}(\boldsymbol 0, \,\mathbf{C} \boldsymbol\Sigma_1 \mathbf{C}), \qquad \mbox{as } N \to \infty.
\end{equation}
\noindent The rank of the covariance matrix is computed next. Notice first that $\mathbf{C}$ is symmetric and $\mathbf{C}^2 = \mathbf{C}$. Therefore, $\mathbf{C}$ is a projection matrix, which, as seen from (\ref{MatrixC}), projects a vector on the subspace spanned by the columns of $\Delta[\ptrue{}^{1/2}] \mathbf{H}$. Also, because
\begin{align*}
\boldsymbol{\Sigma}_1^2 &= (\mathbf{I} - \ptrue{}^{1/2} \cdot (\ptrue{}^{1/2})')^2 
= \,\mathbf{I} - 2\ptrue{}^{1/2} \cdot (\ptrue{}^{1/2})' + \ptrue{}^{1/2} \cdot (\ptrue{}^{1/2})'\cdot\ptrue{}^{1/2} \cdot (\ptrue{}^{1/2})'=\\
&= \mathbf{I} - 2\ptrue{}^{1/2} \cdot (\ptrue{}^{1/2})' + \ptrue{}^{1/2} \cdot 1 \cdot (\ptrue{}^{1/2})'=\boldsymbol \Sigma_1,
\end{align*}
$\boldsymbol \Sigma_1$ is also a projection matrix.  

Both $\mathbf{C}$ and $\boldsymbol\Sigma_1$ are symmetric, and thus
$$\mathbf{C}\boldsymbol{\Sigma_1}\mathbf{C} = \mathbf{C}\boldsymbol \Sigma_1^2\mathbf{C} = (\mathbf{C}\boldsymbol 
\Sigma_1) (\mathbf{C}\boldsymbol \Sigma_1)'.$$
Using the properties of the matrix rank,
$$rank(\mathbf{C}\boldsymbol{\Sigma_1}\mathbf{C}) = rank((\mathbf{C}\boldsymbol 
\Sigma_1) (\mathbf{C}\boldsymbol \Sigma_1)') = rank(\mathbf{C}\boldsymbol 
\Sigma_1).$$
It will be shown now that $rank(\mathbf{C}\boldsymbol 
\Sigma_1) = dim(Image(\mathbf{C}\boldsymbol{\Sigma}_1) = K$. 

Any vector in $Image(\mathbf{C}\boldsymbol{\Sigma}_1)$ is obtained by two consecutive projections: first, by $\boldsymbol{\Sigma}_1$, and second, by $\mathbf{C}$. 
Because 
$$\boldsymbol{\Sigma}_1 \ptrue{}^{1/2} = (\mathbf{I} - \ptrue{}^{1/2} \cdot (\ptrue{}^{1/2})')\ptrue{}^{1/2} = \boldsymbol 0,$$
$\ptrue{}^{1/2} \in Ker(\boldsymbol \Sigma_1)$, and thus every vector in the image of  $\boldsymbol{\Sigma}_1$ is orthogonal to $\ptrue{}^{1/2}$.

On the other hand, $\mathbf{C}$ is a projection matrix on the set spanned by the columns of $
\Delta[\ptrue{}^{1/2}]\mathbf{H}$.
Since  $\Delta[\ptrue{}^{1/2}]\mathbf{H} = [\ptrue{}^{1/2}, \Delta[\ptrue{}^{-1/2}]\mathbf{D}']$, every vector in the image of $\mathbf{C}$ is a linear combination of $\ptrue{}^{1/2}$ and the columns of $\Delta[\ptrue{}^{-1/2}]\mathbf{D}'$.

Any vector, obtained as a projection by $\boldsymbol \Sigma_1$, is orthogonal to $\ptrue{}^{1/2}$. If this vector is projected by $\mathbf{C}$ on the span of $\ptrue{}^{1/2}$ and $\Delta[\ptrue{}^{-1/2}]\mathbf{D}'$, the coordinate corresponding to $\ptrue{}^{1/2}$ has to be equal to zero. Therefore, the image of $\mathbf{C}\boldsymbol \Sigma_1$ comprises all linear combinations of the columns of $\Delta[\ptrue{}^{-1/2}]\mathbf{D}'$ only. Because these columns are linearly independent and their number is exactly $K$,
$$rank(\mathbf{C}\boldsymbol{\Sigma}_1) = dim(Image(\mathbf{C}\boldsymbol{\Sigma}_1)) = K.$$
Therefore,  $rank(\mathbf{C} \boldsymbol\Sigma_1  \mathbf{C}) = K$.

Because for $\boldsymbol Y \in \mathscr{E}$, $\, X^2 (\boldsymbol Y, N\hat{\boldsymbol p})= \boldsymbol E' \boldsymbol E$,  where $\boldsymbol E$ is asymptotically normal with a covariance matrix of rank $K$, the asymptotic conditional distribution of the Pearson statistic, given $\boldsymbol Y \in \mathscr{E}$, is  $\boldsymbol \chi^2$ with $K$ degrees of freedom.

Second part of the proof deals with the case when the MLE is not supposed to exist. Assume, $\boldsymbol Y \in \bar{\mathscr{E}}$. Then $$X^2(\boldsymbol Y, N\tilde{\boldsymbol p}) = X^2(\boldsymbol Y, N{\boldsymbol 1}/I) = \sum_{i \in \mathcal{I}}{I}/{N} (Y_{i} - N/I)^2 = ( {I}/{N}\boldsymbol Y - \boldsymbol 1)'( {I}/{N}\boldsymbol Y - \boldsymbol 1).$$
Recall that, by Corollary \ref{MultiO}, $N^{1/2}(\boldsymbol Y/N - \ptrue) =O_p(\boldsymbol 1)$, as $N \to \infty$. Therefore,
\begin{align}\label{boundProb}
X^2(\boldsymbol Y, N{\boldsymbol 1}/I)  & = ( {I}/{N}\boldsymbol Y - \boldsymbol 1)'( {I}/{N}\boldsymbol Y - \boldsymbol 1) \nonumber \\
&= ({I}/{N}\boldsymbol Y - \ptrue - (\boldsymbol 1 - \ptrue))'({I}/{N}\boldsymbol Y - \ptrue - (\boldsymbol 1 - \ptrue)) \\
&= ( O_p(N^{-1/2}) -  (\boldsymbol 1 - \ptrue))' (O_p(N^{-1/2}) -  (\boldsymbol 1 - \ptrue)) = O_p(\boldsymbol 1), \nonumber
\end{align}
that is, for $\boldsymbol Y \in \bar{\mathscr{E}}$,  $X^2(\boldsymbol Y, N\tilde{\boldsymbol p}) = O_p(\boldsymbol 1)$, as $N \to \infty$.

For an arbitrary  $t \geq 0$,
\begin{align}\label{CDFchisqProb}
\mathbb{P}(X^2(\boldsymbol Y, N\tilde{\boldsymbol p}) \leq t) 
&= \mathbb{P}(X^2(\boldsymbol Y, N\hat{\boldsymbol p}) \leq t \mid \boldsymbol Y \in \mathscr{E})\cdot\mathbb{P}(\boldsymbol Y \in \mathscr{E}) \nonumber \\
&+  \mathbb{P}(X^2(\boldsymbol Y, N{\boldsymbol 1}/I) \leq t \mid \boldsymbol Y \in \bar{\mathscr{E}})\cdot\mathbb{P}(\boldsymbol Y \in \bar{\mathscr{E}}). 
\end{align}
It was shown above that, as $N \to \infty$, the asymptotic distribution of $X^2(\boldsymbol Y, N\hat{\boldsymbol p})$, given $\boldsymbol Y \in \mathscr{E}$, exists. Because for $\boldsymbol Y \in \bar{\mathscr{E}}$, $\,X^2(\boldsymbol Y, N\tilde{\boldsymbol p})$ is bounded in probability, see (\ref{boundProb}), and, as it was verified in the proof of Theorem \ref{MLENormalProba1}, $\mathbb{P}(\boldsymbol Y \in {\mathscr{E}}) \to 1$ and $\mathbb{P}(\boldsymbol Y \in \bar{\mathscr{E}}) \to 0$, when $N \to \infty$, the equation (\ref{CDFchisqProb}) implies that the asymptotic distribution of $X^2(\boldsymbol Y, N\tilde{\boldsymbol p})$ exists and is the same as the asymptotic conditional distribution of $X^2(\boldsymbol Y, N\hat{\boldsymbol p})$, given $\boldsymbol Y \in \mathscr{E}$, namely, $\chi^2_K$.

Finally, the asymptotic distribution of the Bregman statistic for the conditional and augmented MLEs is obtained.

By Corollary \ref{MultiO},  $N^{-1/2}({\boldsymbol Y} - N\ptrue) = O_p(\boldsymbol 1)$,  as $N \to \infty$.
Let $\ddot{\boldsymbol p}$ denote either $\tilde{\boldsymbol p}$ or $\hat{\boldsymbol p}$, in which case all statements will be conditional on $\boldsymbol Y \in \mathscr{E}$.
 By Theorem \ref{MLENormalProba1}, $N^{1/2}({\ddot{\boldsymbol p}} - \ptrue)$ is asymptotically normal, and therefore, $N^{1/2}({\ddot{\boldsymbol p}} - \ptrue) = O_p(\boldsymbol 1)$, or, equivalently, $N^{-1/2}({N\ddot{\boldsymbol p}} - N\ptrue) = O_p(\boldsymbol 1)$. Set  $\boldsymbol \lambda_N = N \ptrue$, $\boldsymbol t_N = \boldsymbol Y$, $\boldsymbol \kappa_N = N\ddot{\boldsymbol p}$. If $N \to \infty$, then $\|\boldsymbol \lambda_N\| \to \infty$ and both,
$$\boldsymbol t_N - \boldsymbol \lambda_N =  O_p(\|\boldsymbol \lambda_N\| ^{1/2}) \quad \mbox{ and } \quad \boldsymbol \kappa_N -  \boldsymbol \lambda_N =  O_p(\|\boldsymbol \lambda_N\| ^{1/2}),$$ 
which implies that 
\begin{equation} \label{ProbStochLemma}
\Delta[\boldsymbol \lambda_N^{-1}](\boldsymbol t_N - \boldsymbol \lambda_N) =  O_p(\|\boldsymbol \lambda_N\| ^{-1/2}) \, \mbox{ and } \, \Delta[\boldsymbol \lambda_N^{-1}](\boldsymbol \kappa_N - \boldsymbol \lambda_N) =  O_p(\|\boldsymbol \lambda_N\| ^{-1/2}).
\end{equation}
By Lemma   \ref{NonStochBregmanPearson2a}, applied to the non-stochastic version of (\ref {ProbStochLemma}) with $\boldsymbol y$ as a realizaton of $\boldsymbol Y$,
$$B(\boldsymbol y, N\ddot{\boldsymbol p}) -  X^2(\boldsymbol y, N\ddot{\boldsymbol p}) = o(1), \qquad \mbox{ as }  N \to \infty.$$
The latter also holds in the stochastic sense:
$$B(\boldsymbol Y, N\ddot{\boldsymbol p}) -  X^2(\boldsymbol Y, N\ddot{\boldsymbol p}) = o_p(1), \qquad \mbox{ as }  N \to \infty,$$
which completes the proof.
\end{proof}

\section{Empirical distributions of the test statistics}\label{SimulationsSection}

The results of a Monte Carlo simulation study presented here illustrate the finite sample behaviour of the test statistics considered in the paper. The models $RM(\mathbf{A_2})$ and $RM(\mathbf{A_3})$ were Aitchison-Silvey independence for intensities for 2 and 3 features, respectively. These are relational models generated by:

\begin{equation*}
\mathbf{A_2} = \left(\begin{tabular}{rrr}
1 & 0 & 1 \\ 
0 & 1 & 1 \end{tabular}\right), 
\qquad \mathbf{A_3} = \left(\begin{tabular}{rrrrrrr}
1 & 0 & 0 & 1 & 1 & 0 & 1  \\ 
0 & 1 & 0 & 1 & 0 & 1 & 1 \\ 
0 & 0 & 1 & 0 & 1 & 1 & 1  \\ 
 \end{tabular} \right). 
 \end{equation*}
The true distributions in each case, can be obtained from the subset effects, given in the first column of the corresponding table, according to the multiplicative structure specified by the model matrix. To illustrate, the true distribution under $RM(\mathbf{A_3})$ for the subset effects $(5,8,10)'$ is equal to $\boldsymbol \lambda =  (5,8,10,40,50,80,400)'$.
The empirical distributions of the statistics, shown in Table \ref{AS2sim}, 
were obtained based on 1000 simulations for selected choices of the parameter values. The MLEs under the models  were computed using the R package {\it gIPFrm}. For comparison, the charts also contain the chi-squared density with the appropriate number of degrees of freedom: 1 degree of freedom for the model $RM(\mathbf{A_2})$ and 4 degrees of freedom for $RM(\mathbf{A_3})$. The simulation results indicate that the Pearson and the Bregman statistics have empirical distributions close to the asymptotic even for small or moderate expectations of the sample sizes.

\begin{table}
\centering
\caption{The empirical distribution of the test statistics for Aitchison-Silvey Independence of two features, $RM(\mathbf{A_2})$ and of three features, $RM(\mathbf{A_3})$. }
\begin{tabular}{m{10mm}m{65mm}m{65mm}}
& &   \\
Effects &  \multicolumn{1}{c}{Bregman Statistic}  &  \multicolumn{1}{c}{Pearson Statistic}  \\ [2ex]
$\left(\begin{array}{c}5 \\10\end{array}\right)$&  \multicolumn{1}{c}{\begin{minipage}{.3\textwidth}\includegraphics[height=4cm,width=45mm]{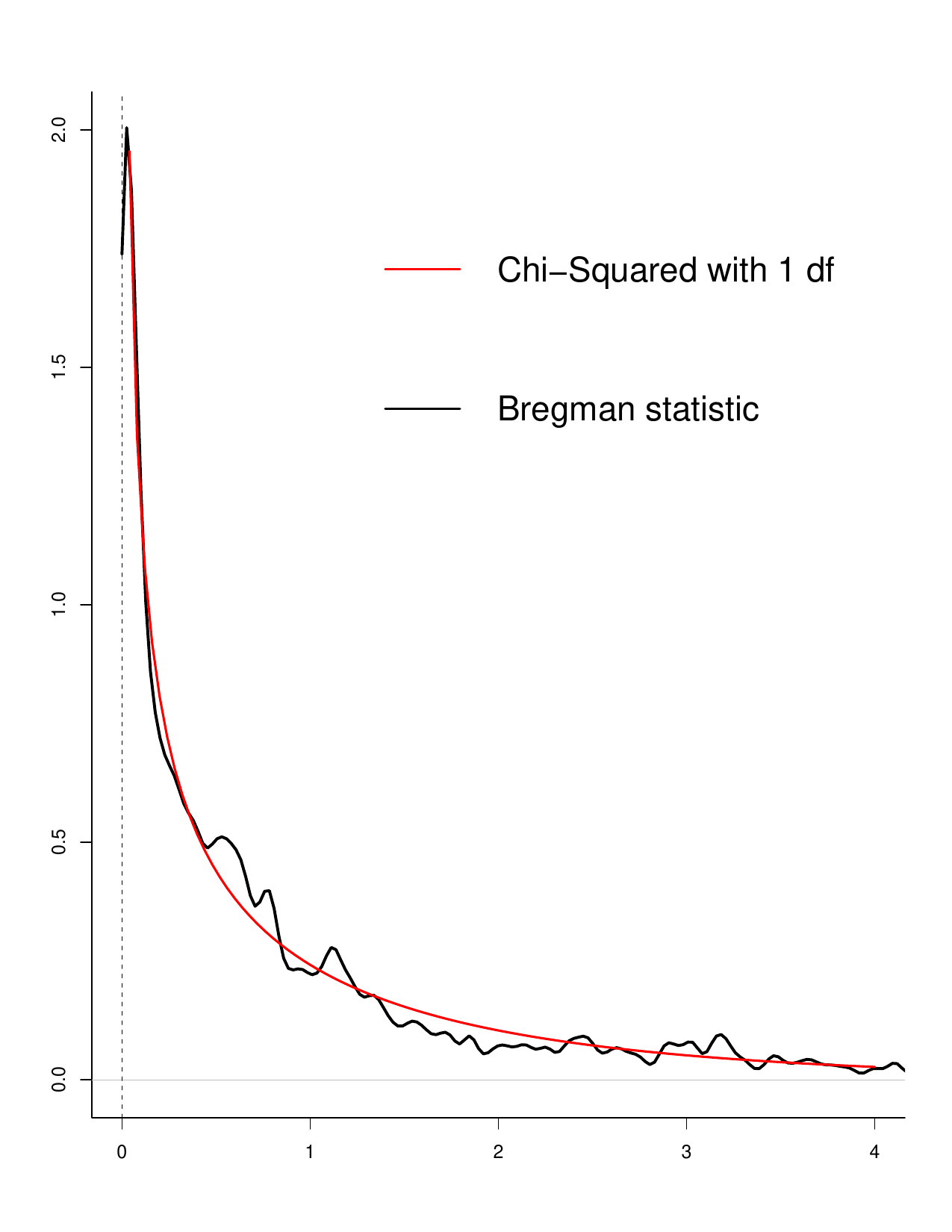}\end{minipage}}&    \multicolumn{1}{c}{\begin{minipage}{.3\textwidth}\includegraphics[height=4cm,width=45mm]{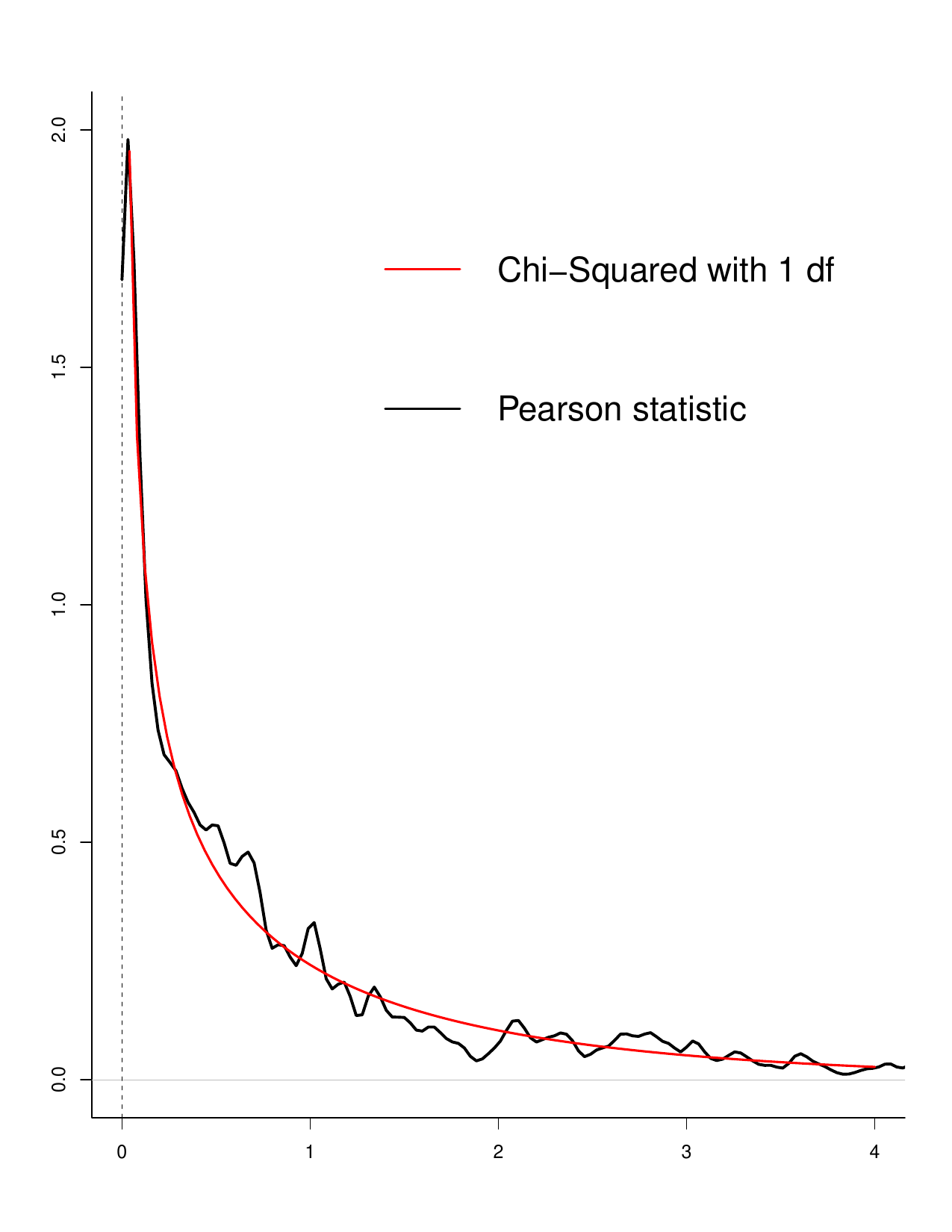}\end{minipage}}
 \\ [100pt]
 $\left(\begin{array}{c}20 \\20\end{array}\right)$&  \multicolumn{1}{c}{\begin{minipage}{.3\textwidth}\includegraphics[height=4cm,width=45mm]{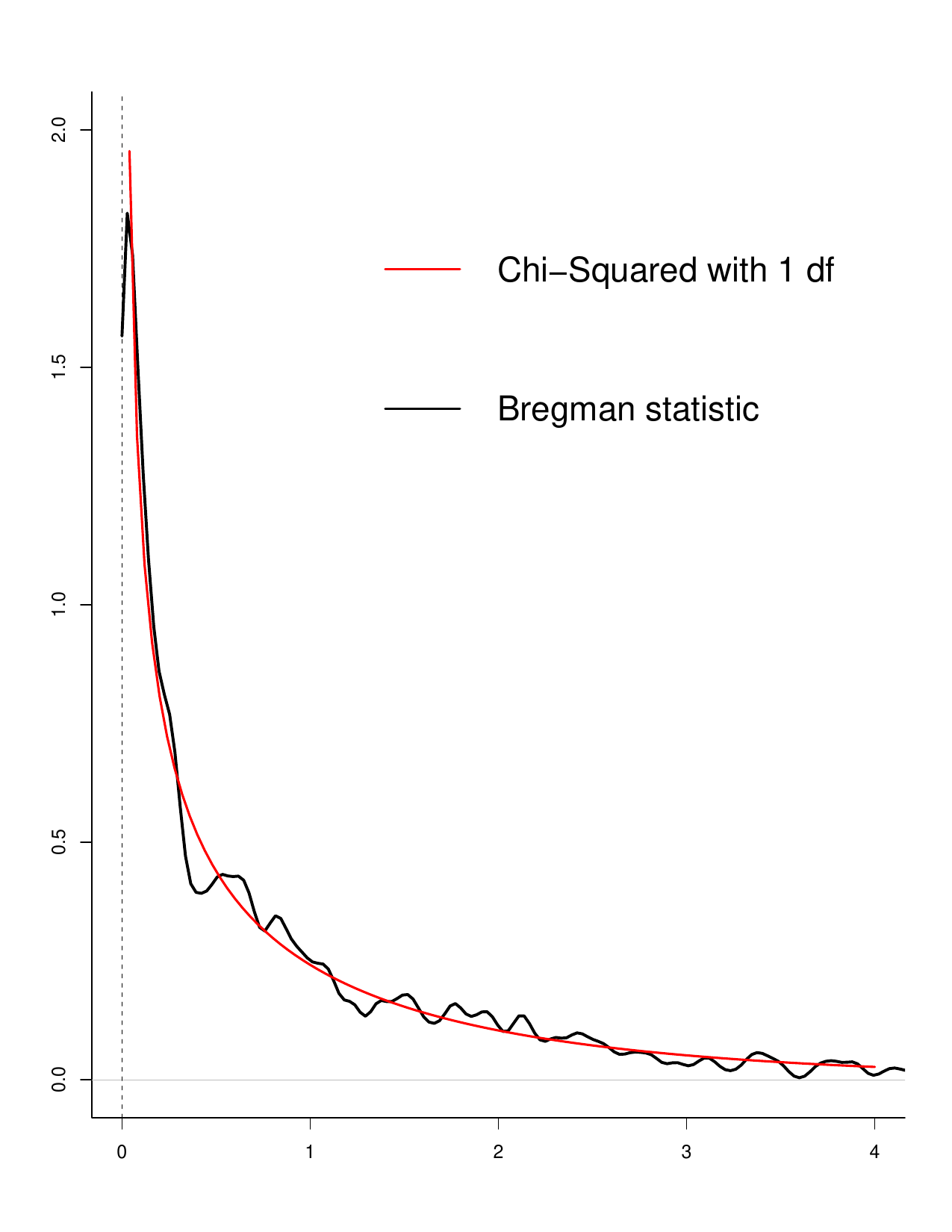}\end{minipage}}&  \multicolumn{1}{c}{\begin{minipage}{.3\textwidth}\includegraphics[height=4cm,width=45mm]{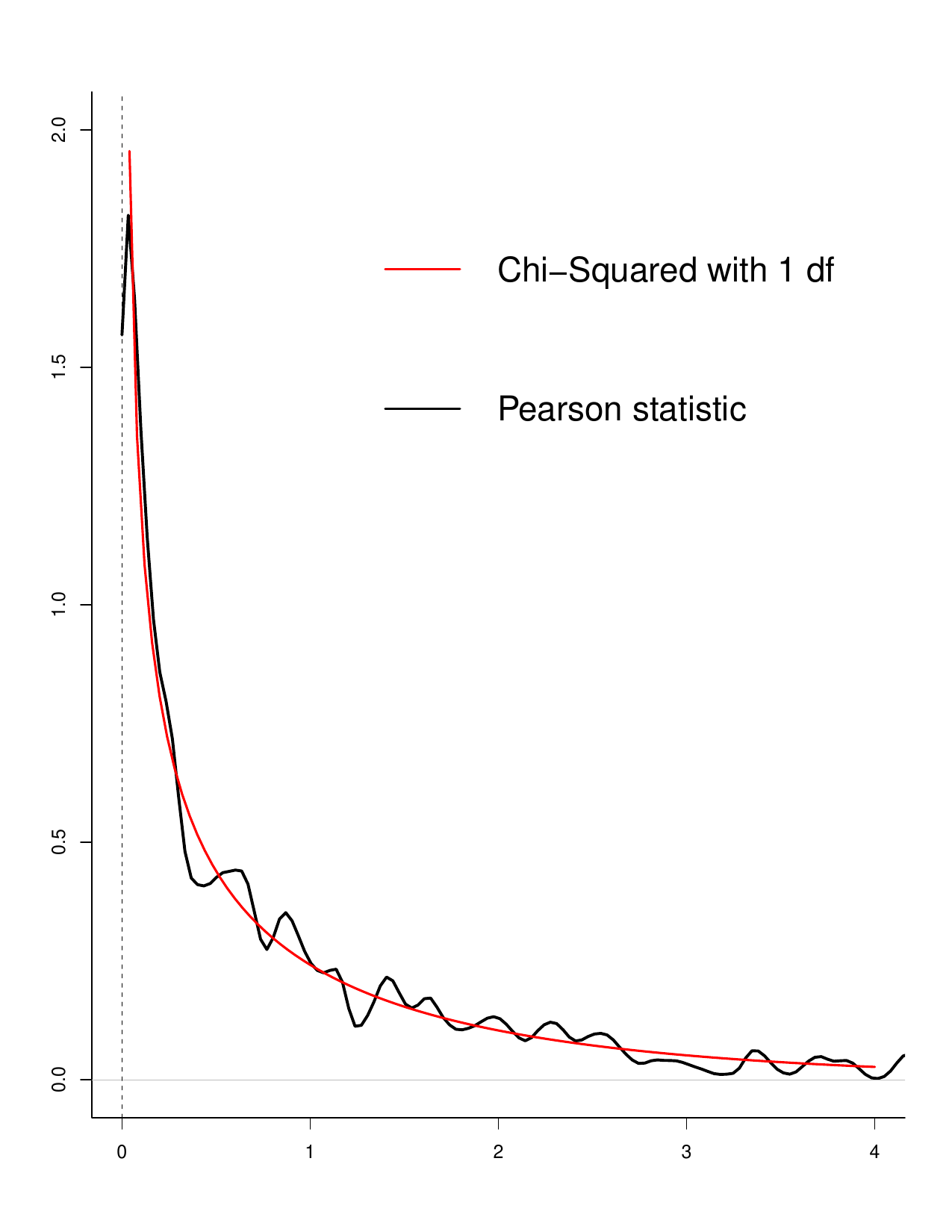}\end{minipage}}
 \\[100pt]
$\left(\begin{array}{c}5 \\8 \\10\end{array}\right)$ &  \multicolumn{1}{c}{\begin{minipage}{.3\textwidth}\includegraphics[height=45mm,width=45mm]{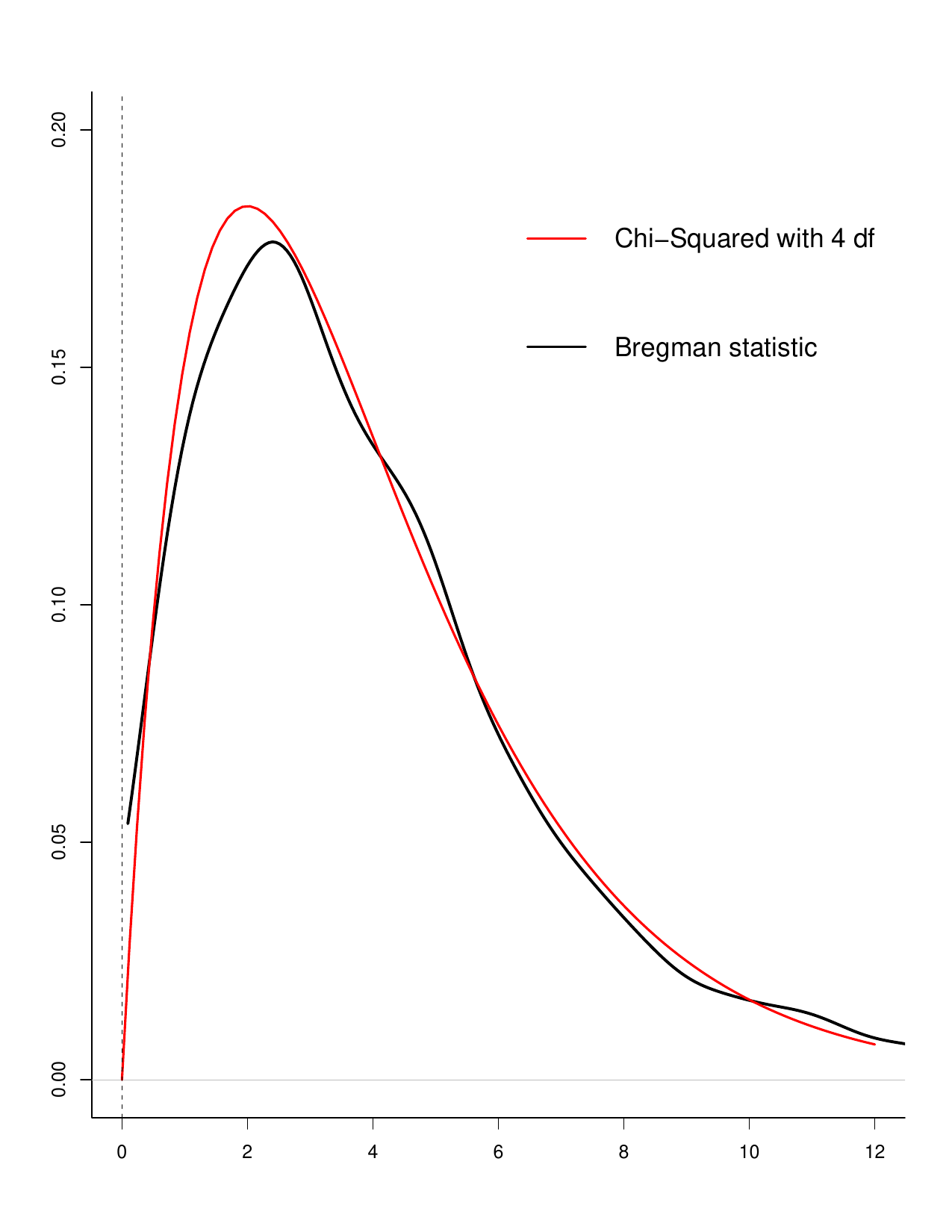}\end{minipage}}&    \multicolumn{1}{c}{\begin{minipage}{.3\textwidth}\includegraphics[height=45mm,width=45mm]{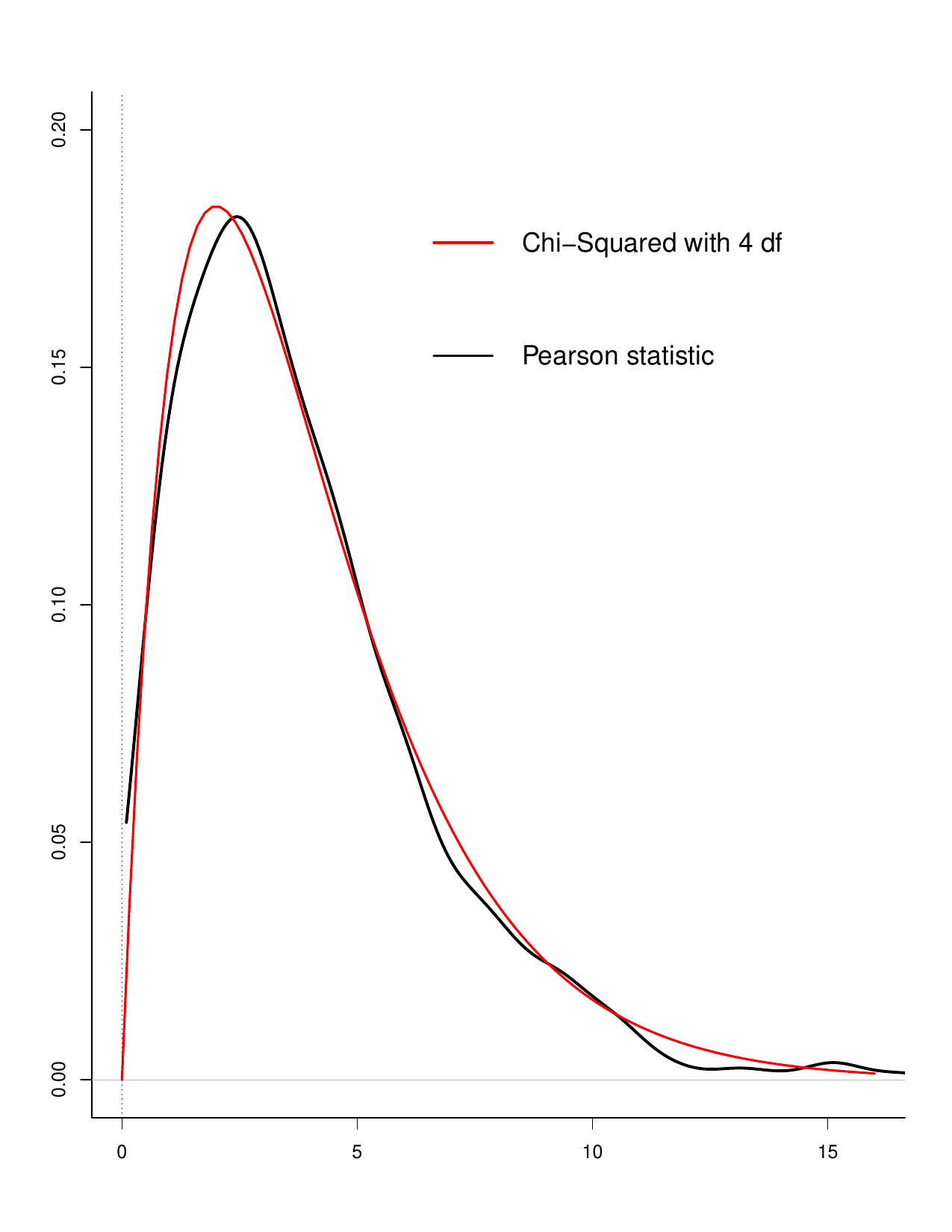}\end{minipage}}
  \\ [100pt]
  $\left(\begin{array}{c}10 \\10 \\20\end{array}\right)$ &  \multicolumn{1}{c}{\begin{minipage}{.3\textwidth}\includegraphics[height=45mm,width=45mm]{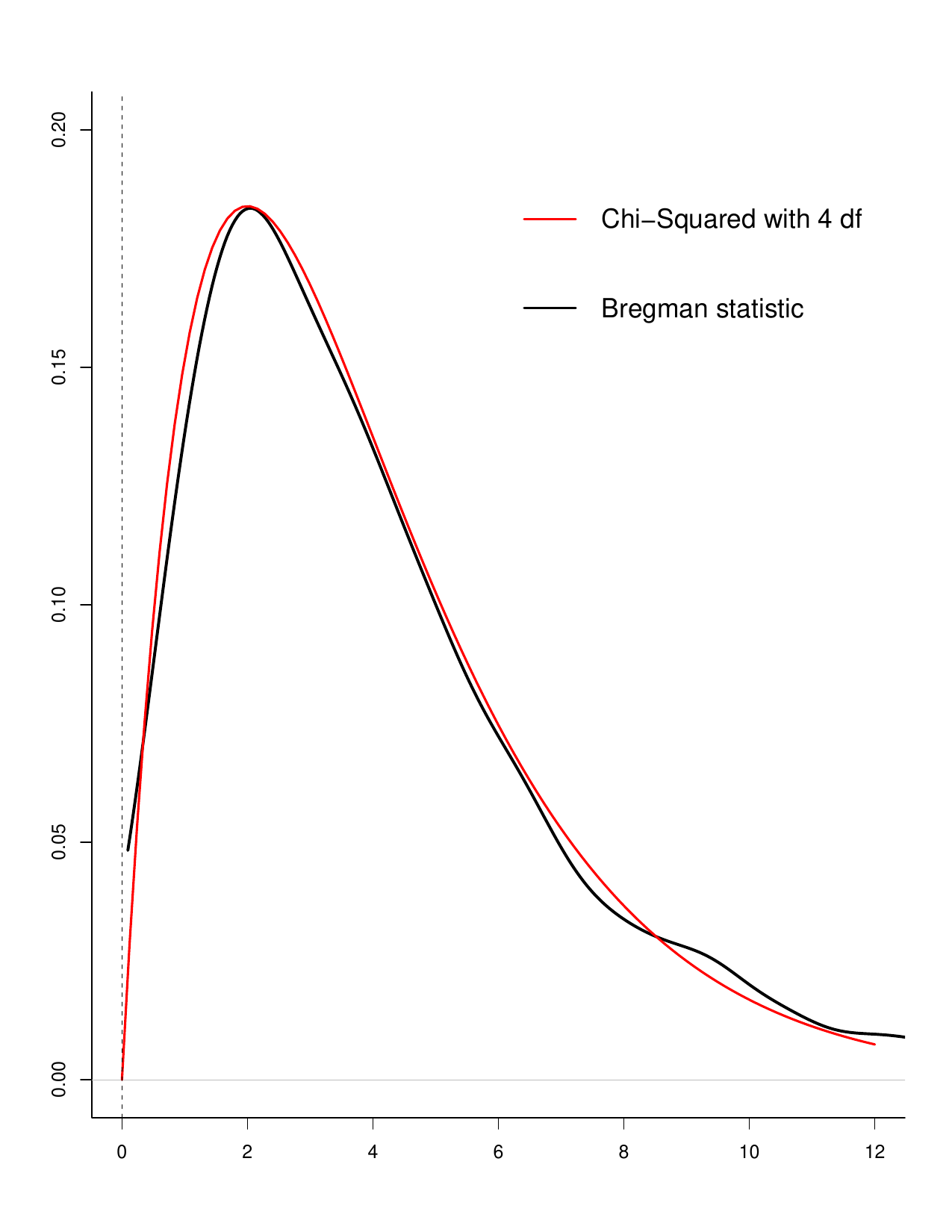}\end{minipage}}&  \multicolumn{1}{c}{\begin{minipage}{.3\textwidth}\includegraphics[height=45mm,width=45mm]{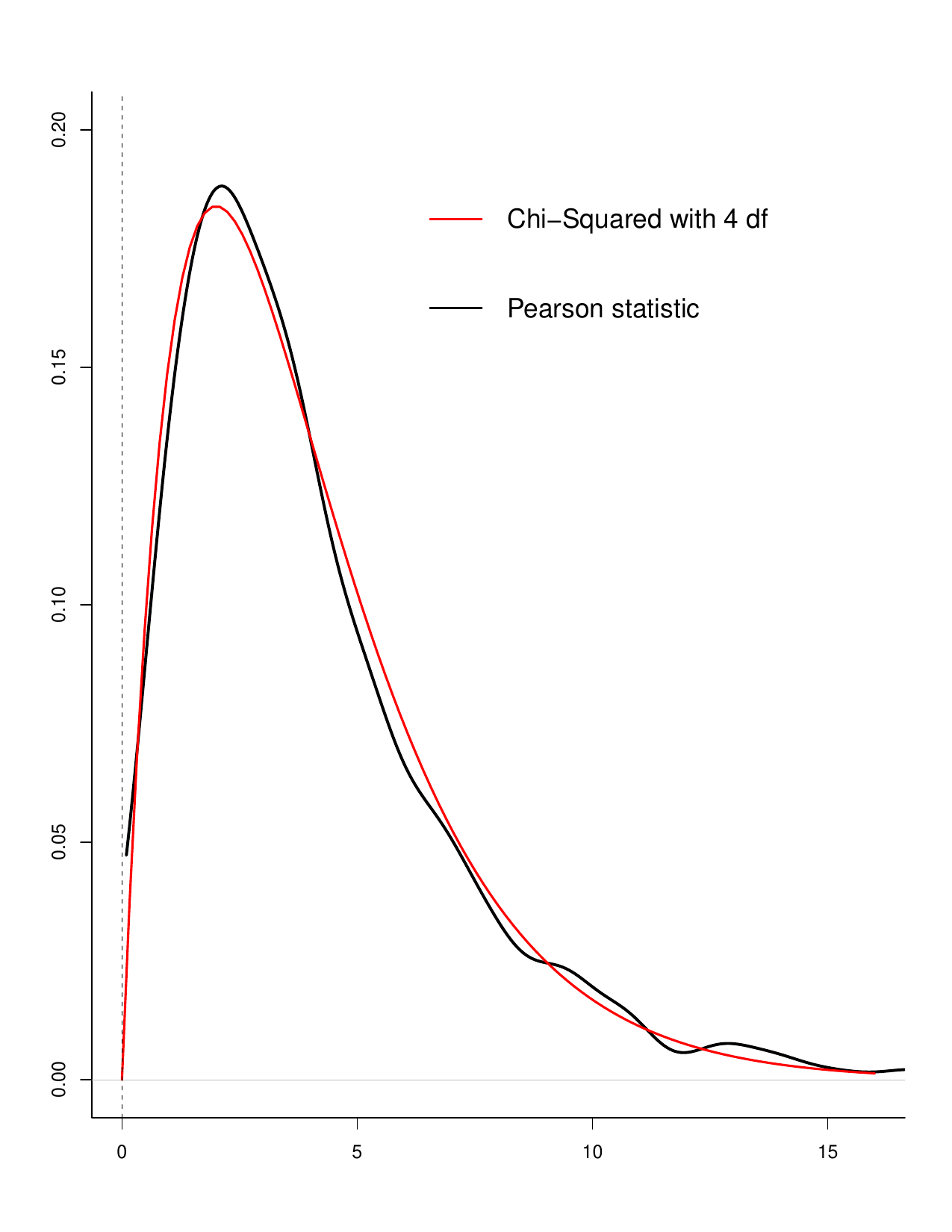}\end{minipage}}
\end{tabular}
\label{AS2sim}
\end{table}

\section*{Acknowledgments}

The second author was supported in part by Grant K-106154 from the Hungarian National Scientific Research Fund (OTKA). The authors wish to thank Wicher Bergsma for insightful comments.

\newpage
\appendix


\renewcommand{\thesection}{\Alph{section}}
\counterwithin{theorem}{section}
\renewcommand{\thetheorem}{\Alph{section}.\arabic{theorem}}

\section*{Appendix}


\vspace{4mm}

\noindent \textbf{Proof of Lemma \ref{newlemma}}

\vspace{3mm}

Let $\boldsymbol y$ be a realization of $\boldsymbol Y$. As shown in \cite{KRipf1}, the MLE $\hat{\boldsymbol \lambda}$ is the Bregman projection of $\boldsymbol y$ on the convex set defined as $
\{ \boldsymbol \lambda > \boldsymbol 0: \,\, \mathbf{A} \boldsymbol \lambda =  \mathbf{A} \boldsymbol y \}$. As follows from the inequality for the Bregman divergence given in Lemma 1 in \cite{Bregman}, 
$$
\mathcal{B}(\hat{\boldsymbol \lambda}, \boldsymbol y) \leq \mathcal{B}(\true, \boldsymbol y) -  \mathcal{B}({\true}, \hat{\boldsymbol \lambda}).$$
Notice that $\mathcal{B}(\hat{\boldsymbol \lambda}, \boldsymbol y) \geq 0$, and thus, the function $g(\true, \hat{\boldsymbol \lambda}, \boldsymbol y) = \mathcal{B}({\true}, \boldsymbol y) - \mathcal{B}({\true}, \hat{\boldsymbol \lambda})$ is non-negative.
Expand $g(\true, \boldsymbol y, \hat{\boldsymbol \lambda})$ in a Taylor series, assuming that $\|\true\| \to \infty$:
\begin{align*}
g(\true, \hat{\boldsymbol \lambda}, \boldsymbol y)
& = \sum_{i=1}^I \left\{\ti \log (\ti/y_i) - ( \ti - y_i) - \ti \log (\ti/\hat{\lambda}_i) + (\ti - 
\hat{\lambda}_i)\right\} \\
& = \sum_{i=1}^I \ti \left\{\log (\ti/y_i) - ( 1 - y_i/\ti) - \log (\ti/\hat{\lambda}_i) + (1 - \hat{\lambda}_i/\ti )\right\} \\
& = \sum_{i=1}^I \ti \left\{-\log (y_i/\ti) + ( y_i/\ti -1) + \log (\hat{\lambda}_i/\ti) - (\hat{\lambda}_i/\ti -1)\right\}\\
& = \sum_{i=1}^I \ti \left\{-\left((y_i/\ti - 1) - \frac{1}{2}(y_i/\ti - 1)^2 + o((y_i/\ti - 1)^2)\right) + ( y_i/\ti -1) \right. \\
&+ \left. \left((\hat{\lambda}_i/\ti - 1) - \frac{1}{2}(\hat{\lambda}_i/\ti - 1)^2 + o((\hat{\lambda}_i/\ti - 1)^2)\right) - 
(\hat{\lambda}_i/\ti -1 )\right\} \\
& = \sum_{i=1}^I \ti \left\{-\left(- \frac{1}{2}(y_i/\ti - 1)^2 + o((y_i/\ti - 1)^2)\right) \right. \\
&+ \left. \left( - \frac{1}{2}(\hat{\lambda}_i/\ti - 1)^2 + o((\hat{\lambda}_i/\ti - 1)^2)\right)\right\} \\
& = \sum_{i=1}^I \ti \left\{\frac{1}{2}(y_i/\ti - 1)^2 - o((y_i/\ti - 1)^2) - \frac{1}{2}(\hat{\lambda}_i/\ti - 1)^2 + o((\hat{\lambda}_i/\ti - 1)^2)\right\} .\\
\end{align*}
Because $g(\true, \hat{\boldsymbol \lambda}, \boldsymbol y) \geq 0$,

$$\sum_{i=1}^I \ti \left\{(\hat{\lambda}_i/\ti - 1)^2  +  o((\hat{\lambda}_i/\ti - 1)^2)\right\} \leq \sum_{i=1}^I \ti \left\{(y_i/\ti - 1)^2 + o((y_i/\ti - 1)^2)\right\},$$
or, equivalently,
$$\sum_{i=1}^I \ti{}^{-1} \left\{(\hat{\lambda}_i - \ti)^2  +  o((\hat{\lambda}_i - \ti)^2)\right\} \leq \sum_{i=1}^I \ti{}^{-1} \left\{(y_i - \ti )^2 + o((y_i - \ti )^2)\right\}.$$
Eq. (\ref{Op}) implies that $ \ti{}^{-1} (Y_i - \ti)^2 = O_p(1)$ for every $i =1, \dots, I$.  Thus, 
$$\sum_{i=1}^I \ti{}^{-1} \left\{(Y_i - \ti )^2 + o((Y_i - \ti)^2)\right\} = O_p(1),$$
and, because
\begin{align*}
\sum_{i=1}^I \ti{}^{-1/2} (\hat{\lambda}_i - \ti)  &\leq \left(\sum_{i=1}^I \ti{}^{-1} (\hat{\lambda}_i - \ti)^2\right)^{1/2},
\end{align*}
it also holds that $\sum_{i=1}^I \ti{}^{-1/2} (\hat{\lambda}_i - \ti) = O_p(1)$.

\hfill \qed
\vspace{4mm}

\noindent \textbf{Proof of Lemma \ref{smallLemma}}

\vspace{3mm}

Let $\boldsymbol \varsigma  > \boldsymbol 0$ be such that  $\|\boldsymbol \lambda - \boldsymbol  \varsigma\| < \|\boldsymbol \lambda - \true\|$, and therefore,
$\boldsymbol  \varsigma = t \cdot\boldsymbol \lambda + (1-t)\cdot\true$  for some $t\in (0,1)$. Then,
\begin{align*}
\Delta[\true{}^{-1}]\boldsymbol \varsigma &= t\cdot \Delta[\true{}^{-1}]\boldsymbol \lambda + (1-t)\cdot \Delta[\true{}^{-1}]\true  =t\cdot \Delta[\true{}^{-1}](\boldsymbol \lambda - \true) + \boldsymbol 1, \end{align*}
and
\begin{align*}
\Delta[\true{}^{-1}]\boldsymbol \varsigma - \boldsymbol 1 &=t\cdot \Delta[\true{}^{-1}](\boldsymbol \lambda - \true) = \Delta[\true{}^{-1/2}]\cdot O_p(\boldsymbol 1) = O_p(\|\true\|^{-1/2}), 
\end{align*}
as $\|\true\| \to \infty$. Component wise, 
$$\frac{\lambda_{0i}}{\varsigma_{i}} - 1 = O_p(\lambda_{0i}^{-1/2}), \qquad \mbox{ for } i \in \mathcal{I}.$$
Using the latter, as $\lambda_{0i} \to \infty$,
\begin{align*}
\lambda_{0i}/ \varsigma_i &= \frac{1}{(\varsigma_i/\lambda_{0i} - 1)+1} = 1 - ( \varsigma_i/\lambda_{0i} - 1) + o_p(| \varsigma_i/\lambda_{0i} - 1|);\\
&{}\\
(\lambda_{0i}/ \varsigma_i)^m &= (1 - ( \varsigma_i/\lambda_{0i} - 1) + o_p(| \varsigma_i/\lambda_{0i} - 1|))^m\\
&{}\\
&= 1 - m(\varsigma_i/\lambda_{0i} - 1) + o_p(|\varsigma_i/\lambda_{0i} - 1|) = 1 + O_p(\lambda_{0i}^{-1/2}).
\end{align*}
Therefore, 
\begin{align*}
\Delta[\true{}^m] \cdot \Delta[\boldsymbol \varsigma^{-m}] &= \mathbf{I}+ O_p(\|\true\|^{-1/2}), \qquad \mbox{as } \|\true\| \to \infty.
\end{align*}
\hfill\qed

\vspace{4mm}

\noindent \textbf{Proof of Corollary \ref{CovarianceOE}}

\vspace{3mm}
Notice that 
\begin{align} \label{Hinverse}
\mathbf{H}'
\Delta[\ptrue]\mathbf{H} &=  \left(\boldsymbol 1, \Delta[\ptrue{}^{-1}]\mathbf{D}'\right)'\Delta[\ptrue]\left(\boldsymbol 1, \Delta[\ptrue{}^{-1}]\mathbf{D}'\right) 
=  \left(\begin{array}{c} \boldsymbol 1' \\ \mathbf{D} \Delta[\ptrue{}^{-1}] \end{array} \right)\Delta[\ptrue]\left(\boldsymbol 1, \Delta[\ptrue{}^{-1}]\mathbf{D}'\right) \nonumber\\[3pt]
&=  \left(\begin{array}{c} \boldsymbol 1' \\ \mathbf{D} \Delta[\ptrue{}^{-1}] \end{array} \right)\left(\ptrue,\mathbf{D}'\right) 
=  \left(\begin{array}{cc} 1 & \boldsymbol 1'\mathbf{D}' \\ \mathbf{D} \boldsymbol 1 &\mathbf{D} \Delta[\ptrue{}^{-1}] \mathbf{D}'\end{array} \right).
\end{align}
\vspace{2mm}
If the overall effect is present, then $\mathbf{D} \boldsymbol 1 = \boldsymbol 0$, and (\ref{Hinverse}) can be simplified. 
In this case,
$$(\mathbf{H}'
\Delta[\ptrue]\mathbf{H})^{-1} = \left(\begin{array}{cc} 1 & \boldsymbol 0' \\ \boldsymbol 0 & (\mathbf{D} \Delta[\ptrue{}^{-1}] \mathbf{D}')^{-1}\end{array} \right), \quad \mbox{and}$$
\begin{align*}
\mathbf{M} &= \mathbf{I} - \Delta[\ptrue] \cdot \mathbf{H}(\mathbf{H}'
\Delta[\ptrue]\mathbf{H})^{-1}\mathbf{H}'\\
&= \mathbf{I} - \left(\ptrue,\mathbf{D}'\right)\cdot \left(\begin{array}{cc} 1 & \boldsymbol 0' \\ \boldsymbol 0 & (\mathbf{D} \Delta[\ptrue{}^{-1}] \mathbf{D}')^{-1}\end{array} \right) \cdot \left(\begin{array}{c} \boldsymbol 1' \\ \mathbf{D} \Delta[\ptrue{}^{-1}] \end{array} \right) \\
&{}\\
&= \mathbf{I} - \left(\ptrue,\mathbf{D}'\right)\cdot \left(\begin{array}{c} \boldsymbol 1' \\ (\mathbf{D} \Delta[\ptrue{}^{-1}] \mathbf{D}')^{-1} \mathbf{D} \Delta[\ptrue{}^{-1}] \end{array} \right) 
= \mathbf{I} - \ptrue \cdot \boldsymbol 1' - \mathbf{D}'(\mathbf{D} \Delta[\ptrue{}^{-1}] \mathbf{D}')^{-1}  \mathbf{D} \Delta[\ptrue{}^{-1}] \\[10pt]
&=  \Delta[\ptrue] \Delta[\ptrue{}^{-1}]  - \ptrue \cdot \ptrue' \Delta[\ptrue{}^{-1}] - \mathbf{D}'(\mathbf{D} \Delta[\ptrue{}^{-1}] \mathbf{D}')^{-1}  \mathbf{D} \Delta[\ptrue{}^{-1}] \\[10pt]
&= \left(\boldsymbol \Sigma - \mathbf{D}'(\mathbf{D} \Delta[\ptrue{}^{-1}] \mathbf{D}')^{-1}  \mathbf{D}\right)\cdot \Delta[\ptrue{}^{-1}].
\end{align*}
The covariance matrix of the asymptotic distribution can be written as
\begin{align*}
\mathbf{M} \boldsymbol \Sigma \mathbf{M}'= &\left(\boldsymbol \Sigma - \mathbf{D}'(\mathbf{D} \Delta[\ptrue{}^{-1}] \mathbf{D}')^{-1}  \mathbf{D}\right)\cdot (\Delta[\ptrue{}^{-1}] - \boldsymbol 1 \cdot \boldsymbol 1') \cdot\left(\boldsymbol \Sigma - \mathbf{D}'(\mathbf{D} \Delta[\ptrue{}^{-1}] \mathbf{D}')^{-1}  \mathbf{D}\right),
\end{align*}
which can be simplified further, using that:
\begin{align*}
&\mathbf{D}'(\mathbf{D} \Delta[\ptrue{}^{-1}] \mathbf{D}')^{-1}  \mathbf{D}\boldsymbol 1 = \boldsymbol 0, \hspace{10mm}
\boldsymbol \Sigma\cdot  \Delta[\ptrue{}^{-1}]  \cdot \boldsymbol \Sigma = \boldsymbol \Sigma, \hspace{10mm}
\boldsymbol \Sigma \boldsymbol 1 = \boldsymbol 0.
\end{align*}
Finally,
$$\mathbf{M} \boldsymbol \Sigma \mathbf{M}'= \boldsymbol \Sigma - \mathbf{D}'(\mathbf{D} \Delta[\ptrue{}^{-1}] \mathbf{D}')^{-1}  \mathbf{D}. $$
\hfill \qed

\vspace{5mm}

\noindent \textbf{Proof of Lemma \ref{NonStochBregmanPearsona} }

\vspace{4mm}


Let $\boldsymbol t_r =(t_{r1}, \dots, t_{rI})$ and $\boldsymbol \lambda_{r} = (\lambda_{r1}, \dots, \lambda_{rI})$. It is sufficient to show that for each $i = 1, \dots, I$,
$$B(t_{ri}, \lambda_{ri}) - X^2(t_{ri}, \lambda_{ri}) = o(1), \qquad \mbox{as } \lambda_{ri} \to \infty.$$
For simplicity of exposition, $t_r \equiv t_{ri}$ and $\lambda_r \equiv \lambda_{ri}$.

Expand $B(t_r,\lambda_r)$ in a Taylor series in terms of ${t_r}/{\lambda_r} - 1$:
\begin{align*}
B(t_r,\lambda_r) &= 2\cdot(t_r\log\left(\frac{t_r}{\lambda_r}\right) - (t_r - \lambda_r)) \\
&=2\cdot\lambda_r \left\{\frac{t_r}{\lambda_r} \log \left(1 + \left(\frac{t_r}{\lambda_r}-1\right)\right) - \left(\frac{t_r}{\lambda_r}-1\right)\right\}\\
&= 2\cdot \lambda_r \left\{\frac{t_r}{\lambda_r}\left(\frac{t_r}{\lambda_r}-1\right) - \frac{t_r}{2\lambda_r}\left(\frac{t_r}{\lambda_r}-1\right)^2 + o(|\frac{t_r}{\lambda_r}-1|^2) - \left(\frac{t_r}{\lambda_r}-1\right)\right\}\\
&= 2\cdot \lambda_r \left\{\left(\frac{t_r}{\lambda_r}-1\right)^2 - \frac{t_r}{2\lambda_r}\left(\frac{t_r}{\lambda_r}-1\right)^2 + o(|\frac{t_r}{\lambda_r}-1|^2) \right\} \\
&= 2\cdot \lambda_r \left\{\left(\frac{t_r}{\lambda_r}-1\right)^2 -\frac{1}{2}\left(\frac{t_r}{\lambda_r}-1\right)^2 - \frac{1}{2}\left (\frac{t_r}{\lambda_r}-1\right)\left(\frac{t_r}{\lambda_r}-1\right)^2 + o(|\frac{t_r}{\lambda_r}-1|^2) \right\} 
\end{align*}

\begin{align*}
&= \lambda_r \left(\frac{t_r}{\lambda_r}-1\right)^2 + \lambda_r \cdot o(|\frac{t_r}{\lambda_r}-1|^2) = \frac{(t_r - \lambda_r)^2}{\lambda_r} + o(\lambda_r|\frac{t_r}{\lambda_r}-1|^2) \\
&= \frac{(t_r - \lambda_r)^2}{\lambda_r} + o(\lambda_rO(\lambda_r^{-1})) = \frac{(t_r - \lambda_r)^2}{\lambda_r} + o(1). 
\end{align*}
\hfill\qed
\vspace{5mm}

\noindent \textbf{Proof of Lemma \ref{NonStochBregmanPearson2a}}
\vspace{4mm}

Let $\boldsymbol t_n =(t_{n1}, \dots, t_{nI})$, $\boldsymbol \kappa_n =(\kappa_{n1}, \dots, \kappa_{nI})$, and $\boldsymbol \lambda_{n} = (\lambda_{n1}, \dots, \lambda_{nI})$. It is sufficient to show that for each $i = 1, \dots, I$,
$$B(t_{ni}, \kappa_{ni}) - X^2(t_{ni}, \kappa_{ni}) = o(1), \qquad \mbox{as } \lambda_{ni} \to \infty.$$
For simplicity of presentation, $t \equiv t_{ni}$, $\kappa \equiv \kappa_{ni}$, and $\lambda \equiv \lambda_{ni}$
\begin{align*}
&{} B(t, \kappa) - X^2(t, \kappa) = 2[t \log \frac{t}{\kappa} - (t - \kappa)] - \frac{(t - \kappa)^2}{\kappa}\\
&{} \\
&= 2[t \log \frac{t\cdot \lambda}{\kappa \cdot\lambda} - (t - \lambda - (\kappa- \lambda))] - \frac{(t - \lambda - (\kappa - \lambda))^2}{\kappa} \\
&= 2\left\{\lambda\frac{t}{\lambda} \log \frac{t}{\lambda} - \lambda(\frac{t}{\lambda} - 1) \right\} - 2\left\{t\log \frac{\kappa}{\lambda} - \lambda(\frac{\kappa}{\lambda} - 1) \right\} 
- \frac{\lambda^2}{\kappa}\left\{\frac{t}{\lambda} - 1 - (\frac{\kappa}{\lambda} - 1) \right\}^2 \\
&{} \\
&= \lambda (\frac{t}{\lambda} -1)^2 + o(1) - 2\left\{t\log( 1 + (\frac{\kappa}{\lambda} - 1)) - \lambda(\frac{\kappa}{\lambda} - 1)\right\} \\
&- \frac{\lambda^2}{\kappa}\left\{(\frac{t}{\lambda} - 1)^2 - 2 (\frac{t}{\lambda} - 1)(\frac{\kappa}{\lambda} - 1) + (\frac{\kappa}{\lambda} - 1)^2 \right\}
&{} \\
&= \lambda (\frac{t}{\lambda} -1)^2 + o(1) - 2\left\{t\log( 1 + (\frac{\kappa}{\lambda} - 1)) - \lambda(\frac{\kappa}{\lambda} - 1)\right\} \\
&- \frac{\lambda^2}{\kappa}\left\{(\frac{t}{\lambda} - 1)^2 - 2 (\frac{t}{\lambda} - 1)(\frac{\kappa}{\lambda} - 1) + (\frac{\kappa}{\lambda} - 1)^2 \right\}
&{} \\
&= \lambda (\frac{t}{\lambda} -1)^2 + o(1) - 2\left\{t\left[ (\frac{\kappa}{\lambda} - 1) - \frac{1}{2}(\frac{\kappa}{\lambda} - 1)^2 + o((\frac{\kappa}{\lambda} - 1)^2)\right] - \lambda(\frac{\kappa}{\lambda} - 1)\right\} \\
&- \frac{\lambda^2}{\kappa}\left\{(\frac{t}{\lambda} - 1)^2 - 2 (\frac{t}{\lambda} - 1)(\frac{\kappa}{\lambda} - 1) + (\frac{\kappa}{\lambda} - 1)^2 \right\}\\
&{} \\
&= \lambda (\frac{t}{\lambda} -1)^2 + o(1) - \left\{\left[ 2t(\frac{\kappa}{\lambda} - 1) - t(\frac{\kappa}{\lambda} - 1)^2 + 2t\cdot o((\frac{\kappa}{\lambda} - 1)^2)\right] - 2\lambda(\frac{\kappa}{\lambda} - 1)\right\} \\
&- \frac{\lambda^2}{\kappa}\left\{(\frac{t}{\lambda} - 1)^2 - 2 (\frac{t}{\lambda} - 1)(\frac{\kappa}{\lambda} - 1) + (\frac{\kappa}{\lambda} - 1)^2 \right\}
\end{align*}

\begin{align*}
&= \lambda (\frac{t}{\lambda} -1)^2 + o(1) - 2t(\frac{\kappa}{\lambda} - 1) + t(\frac{\kappa}{\lambda} - 1)^2 - 2t\cdot o((\frac{\kappa}{\lambda} - 1)^2)+ 2\lambda(\frac{\kappa}{\lambda} - 1) \\
&- \frac{\lambda^2}{\kappa}\left\{(\frac{t}{\lambda} - 1)^2 - 2 (\frac{t}{\lambda} - 1)(\frac{\kappa}{\lambda} - 1) + (\frac{\kappa}{\lambda} - 1)^2 \right\}\\
&{} \\
&= (\frac{t}{\lambda} -1)^2 \left\{ \lambda - \frac{\lambda^2}{\kappa}\right\} + (\frac{\kappa}{\lambda} - 1) \left\{ -2t + 2\lambda \right\} + (\frac{\kappa}{\lambda} - 1)^2\left\{t - \frac{\lambda^2}{\kappa}\right\} \\
& + 2\frac{\lambda^2}{\kappa} (\frac{t}{\lambda} - 1)(\frac{\kappa}{\lambda} - 1)+ o(1) - 2t\cdot o((\frac{\kappa}{\lambda} - 1)^2) \\
&{} \\
&= \lambda (\frac{t}{\lambda} -1)^2 \left\{1- \frac{\lambda}{\kappa}\right\} + 2\lambda (\frac{\kappa}{\lambda} - 1) \left\{ -\frac{t}{\lambda} + 1 \right\} + \lambda(\frac{\kappa}{\lambda} - 1)^2\left\{\frac{t}{\lambda} - \frac{\lambda}{\kappa}\right\} \\
&{} \\
& + 2\frac{\lambda^2}{\kappa} (\frac{t}{\lambda} - 1)(\frac{\kappa}{\lambda} - 1) + o(1) - 2t\cdot o((\frac{\kappa}{\lambda} - 1)^2)\\
&{} \\
&= \lambda (\frac{t}{\lambda} -1)^2 \left\{1- \frac{\lambda}{\kappa}\right\} + (-2\lambda +2\frac{\lambda^2}{\kappa}) (\frac{\kappa}{\lambda} - 1) \left\{ \frac{t}{\lambda} - 1 \right\} + \lambda(\frac{\kappa}{\lambda} - 1)^2\left\{\frac{t}{\lambda} - \frac{\lambda}{\kappa}\right\} \\
& + o(1) - 2t\cdot o((\frac{\kappa}{\lambda} - 1)^2) \\
&{} \\
&= -\lambda (\frac{t}{\lambda} -1)^2 \left\{-1+ \frac{\lambda}{\kappa}\right\} +2\lambda (-1+ \frac{\lambda}{\kappa}) (\frac{\kappa}{\lambda} - 1) \left\{ \frac{t}{\lambda} - 1 \right\} + \lambda(\frac{\kappa}{\lambda} - 1)^2\left\{\frac{t}{\lambda} - \frac{\lambda}{\kappa}\right\} \\
 &+ o(1) - 2t\cdot o((\frac{\kappa}{\lambda} - 1)^2)\\
 &{} \\
 &= -\lambda (\frac{t}{\lambda} -1)^2 \left\{\frac{\lambda}{\kappa}-1\right\} +2\lambda (\frac{\lambda}{\kappa}-1) (\frac{\kappa}{\lambda} - 1) \left\{ \frac{t}{\lambda} - 1 \right\} \\
&+ \lambda(\frac{\kappa}{\lambda} - 1)^2\left\{\frac{t}{\lambda} - 1 -(\frac{\lambda}{\kappa}-1)\right\} 
+ o(1) - 2t\cdot o((\frac{\kappa}{\lambda} - 1)^2)\\
&{}\\
&= -\lambda (\frac{t}{\lambda} -1)^2 (\frac{\lambda}{\kappa}-1) +2\lambda (\frac{\lambda}{\kappa}-1) (\frac{\kappa}{\lambda} - 1) (\frac{t}{\lambda} - 1 ) \\
&+ \lambda(\frac{\kappa}{\lambda} - 1)^2(\frac{t}{\lambda} - 1) - \lambda(\frac{\kappa}{\lambda} - 1)^2(\frac{\lambda}{\kappa}-1)+ o(1) - 2t\cdot o((\frac{\kappa}{\lambda} - 1)^2)\\
&{}\\
&= -\lambda (\frac{\lambda}{\kappa}-1)\left\{ (\frac{t}{\lambda} -1)^2 -2 (\frac{\kappa}{\lambda} - 1) (\frac{t}{\lambda} - 1) + (\frac{\kappa}{\lambda} - 1)^2\right\} \\
&+ \lambda(\frac{\kappa}{\lambda} - 1)^2(\frac{t}{\lambda} -1) 
 + o(1) - 2t\cdot o((\frac{\kappa}{\lambda} - 1)^2)
\end{align*}

\begin{align*}
 &= \lambda (1-\frac{\lambda}{\kappa})\left\{ (\frac{t}{\lambda} -1) - (\frac{\kappa}{\lambda} - 1)\right\}^2 - \lambda(\frac{\kappa}{\lambda} - 1)^2 + o(1) + \lambda\frac{t}{\lambda}(\frac{\kappa}{\lambda} - 1)^2 - 2\lambda\frac{t}{\lambda}\cdot o((\frac{\kappa}{\lambda} - 1)^2)\\
&{} \\
&= \lambda (1-\frac{\lambda}{\kappa})\left\{ (\frac{t}{\lambda} -1) - (\frac{\kappa}{\lambda} - 1)\right\}^2 - \lambda(\frac{\kappa}{\lambda} - 1)^2 \\
& + o(1) + \lambda\cdot(1+O(\lambda^{-1/2})) \cdot(\frac{\kappa}{\lambda} - 1)^2 - 2\lambda\cdot(1+O(\lambda^{-1/2})) \cdot o((\frac{\kappa}{\lambda} - 1)^2)
\end{align*}

\vspace{2mm}

To evaluate the above expression, the following facts will be used.
For the components of the above expression:
$$\kappa/\lambda -1 = O(\lambda^{-1/2}); \,\, t/\lambda -1 = O(\lambda^{-1/2}),$$
$$1-\frac{\lambda}{\kappa} = 1-\frac{1}{\kappa/\lambda} = 1-\frac{1}{1 + ({\kappa}/{\lambda}  -1)} = 1 - [1- ({\kappa}/{\lambda} - 1)] + o(|\kappa/\lambda -1|) =(\kappa/\lambda -1) + o(|\kappa/\lambda -1|), $$
and therefore,
$$1-\frac{\lambda}{\kappa} = O(\lambda^{-1/2}) + o(O(\lambda^{-1/2})) = O(\lambda^{-1/2}).$$
Substituting these in the last expression, one obtains:
\begin{align*}
&\lambda \cdot O(\lambda^{-1/2}) \cdot[O(\lambda^{-1/2})]^2 - \lambda \cdot[O(\lambda^{-1/2})]^2 + o(1) \\
&+ \lambda\cdot(1+O(\lambda^{-1/2})) \cdot[O(\lambda^{-1/2}]^2 - 2\lambda\cdot(1+O(\lambda^{-1/2})) \cdot o([O(\lambda^{-1/2}]^2) \\
&= O(\lambda^{-1/2}) + o(1) + o(1) + o(1)\cdot (1+O(\lambda^{-1/2})) + \lambda \cdot o(\lambda^{-1}) = o(1).
\end{align*}
\hfill \qed

\bibliographystyle{plainnat}

\bibliography{referencesStatSim}

\end{document}